\documentclass[a4paper,10pt,pointlessnumbers,DIV11,abstracton]{scrartcl}

\usepackage{standard_packages}
\usepackage{diagxy}

\numberwithin{equation}{section}

\providecommand{\spec}{\mathrm{spec}}

\providecommand{\PAc}{{\mathsf{p}^A_{c}}}
\providecommand{\PiAc}{{\mathsf{P}^A_{c}}}
\providecommand{\Qe}{{Q_{\eps}}}
\providecommand{\PiAel}{{\Pi^{A}_{\eps,\lambda}}}

\providecommand{\Hd}{H_{\mathrm{D}}}

\providecommand{\PSpace}{T^{\ast} \R^d}

\providecommand{\ordere}[1]{\mathcal{O}(\eps^{#1})}
\providecommand{\orderl}[1]{\mathcal{O}(\lambda^{#1})}
\providecommand{\orderone}{\mathcal{O}(1)}
\providecommand{\eprec}{\epsilon}
\providecommand{\ordereprec}{\mathcal{O}(\eprec +)}

\providecommand{\orderc}[1]{\mathcal{O}(\nicefrac{1}{c^{#1}})}

\providecommand{\Piref}{\Pi_{\mathrm{ref}}}
\providecommand{\piref}{\pi_{\mathrm{ref}}}

\providecommand{\Hfast}{\mathcal{H}_{\mathrm{fast}}}
\providecommand{\Hslow}{\mathcal{H}_{\mathrm{slow}}}

\providecommand{\Fs}{\mathcal{F}_{\sigma}}

\providecommand{\Weyl}{\sharp}
\providecommand{\Weyle}{\Weyl_{\eps}}
\providecommand{\magB}{\sharp^B}

\providecommand{\magBl}{\Weyl^B_{\lambda}}
\providecommand{\magWel}{\Weyl^B_{\eps,\lambda}}
\providecommand{\magBc}{\sharp^B_c}

\providecommand{\GA}{{\Gamma^A}}
\providecommand{\GAe}{{\Gamma^A_{\eps}}}

\providecommand{\OBel}{{\Omega^B_{\eps,\lambda}}}
\providecommand{\GAe}{{\Gamma^A_{\eps}}}
\providecommand{\gBe}{{\gamma^B_{\eps}}}
\providecommand{\OBel}{\Omega^B_{\eps,\lambda}} 

\providecommand{\Ael}{A^{\eps,\lambda}}

\providecommand{\Bte}{\tilde{B}^{\eps}}

\providecommand{\OpAsr}{\mathrm{Op}^A_c} 
\providecommand{\Op}{\mathrm{Op}}
\providecommand{\OpAus}{\Op^A_{\mathrm{u}}}
\providecommand{\OpAel}{\mathrm{Op}^A_{\eps,\lambda}}
\providecommand{\Ope}{\mathrm{Op}_{\eps}}

\providecommand{\WeylSysAel}{{W^A_{\eps,\lambda}}}
\providecommand{\WeylUs}{W^A_{\mathrm{u}}}

\providecommand{\WignerTrafoel}{{\mathcal{W}^A_{\eps,\lambda}}}
\providecommand{\WignerUs}{\tilde{\mathcal{W}}^A}

\providecommand{\minsAl}{\vartheta^A_{\lambda}}

\providecommand{\Hoerrd}[1]{\mathcal{S}^{#1}_{\rho}}
\providecommand{\Hoermrd}[3]{\mathcal{S}^{#1}_{#2}}

\providecommand{\SemiHoerrd}[1]{\mathrm{A}\Hoerrd{#1}}
\providecommand{\SemiHoermrd}[3]{\mathrm{A}\Hoermrd{#1}{#2}{#3}}

\providecommand{\Schwartz}{\mathcal{S}}


\begin{document}

\date{\today}

\title{Two-parameter Asymptotics \\
in Magnetic Weyl Calculus}

\author{Max Lein}

\publishers{
\bigskip
\normalsize
Technische Universität München, Zentrum Mathematik, Boltzmannstraße 3, 85747 Garching, Germany 
\\
\smallskip
e-mail: lein@ma.tum.de}

\hypersetup{
	pdfauthor={Max Lein}, 
	pdftitle={Two-parameter Asymptotics in Magnetic Weyl Calculus},
	pdfkeywords={Magnetic field, quantization, pseudodifferential operator, Weyl calculus, Weyl product, asymptotic expansion, gauge invariance, small parameters, Dirac equation, 35S05, 47A60, 81Q05, 81Q10, 81Q15}
}
\maketitle

\begin{abstract}
	This paper is concerned with small parameter asymptotics of magnetic quantum systems. In addition to a semiclassical parameter $\eps$, the case of small coupling $\lambda$ to the magnetic vector potential naturally occurs in this context. Magnetic Weyl calculus is adapted to incorporate both parameters, at least one of which needs to be small. 
	Of particular interest is the expansion of the Weyl product which can be used to expand the product of operators in a small parameter, a technique which is prominent to obtain perturbation expansions. Three asymptotic expansions for the magnetic Weyl product of two Hörmander class symbols are proven: (i) $\eps \ll 1$ and $\lambda \ll 1$, (ii) $\eps \ll 1$ and $\lambda = 1$ as well as (iii) $\eps = 1$ and $\lambda \ll 1$. Expansions (i) and (iii) are impossible to obtain with ordinary Weyl calculus. 
	Furthermore, I relate results derived by ordinary Weyl calculus with those obtained with magnetic Weyl calculus by one- and two-parameter expansions. 
	To show the power and versatility of magnetic Weyl calculus, I derive the semirelativistic Pauli equation as a scaling limit from the Dirac equation up to errors of $4$th order in $\nicefrac{1}{c}$.  
\end{abstract}
\medskip

\noindent
\textbf{Keywords and phrases: }
Magnetic field, quantization, pseudodifferential operator, Weyl calculus, Weyl product, asymptotic expansion, gauge invariance, small parameters, Dirac equation. 
\smallskip

\noindent
\textbf{2000 Mathematics Subject Classification: }
35S05, 47A60, 81Q05, 81Q10, 81Q15

\tableofcontents

\section{Introduction} 
\label{intro}

Quantum mechanical systems often contain small parameters that allow us to order terms by magnitude and importance. One prominent example are adiabatic systems where the fast degrees adjust `instantaneously' to the configuration of the slow degrees of freedom. Here, the small parameter quantifies the separation of slow and fast scales. Under certain conditions, effective dynamics may be derived which contain corrections order-by-order in the small parameter and one can bound the error. This effective hamiltonian may be the starting point for a semiclassical analysis: Egorov-type theorems compare the quantization of suitable classically evolved observables with the corresponding time-evolved quantum observables. 

A quantization procedure is a systematic way to associate operators to functions on symplectic manifolds that has certain natural properties (\eg linearity and compatibility with the involution, see Chapter~5 \cite{Waldmann:deformationQuantization:2008} for an overview). Mathematically speaking, we are interested in a functional calculus for non-commuting observables called position $x$ and momentum $\xi$ on phase space $\PSpace \cong \R^d \times \R^d$ endowed with the \emph{magnetic} symplectic form. This is by no means the only interesting case, other examples are spin systems \cite{VarillyGraciaBondia:MoyalRepSpin:1989,VarillyGraciaBondiaSchempp:MoyalRepSpin:1990} or quantization procedures on generic Poisson manifolds  \cite{Kontsevich:defQ_Poisson_manifolds:1997,Waldmann:deformationQuantization:2008}. Before we explain magnetic quantization in detail, we will briefly recall the non-magnetic case. 

\medskip
\noindent
Usual Weyl quantization $\Ope : f \mapsto \Ope(f)$ maps suitable functions $f$ on phase space $\PSpace$ onto linear operators acting on (subspaces of) $L^2(\R^d)$ (see \cite{Hoermander:WeylCalculus:1979,Folland:harmonicAnalysisPhaseSpace:1989}, for example). The index $\eps$ indicates that the commutator of momentum $\Ope(\xi)$ and position $\Ope(x)$ is of order $\eps$, $i \bigl [ \Ope(\xi_l) , \Ope(x_j) \bigr ] = \eps \delta_{lj}$. With this quantization procedure in hand, it turns out we can \emph{define} a \emph{non-commutative} product~$\Weyle$ on phase space which emulates the operator product, 
\begin{align*}
	\Ope(f) \, \Ope(g) &= \Ope (f \Weyle g) \asymp \sum_{n = 0}^{\infty} \eps^n \, \Ope \bigl ( (f \Weyle g)_{(n)} \bigr ) + \Ope \bigl ( \ordere{\infty} \bigr ) 
	. 
\end{align*}
If $\eps \ll 1$, we can expand the Weyl product asymptotically in powers of $\eps$ up to an arbitrarily small error which allows us to rewrite the operator product as an asymptotic series in $\eps$ as well \cite{Folland:harmonicAnalysisPhaseSpace:1989}. This idea has been used to derive effective operators whose dynamics approximates the full perturbed dynamics, see, for instance, \cite{LittlejohnWeigert:diagonalizationMultiWave:1993,PST:sapt:2002,PST:effDynamics:2003,Teufel:adiabaticPerturbationTheory:2003}. Hence, from a computational point of view, an asymptotic expansion in a small parameter is a very desirable thing to have. 

\medskip
\noindent
Very often, the magnetic field $B$ \emph{is} the perturbation of the hamiltonian and usual (\ie non-magnetic) Weyl calculus is not well-adapted to this situation. A \emph{magnetic} Weyl calculus is needed and this paper is concerned with the derivation of three asymptotic expansions of the corresponding magnetic Weyl product, \eg with respect to a semiclassical parameter $\eps$, 
\begin{align*}
	f \Weyle^B g \asymp \sum_{n = 0}^{\infty} \eps^n \, (f \Weyle^B g)_{(n)} + \ordere{\infty}
	. 
\end{align*}
Let us introduce some notation first: assume we apply a magnetic field, thought of as a two-form $B = \bigl ( B_{lj} \bigr )_{1 \leq l,j \leq d}$, whose components are smooth, bounded and have bounded derivatives to all orders, \ie $B_{lj} \in \mathcal{BC}^{\infty}(\R^d)$, $1 \leq l,j \leq d$. Then we will consider the quantization which takes the position and momentum vectors $x$ and $\xi$ into the operators 
\begin{align}
	Q &:= \hat{x} \label{intro:observablesUsualScaling} \\
	P^A_{\eps,\lambda} &:= - i \eps \nabla_{x} - \lambda A(Q) \notag 
\end{align}
$A$ is a vector potential which \emph{represents} $B$, \ie $B_{lj} = \partial_{x_l} A_j - \partial_{x_j} A_l$, $1 \leq l,j \leq d$, whose components will always be chosen to be smooth and polynomially bounded, $A_l \in \mathcal{C}^{\infty}_{\mathrm{pol}}(\R^d)$, $1 \leq l \leq d$. The first parameter $\eps$ formally takes the role of $\hbar$ and quantifies the non-commutativity of position and momentum. There are many systems (\eg Born-Oppenheimer-type systems) where $\eps$ is, physically speaking, \emph{not} $\hbar$, but some other parameter that describes a separation of scales. The second parameter, $\lambda$, is often physically equal to $\nicefrac{e}{c}$ where $e$ is the charge quantum and $c$ the speed of light. If $\lambda$ is taken to be small, this could either mean that we are interested in some non-relativistic limit, $c \gg 1$, or in the limit of small charge $e \ll 1$. To be precise, small parameters must not have any units, so instead of $\lambda = \nicefrac{1}{c}$ we should really use $\lambda = \nicefrac{v_0}{c}$ where $v_0$ is some characteristic velocity. 

The commutators of our basic building blocks are given by 
\begin{align*}
	i [ Q_l , Q_j ] = 0 , && 
	i [ P^A_{\eps,\lambda \, j} , Q_l ] = \eps \delta_{lj} , && 
	i [ P^A_{\eps,\lambda \, l} , P^A_{\eps,\lambda \, j} ] = - \eps \lambda B_{lj}(Q) 
	. 
\end{align*}
We note that the commutator of the kinetic momentum operator depends on the magnetic field $B$ and \emph{not} on the specific choice of vector potential. If we translate these commutation relations to the classical framework where $x$ corresponds to $Q$ and $\xi$ to $P^A_{\eps,\lambda}$, we have to use the \emph{magnetic} symplectic form 
\begin{align*}
	\omega^{\lambda B} = \sum_{l = 1}^d \dd \xi_l \wedge \dd x_l - \tfrac{1}{2} \lambda \, \sum_{l,j = 1}^d B_{lj}(x) \, \dd x_l \wedge \dd x_j 
\end{align*}
on phase space $\PSpace$ which induces the \emph{magnetic} Poisson bracket, 
\begin{align*}
	\bigl \{ f , g \bigr \}_{\lambda B} = \sum_{l = 1}^d \bigl ( \partial_{\xi_l} f \, \partial_{x_l} g - \partial_{x_l} f \, \partial_{\xi_j} g \bigr ) - \lambda \sum_{l,j = 1}^d B_{lj}(x) \, \partial_{\xi_l} f \, \partial_{\xi_j} g 
	. 
\end{align*}
Classically, this agrees with the recipe of minimal substitution: if we replace $\xi$ with $\xi - \lambda A(x)$ and use the standard Poisson bracket $\{ \cdot , \cdot \}$, we recover the \emph{magnetic} Poisson bracket, 
\begin{align*}
	\bigl \{ f(x,\xi - \lambda A(x)) , g(x,\xi - \lambda A(x)) \bigr \} &= \bigl \{ f , g \bigr \}_{\lambda B} (x,\xi - \lambda A(x)) 
	. 
\end{align*}
Quantum mechanically, these two points of view are \emph{no longer equivalent}. 
Based on the \emph{magnetic} symplectic form, Müller was the first to define covariant magnetic Weyl calculus in a non-rigorous fashion \cite{Mueller:productRuleGaugeInvariantWeylSymbols:1999} (the author thanks R.~Littlejohn for this reference), although Luttinger has used it in prototypical form as early as 1951 \cite{Luttinger:magnetic_field_periodic_potential:1951}. The present paper relies upon earlier contributions which have put these ideas on a solid mathematical foundation \cite{IftimieMantiouPurice:magneticPseudodifferentialOperators:2005,MantoiuPurice:magneticWeylCalculus:2004,MantiouPurice:strictDeformationQuantizationMagneticField:2005,KarasevOsborn:symplecticArea:2001,KarasevOsborn:quantumMagneticAlgebra:2004,KarasevOsborn:cotangentBundleQ:2005}. Some notable results include a Caldéron-Vaillancourt-type theorem ($L^2$ continuity of $\Hoerrd{0}$ symbols), selfadjointness of elliptic symbols on magnetic Sobolev spaces \cite{IftimieMantiouPurice:magneticPseudodifferentialOperators:2005} and a Beals-type criterion \cite{MantoiuPurice:BealsCriterion:2008}. One missing ingredient is an asymptotic expansion of the magnetic product with respect to a small parameter. 

Before we continue, we would like to elaborate on possible choices of scalings. If we rescale space by $\eps$ via $({U_{\eps}}^{-1} \varphi)(x) := \eps^{\nicefrac{d}{2}} \, \varphi (\eps x)$, $\varphi \in L^2(\R^d)$, we can transform the observables~\eqref{intro:observablesUsualScaling} into 
\begin{align}
	\Qe &:= \eps \hat{x} \label{intro:observablesAdiabaticScaling} \\
	\PiAel &:= - i \nabla_x - \lambda A(\Qe) 
	\notag 
	. 
\end{align}
Mathematically, both scales are unitarily equivalent (see Appendix~\ref{appendix:equivalenceWeylSystems}). The decision which scale is deemed preferable is based on the physics of the problem. If we would like to emphasize the slow variation of the magnetic field (compared to other potentials), then the second choice is more natural. The single-particle Schrödinger equation with periodic potential $V_{\Gamma}$ subjected to a slowly-varying electromagnetic field, a system which is described by the hamiltonian 
\begin{align*}
	H = \tfrac{1}{2} \bigl ( - i \nabla_x - A(\eps \hat{x}) \bigr )^2 + V_{\Gamma}(\hat{x}) + \Phi(\eps \hat{x}) , 
\end{align*}
falls in this category. We emphasize that all of our results hold in \emph{either scaling}. In particular, the asymptotic expansion of the product is the \emph{same}, independent of the choice of scaling (see Appendix \ref{appendix:equivalenceWeylSystems} for details). As this paper was initially motivated by the problem above, we will use the \emph{adiabatic} scaling given by equation~\eqref{intro:observablesAdiabaticScaling}. 

The fundamental building block of magnetic pseudodifferential calculus is the \emph{magnetic Weyl system} that implements the commutation relations, 
\begin{align*}
	\WeylSysAel(X) := e^{- i \sigma(X,(\Qe,\PiAel))} 
	. 
\end{align*}
Its gauge-covariance leads to the gauge-covariance of magnetic pseudodifferential operators. Here $(x,\xi) = X \in \PSpace$ is a point in phase space and $\sigma(X,Y) := \xi \cdot y - x \cdot \eta$, $Y = (y,\eta) \in \PSpace$, is the (non-magnetic) symplectic form. In \cite{MantoiuPurice:magneticWeylCalculus:2004} it has been shown that this is a well-defined operator which acts on any $\varphi \in L^2(\R^d)$ by 
\begin{align*}
	(\WeylSysAel(Y) \varphi)(x) &= e^{- i \eps (x + \nicefrac{y}{2}) \cdot \eta} e^{-i \lambda \GAe([x,x+y])} \, \varphi(x+y) 
	. 
\end{align*}
The appearance of the magnetic circulation $\GAe$ along the line segment which connects $x$ and $y$ 
\begin{align}
	\lambda \GAe([x,y]) := 
	\frac{\lambda}{\eps} \int_{[\eps x , \eps y]} A 
	\label{intro:scaledCirculation}
\end{align}
stems from the use of magnetic translations. All proofs in \cite{IftimieMantiouPurice:magneticPseudodifferentialOperators:2005,MantoiuPurice:magneticWeylCalculus:2004,MantiouPurice:strictDeformationQuantizationMagneticField:2005} carry over to the present case via a simple scaling argument. The pseudodifferential operator associated to a Schwartz function $f \in \Schwartz(\PSpace)$ is defined in terms of the symplectic Fourier transform $\Fs f = \Fs^{-1} f$ and the Weyl system: 
\begin{align}
	\OpAel (f) &:= 
	\frac{1}{(2\pi)^{d}} \int \dd X \, (\Fs^{-1} f)(X) \, \WeylSysAel(X) 
	:= \frac{1}{(2\pi)^{2d}} \int \dd X \, \int \dd \tilde{X} \, e^{i \sigma(X,\tilde{X})} f(\tilde{X}) \, \WeylSysAel(X) 
\end{align}
All parameters are contained in the Weyl system $\WeylSysAel(X) = e^{- i \sigma(X,(\Qe,\PiAel))}$; if we had chosen the usual scaling, the formula would be the same, but $\Qe$ and $\PiAel$ would have to be replaced by $Q$ and $P^A_{\eps,\lambda}$ from equation~\eqref{intro:observablesUsualScaling}. This definition can be extended to observables of Hörmander symbol class $m$ with weight $\rho \in [0,1]$ \cite{MantoiuPurice:magneticWeylCalculus:2004,IftimieMantiouPurice:magneticPseudodifferentialOperators:2005} among others: 
\begin{align*}
	\Hoerrd{m} := \Bigl \{ f \in \mathcal{C}^{\infty}(\PSpace) \; \big \vert \; \forall a , \alpha \in \N_0^d \, \exists C_{a \alpha} > 0 : \; \babs{\partial_x^a \partial_{\xi}^{\alpha} f(x,\xi)} \leq C_{a \alpha} \expval{\xi}^{m - \sabs{\alpha} \rho} \Bigr \}
\end{align*}
The quantization of Hörmander-class symbols $f$ act on Schwartz functions $\varphi \in \Schwartz(\R^d)$ as 
\begin{align*}
	(\OpAel (f) \varphi)(x) &= \frac{1}{(2 \pi)^d} \int \dd y \int \dd \eta \, e^{- i (y - x) \cdot \eta} e^{- i \lambda \GAe([x,y])} \, f \bigl ( \tfrac{\eps}{2} (x+y) , \eta \bigr ) \, \varphi(y) 
\end{align*}
where the integral is interpreted as an oscillatory integral. If in addition the symbol $f$ is elliptic of order $m$, for instance, this definition extends from $\varphi \in \Schwartz(\R^d)$ to functions in the (magnetic) Sobolev space $H^m_A(\R^d)$. These results are well-known for standard Weyl calculus, see \eg \cite{Hoermander:fourierIntegralOperators1:1972,DHoermander:fourierIntOp2:1973,Hoermander:WeylCalculus:1979,Hoermander:analysisPDO3:1985,Stein:harmonicAnalysis:1993}. If we choose an equivalent gauge, \ie $A'(x) = A(x) + \eps \nabla_x \chi(x)$ for some $\chi \in \mathcal{C}^{\infty}_{\mathrm{pol}}(\R^d)$, then the magnetic Weyl quantization of $f$ with respect to $A'$ is related to that with respect to $A$ by conjugating with $e^{+i \lambda \chi(\Qe)}$, that is, magnetic quantization is \emph{covariant}: 
\begin{align*}
	\Op^{A + \eps \nabla_x \chi}_{\eps,\lambda}(f) = e^{+i \lambda \chi(\Qe)} \, \OpAel(f) \, e^{-i \lambda \chi(\Qe)} 
\end{align*}
\emph{Unless} $f$ is a polynomial of degree $\leq 2$ in momentum, regular Weyl quantization of minimally substituted symbols is \emph{not gauge covariant} and does not coincide with magnetic quantization. This includes physically relevant examples such as $\sqrt{m^2 + \xi^2}$ or band energy functions in solid state physics. 

\medskip
\noindent
The second major component in magnetic Weyl calculus is a product $\magWel$. Its form is shaped by the commutation relations of the fundamental observables as expressed by the composition law of the Weyl system, 
\begin{align}
	\WeylSysAel(Y) \, \WeylSysAel(Z) &= e^{i \frac{\eps}{2} \sigma(Y,Z)} \, e^{- i \frac{\lambda}{\eps} \Gamma^{B}(\sexpval{\Qe, \Qe + \eps y , \Qe + \eps y + \eps z})} \, \WeylSysAel(Y+Z) 
	. 
	\label{intro:compWeylSystem}
\end{align}
The magnetic contribution $e^{- i \frac{\lambda}{\eps} \Gamma^{B}(\sexpval{\Qe, \Qe + \eps y , \Qe + \eps y + \eps z})}$ is the exponential of the magnetic flux through the triangle with corners $\Qe$, $\Qe + \eps y$ and $\Qe + \eps y + \eps z$ (see equation \eqref{asympExp:magneticFlux}) and as such depends only on the \emph{magnetic field} and not on the choice of gauge. A scaled magnetic flux $\gBe$ through triangles with different corners (equation~\eqref{asympExp:eqn:mag_flux_product_corners}) enters the integral formula for the magnetic Weyl product, 
\begin{align}
	(f \magWel g)(X) &= \frac{1}{(2 \pi)^{2d}} \int \dd Y \, \int \dd Z \, e^{+i \sigma(X,Y+Z)} \, e^{i \frac{\eps}{2} \, \sigma(Y,Z) - i \lambda \gBe(x,y,z)} \, \bigl ( \Fs^{-1} f \bigr )(Y) \, \bigl ( \Fs^{-1} g \bigr )(Z) 
	. 
	\label{intro:eqn:mag_Weyl_product}
\end{align}
This expression can be derived from the composition law of the Weyl system and the magnetic Wigner transform (see~Theorem~\ref{asympExp:thm:equivalenceProduct}). If $\eps \ll 1$ and $y$ and $z$ have lengths of $\orderone$, then $\gBe(x,y,z)$ is of order $\ordere{}$ and can be expanded in powers of $\eps$. The different asymptotic expansions with respect to $\eps$ and $\lambda$ simultaneously, $\eps$ or $\lambda$ are obtained by expanding exponential of the `twister' 
\begin{align*}
	e^{i T_{\eps,\lambda}(x,Y,Z)} := e^{i \frac{\eps}{2} \sigma(Y,Z) - i \lambda \gBe(x,y,z))} 
\end{align*}
in the appropriate way. One can easily distinguish purely magnetic and purely non-magnetic contributions in the product. 

In the `usual scaling' (equation \eqref{intro:observablesUsualScaling}), the Weyl system would essentially obey the same composition law and lead to the \emph{exact same} expansion of the magnetic product (see Theorem~\ref{appendix:equivalence_expansion} for details).

\subsection{Comparison with usual Weyl calculus} 
\label{intro:main_result:comparisonNonMagWQ}

Regular Weyl quantization has seen many applications to magnetic systems over the years, so one obvious question is how results would differ if magnetic Weyl calculus had been used instead. In ordinary Weyl calculus, a magnetic field is included by quantizing the minimally substituted symbol $h \circ \vartheta^{\lambda A}$ instead of $h$ where $\vartheta^{\lambda A}(x,\xi) := \bigl ( x , \xi - \lambda A(x) \bigr )$. If $h \in \Hoerrd{m}$ is a Hörmander symbol, then one typically assumes that the magnetic field $B$ is such that there exists a vector potential whose components of $A$ are $\mathcal{BC}^{\infty}(\R^d)$ functions or that all derivatives are bounded, $\sabs{\partial_x^a A(x)} \leq C_a$ for some $C_a > 0$ and all $a \in \N_0^d$, $\abs{a} \geq 1$. The latter is typically used to cover the case of constant magnetic field. Under these assumptions, Theorem~\ref{asympExp:thm:usualWQmagWQ} ensures the difference 
\begin{align}
	\Ope(h \circ \vartheta^{\lambda A}) -  \Op^A_{\eps,\lambda}(h) = \ordere{2} 
	\label{intro:main_result:comparisonNonMagWQ:eqn:comparison_quantizations}
\end{align}
is a pseudodifferntial operator which is the quantization of some $g \in \Hoerrd{m-3}$. The difference can be expressed as a power expansion in $\eps$. 

The second important ingredient is a comparison of the two products: standard Weyl calculus suggests to multiply $f \circ \vartheta^{\lambda A}$ and $g \circ \vartheta^{\lambda A}$ where $f \in \Hoerrd{m_1}$ and $g \in \Hoerrd{m_2}$. Combining 
\begin{align*}
	\bigl \{ f \circ \vartheta^{\lambda A} , g \circ \vartheta^{\lambda A} \bigr \} = \bigl \{ f , g \bigr \}_{\lambda B} \circ \vartheta^{\lambda A}
\end{align*}
with equation~\eqref{intro:main_result:comparisonNonMagWQ:eqn:comparison_quantizations}, we also get that the two products differ by $\ordere{2}$, 
\begin{align*}
	\Ope \bigl ( f \circ \vartheta^{\lambda A} \Weyle g \circ \vartheta^{\lambda A} \bigr ) = \OpAel \bigl ( f \magWel g \bigr ) + \ordere{2} 
	. 
\end{align*}
Hence, first-order corrections (\eg the subprincipal symbol of an effective hamiltonian) derived with usual Weyl calculus coincide with those obtained by magnetic Weyl calculus. However, the advantages of magnetic Weyl calculus are significant, most importantly do results extend to much more general magnetic fields, one only needs to assume that the components of $B$ are in $\mathcal{BC}^{\infty}(\R^d)$. The gauge-covariance of magnetic Weyl calculus implies that results depend on the properties of the magnetic field rather than those of the vector potential. 


\subsection{Main results} 
\label{intro:main_result}

The main result of this work is Theorem \ref{intro:thm:main_result:asymptoticExpansion} which gives an asymptotic \emph{two}-parameter expansion of the product of two Hörmander class symbols $f \in \Hoerrd{m_1}$ and $g \in \Hoerrd{m_2}$ where each term involves the components of the magnetic field $B$, $f$ and $g$ and all its derivatives evaluated at the same point $(x,\xi)$. Furthermore, two one-parameter expansions have been derived: for $\eps \ll 1$, the expansion still has the same structure as the two-parameter expansion. In case $\eps$ is not necessarily small, one can expand the product with respect to $\lambda \ll 1$ and the terms are given by oscillatory integrals. 
\begin{thm}[Asymptotic expansion of the magnetic Moyal product]\label{intro:thm:main_result:asymptoticExpansion}
	Assume $B$ is a magnetic field whose components are $\mathcal{BC}^{\infty}$ functions and $f \in \Hoerrd{m_1}$ as well as $g \in \Hoerrd{m_2}$. Then the magnetic Moyal product can be expanded asymptotically in $\eps \ll 1$ and $\lambda \ll 1$: for every $\eprec \ll 1$ we can choose $N \equiv N(\eprec,\eps,\lambda) \in \N_0$ such that 
	\begin{align}
		f \magWel g &= \sum_{n = 0}^{N} \sum_{k = 0}^n \eps^n \lambda^k \, (f \magWel g)_{(n,k)} + \tilde{R}_N 
		\label{intro:epsLambdaExpansion}
	\end{align}
	where the $(n,k)$ term $(f \magWel g)_{(n,k)}$ is in symbol class $\Hoerrd{m_1 + m_2 - (n+k) \rho}$ and we have explicit control over the remainder: $\tilde{R}_N$ as given by equation \eqref{asympExp:eqn:total_remainder} is numerically small and in the correct symbol class, $\Hoerrd{m_1 + m_2 - (N+1)  \rho}$, \ie it is of order $\ordereprec$ in the sense of Definition \ref{asympExp:defn:precision}. The $(n,k)$ term of the expansion, 
	\begin{align}
		(f \magWel g)_{(n,k)}(X) &= \sum_{\substack{k_0 + \sum_{j = 1}^{n} j k_j = n \notag \\
		\sum_{j=1}^{n} k_j = k}} \frac{i^{k + k_0}}{k_0! \, k_1 ! \cdots k_{n}!} 
		\cdot \\
		&\qquad \cdot 
		{\mathcal{L}_0}^{k_0} \bigl ( (\partial_{\eta},\partial_y) , (\partial_{\zeta},\partial_z) \bigr ) \, \prod_{j = 1}^{n} {\mathcal{L}_j}^{k_j}(x,-i \partial_{\eta},-i \partial_{\zeta}) \bigr ) f(Y) g(Z) \Bigl . \Bigr \rvert_{Y = X = Z} \notag 
		, 
	\end{align}
	is defined in terms of a family of differential operators $\mathcal{L}_j$, $j \in \N_0$, 
	\begin{align}
		\mathcal{L}_0(Y,Z) &:= \tfrac{1}{2} \sigma(Y,Z) = \tfrac{1}{2} \bigl ( \eta \cdot z - y \cdot \zeta \bigr ) \\
		\mathcal{L}_j(x,y,z) &:= - \frac{1}{j!} \, \sum_{m_1 , \ldots , m_{j-1} = 1}^d \partial_{x_{m_1}} \cdots \partial_{x_{m_{j-1}}} B_{kl}(x) \, y_k \, z_l \, \left ( - \frac{1}{2} \right )^{j+1} \frac{1}{(j+1)^2} \sum_{c = 1}^j \noverk{j+1}{c} \, \cdot \notag \\
		&\qquad \qquad \qquad \qquad \cdot \bigl ( (1 - (-1)^{j+1}) c - (1 - (-1)^{c}) (j+1) \bigr ) \, y_{m_1} \cdots y_{m_{c-1}} z_{m_{c}} \cdots z_{m_{j-1}} \notag \\
		&=: - \sum_{\abs{\alpha} + \abs{\beta} = j-1} C_{j,\alpha,\beta} \, \partial_x^{\alpha} \partial_x^{\beta} B_{kl}(x) \, y_k z_l \, y^{\alpha} \, z^{\beta} 
		. 
	\end{align}
\end{thm}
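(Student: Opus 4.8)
Everything is extracted from the integral representation \eqref{intro:eqn:mag_Weyl_product} of $f \magWel g$ (Theorem~\ref{asympExp:thm:equivalenceProduct}) by expanding the exponential of the twister $e^{i T_{\eps,\lambda}(x,Y,Z)} = e^{i\frac{\eps}{2}\sigma(Y,Z) - i\lambda\gBe(x,y,z)}$ in powers of $\eps$ and $\lambda$. The plan is to first treat $f , g \in \Schwartz(\PSpace)$, for which the double integral in \eqref{intro:eqn:mag_Weyl_product} converges absolutely and the manipulations below are elementary, and then to pass to $f \in \Hoerrd{m_1}$, $g \in \Hoerrd{m_2}$ by a density argument, once the $(n,k)$ terms and the remainder have been shown to depend continuously on $f$ and $g$ in the $\Hoerrd{\cdot}$ topologies with the claimed orders. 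I keep $f$ attached to $Y = (y,\eta)$ and $g$ to $Z = (z,\zeta)$; after the $Y$- and $Z$-integrals against $e^{i\sigma(X,Y+Z)}$ are carried out, the polynomial prefactors produced by the expansion of the twister turn, via the usual Fourier identities, into the bidifferential operators $\mathcal{L}_0\bigl((\partial_\eta,\partial_y),(\partial_\zeta,\partial_z)\bigr)$ and $\mathcal{L}_j(x,-i\partial_\eta,-i\partial_\zeta)$ applied to $f(Y)g(Z)$ and evaluated at $Y = X = Z$, which is the form of the terms in the statement.

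\textbf{Step 1: Taylor expansion of the magnetic flux.} Parametrize the triangle with corners $x$, $x + \eps y$, $x + \eps y + \eps z$ over the standard $2$-simplex and write $\gBe(x,y,z)$ (see \eqref{asympExp:eqn:mag_flux_product_corners}) as an integral of the components of $B$ over it. Since $B \in \mathcal{BC}^{\infty}(\R^d)$, Taylor-expanding each $B_{kl}$ around $x$ to order $M$ gives
\begin{align*}
	\gBe(x,y,z) = - \sum_{j=1}^{M} \eps^{j} \, \mathcal{L}_j(x,y,z) + \eps^{M+1} \, r_M(x,y,z;\eps) ,
\end{align*}
where the order-$j$ coefficient is a polynomial in $(y,z)$ with coefficients $\partial_x^{\alpha}\partial_x^{\beta} B_{kl}(x)$, $\abs{\alpha}+\abs{\beta} = j-1$; carrying out the elementary integrals over the simplex produces exactly the constants $C_{j,\alpha,\beta}$, equivalently the binomial sum displayed in the statement. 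The remainder $r_M$ is again polynomial in $(y,z)$ with coefficients that are integrals of $\partial^M B$ over the unit simplex, hence bounded in $x$ uniformly — this is what the remainder estimate will use.

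\textbf{Step 2: Multinomial expansion and the $(n,k)$ terms.} With $T_{\eps,\lambda} = \eps\,\mathcal{L}_0(Y,Z) + \lambda\sum_{j\geq 1}\eps^{j}\mathcal{L}_j(x,y,z) + (\text{flux remainder})$, applying the multinomial theorem to $e^{i T_{\eps,\lambda}}$ and collecting the coefficient of $\eps^n\lambda^k$ yields the sum over $k_0 + \sum_{j}jk_j = n$, $\sum_j k_j = k$ with weights $\frac{i^{k_0+k}}{k_0!k_1!\cdots k_n!}$ and the monomial ${\mathcal{L}_0}^{k_0}\prod_{j}{\mathcal{L}_j}^{k_j}$; substituting into \eqref{intro:eqn:mag_Weyl_product} and converting the polynomial prefactors into derivatives as above gives precisely the stated formula for $(f\magWel g)_{(n,k)}$. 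For the symbol order: $\mathcal{L}_0$ contributes one momentum derivative on $f\otimes g$ and one unit to $n$; each $\mathcal{L}_j$, $j\geq1$, contributes $\abs{\alpha}+1$ momentum derivatives on $f$ and $\abs{\beta}+1$ on $g$, i.e. $j+1$ in total, while contributing $j$ to $n$ and $1$ to $k$, and the $x$-derivatives of $B$ cost nothing since $B\in\mathcal{BC}^{\infty}$; hence an $(n,k)$ term lowers the $\expval{\xi}$-order by $\bigl(k_0 + \sum_j(j+1)k_j\bigr)\rho = (n+k)\rho$, so $(f\magWel g)_{(n,k)} \in \Hoerrd{m_1+m_2-(n+k)\rho}$, and truncating at total $\eps$-degree $N$ leaves $N+1$ as the lowest omitted order.

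\textbf{Step 3: the remainder — the main obstacle.} It remains to control $\tilde R_N$ of \eqref{asympExp:eqn:total_remainder}, an oscillatory integral assembled from the integral Taylor remainder of $t\mapsto e^{it}$ and the flux remainder $r_M$ of Step~1: one must show $\tilde R_N \in \Hoerrd{m_1+m_2-(N+1)\rho}$ with seminorms that are $\ordereprec$ in the sense of Definition~\ref{asympExp:defn:precision}. The mechanism is to regularize the $Y$- and $Z$-integrals by integrating by parts with elliptic operators built from $\expval{Y}$, $\expval{Z}$ and from derivatives acting on the phase $\sigma(X,Y+Z)+\frac{\eps}{2}\sigma(Y,Z)$, trading the polynomial growth created by $\partial_X$-derivatives and by the twister monomials for enough decay in $Y,Z$ to make the integral absolutely convergent; the $\expval{\xi}$-weights are tracked exactly as in Step~2, and since every magnetic factor in the twister carries at least one power of $\eps$ and $\mathcal{L}_0$ carries exactly one, the truncation leaves an explicit $\eps^{N+1}$ which simultaneously yields the numerical smallness and the gain of $(N+1)\rho$ in the order. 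This uniform symbol-seminorm bound on the oscillatory-integral remainder, with the correct weight and an honest power of $\eps$, is the technical heart of the argument; the continuity it provides then extends \eqref{intro:epsLambdaExpansion} from Schwartz symbols to symbols of class $\Hoerrd{m_1}$ and $\Hoerrd{m_2}$. Finally $N \equiv N(\eprec,\eps,\lambda)$ is chosen at an optimal truncation — the series being asymptotic, not convergent — so that $\tilde R_N$ falls below the prescribed precision $\eprec$.
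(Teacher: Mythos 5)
Your overall plan is the same as the paper's: expand the exponential of the twister $e^{iT_{\eps,\lambda}}$ by combining a Taylor expansion of $\gBe$ in $\eps$ (producing the $\mathcal{L}_j$) with a multinomial expansion, organize by powers of $\eps^n\lambda^k$, convert the resulting $(y,\eta,z,\zeta)$-monomials into derivatives, and control the remainder as an oscillatory integral. The organizational variation (prove for Schwartz symbols, then extend by a density argument) is a legitimate alternative to the paper's approach of interpreting everything as oscillatory integrals on H\"ormander symbols from the outset, but it is partly circular: once you have the continuity estimates "with the claimed orders" that you invoke for the extension, you already have the theorem; what density buys you is only the permission to interchange sums and integrals freely in the Schwartz case. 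It is a matter of taste and costs nothing extra. Two smaller remarks: in Step~1 you list the triangle corners as $x$, $x+\eps y$, $x+\eps y+\eps z$ (those from the Weyl-system composition law), but $\gBe$ is the flux through the \emph{symmetrized} triangle $\sexpval{x-\tfrac{\eps}{2}(y+z),\,x+\tfrac{\eps}{2}(y-z),\,x+\tfrac{\eps}{2}(y+z)}$ of \eqref{asympExp:eqn:mag_flux_product_corners}; expanding the wrong triangle gives different constants $C_{j,\alpha,\beta}$, so the parametrization must be fixed (the paper instead works in transversal gauge, but either route gives the $\mathcal{L}_j$ once the corners are right).

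The genuine gap is Step~3, which you yourself flag as "the technical heart" and then do not carry out. What is actually needed is the following: the truncated remainder contains two structurally distinct pieces — the integral Taylor remainder of $t\mapsto e^{it}$, which still carries the full, non-expanded factor $e^{-i\tau\lambda\gBe(x,y,z)}$, and the flux-Taylor remainder inserted into each $T_{\eps,\lambda}^n$ with $n\leq N$ — and for each one must prove an oscillatory-integral estimate of the type in the paper's Lemma~\ref{appendix:existenceOscInt:remainder}. That estimate rests on the polynomial bounds (Lemma~\ref{appendix:boundednessLemmagBel} and Corollary~\ref{appendix:boundednessCoroBel}) on $\partial^a_x\partial^b_y\partial^c_z\, e^{-i\lambda\gBe}$, which enter because the non-expanded exponential is not a polynomial in $(y,z)$; your description ("regularize by integrating by parts with elliptic operators") names the correct mechanism but does not establish that the weights produced by the twister monomials and the $x$-derivatives of $e^{-i\lambda\gBe}$ can actually be absorbed into $\sexpval{Y}^{-2N_1}\sexpval{Z}^{-2N_2}$ without destroying the $\expval{\xi}^{m_1+m_2-(N+1)\rho}$ gain, nor does it verify uniformity in the integration parameters $\tau,\tau'\in[0,1]$. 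Until that seminorm estimate is proved, the claim that $\tilde R_N \in \Hoerrd{m_1+m_2-(N+1)\rho}$ and is $\ordereprec$ is asserted, not shown.

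Finally, a bookkeeping point on the precision claim: after fixing $\eprec$ and hence $N$ via Definition~\ref{asympExp:defn:precision}, one must check not only that the remainder has $\eps^{N+1}$ (or $\eps^{N'+1}\lambda$) as prefactor, but also that the residual factors — in particular $\tfrac{1}{\eps}\gBe = \orderone$ in the binomial expansion of $T_{\eps,\lambda}^{N+1}$ — are genuinely $\orderone$ uniformly in the small parameters. The paper tracks this carefully; your sketch takes it for granted.
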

Here, we have glossed over the difficulty of agreeing up to which order we have to expand the product ($\eps$ and $\lambda$ are \emph{independent}), as we can no longer use well-known notation such as $\ordere{n}$ or $\orderl{k}$ without modification. We refer to Section \ref{asympExp:semiclSymbolsPrecision} for details. 
\begin{remark}
	With that in mind, we can give the first terms of the expansion concisely as 
	\begin{align*}
		(f \magWel g)_{(0,0)} &= f \, g 
		, 
		\\
		(f \magWel g)_{(1,0)} &= - \tfrac{i}{2} \sum_{l = 1}^d \bigl ( \partial_{\xi_l} f \, \partial_{x_l} g - \partial_{x_l} f \, \partial_{\xi_l} g \bigr ) 
		, 
		\\
		(f \magWel g)_{(1,1)} &= + \tfrac{i}{2} \sum_{l,j = 1}^d B_{lj} \, \partial_{\xi_l} f \,  \partial_{\xi_j} g 
		. 
	\end{align*}
	The second-order corrections contain at least two derivatives with respect to momentum; if we group by powers of $\eps$, then the decay properties are determined by $(f \magWel g)_{(2,0)}$, 
	\begin{align*}
		(f \magWel g)_{(2,0)} &= - \tfrac{1}{8} \sum_{l,j = 1}^d \bigl ( 
		\partial_{\xi_l} \partial_{\xi_j} f \,  \partial_{x_l} \partial_{x_j} g + 
		\partial_{x_l} \partial_{x_j} f \,  \partial_{\xi_l} \partial_{\xi_j} g 
			+ \bigr . \\
			\bigl . &\qquad \qquad \qquad \qquad - 
		\partial_{\xi_l} \partial_{x_j} f \,  \partial_{x_l} \partial_{\xi_j} g - 
		\partial_{x_l} \partial_{\xi_j} f \,  \partial_{\xi_l} \partial_{x_j} g \bigr ) 
		, 
		\\
		(f \magWel g)_{(2,1)} &= + \tfrac{i}{4} \sum_{l,j,k = 1}^d \Bigl ( \tfrac{1}{6} \partial_{x_j} B_{lk} \, \bigl ( 
		\partial_{\xi_l} \partial_{\xi_j} f \, \partial_{\xi_k} g -
		\partial_{\xi_l} f \,                  \partial_{\xi_j} \partial_{\xi_k} g \bigr ) + \Bigr . \\
		\Bigl . &\qquad \qquad \qquad \qquad - 
		B_{lk} \, \bigl ( 
		\partial_{\xi_l} \partial_{\xi_j} f \, \partial_{\xi_k} \partial_{x_j} g - 
		\partial_{\xi_l} \partial_{x_j} f \, \partial_{\xi_k} \partial_{\xi_j} g  \bigr ) 
		\Bigr ) 
		, 
		\\
		(f \magWel g)_{(2,2)} &= - \tfrac{1}{8} \sum_{l_1,l_2,j_1,j_2 = 1}^d B_{l_1 j_1} \, B_{l_2 j_2} \, \partial_{\xi_{l_1}} \partial_{\xi_{l_2}} f \,  \partial_{\xi_{j_1}} \partial_{\xi_{j_2}} g 
		. 
	\end{align*}
	If the magnetic field is constant, all terms containing derivatives of $B$ vanish. 
\end{remark}
For each order in $\eps$, the sum in $\lambda$ is \emph{finite} and we immediately obtain the 
\begin{cor}[Expansion of $\magWel$ in $\eps$]
	If the assumptions of Theorem~\ref{intro:thm:main_result:asymptoticExpansion} are modified by taking $\lambda = 1$, then the $\eps$ expansion of the product $f \magWel g$ of two Hörmander symbols can be recovered from the two-parameter expansion: the $n$th order term in $\eps$ then reads 
	\begin{align*}
		(f \magWel g)_{(n)} &= \sum_{k = 0}^n \, (f \magWel g)_{(n,k)} \in \Hoerrd{m_1 + m_2 - n \rho} 
	\end{align*}
	where the $(f \magWel g)_{(n,k)}$ are taken from Theorem~\ref{intro:thm:main_result:asymptoticExpansion}. 
\end{cor}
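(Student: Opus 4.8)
The plan is to read the corollary off directly from the combinatorial structure of the two-parameter expansion in Theorem~\ref{intro:thm:main_result:asymptoticExpansion}, specialized to $\lambda = 1$. The essential observation is that the double sum defining $(f \magWel g)_{(n,k)}$ ranges over non-negative integers $k_0, k_1, \dots, k_n$ subject to $k_0 + \sum_{j=1}^n j k_j = n$ and $\sum_{j=1}^n k_j = k$; since every index $j$ occurring in the second constraint satisfies $j \geq 1$, we get $k = \sum_{j=1}^n k_j \leq \sum_{j=1}^n j k_j \leq k_0 + \sum_{j=1}^n j k_j = n$. Hence for each fixed order $n$ in $\eps$ only the finitely many powers $k = 0, 1, \dots, n$ of $\lambda$ occur, so setting $\lambda = 1$ in \eqref{intro:epsLambdaExpansion} and collecting all contributions proportional to $\eps^n$ yields precisely $(f \magWel g)_{(n)} = \sum_{k=0}^n (f \magWel g)_{(n,k)}$, with no infinite resummation involved.

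Next I would check the symbol-class assertion. Theorem~\ref{intro:thm:main_result:asymptoticExpansion} gives $(f \magWel g)_{(n,k)} \in \Hoerrd{m_1 + m_2 - (n+k)\rho}$, and since $\rho \in [0,1]$ and $k \geq 0$ we have $m_1 + m_2 - (n+k)\rho \leq m_1 + m_2 - n\rho$, so each summand already lies in $\Hoerrd{m_1 + m_2 - n\rho}$; a finite sum of symbols of a fixed H\"ormander class stays in that class, which gives $(f \magWel g)_{(n)} \in \Hoerrd{m_1 + m_2 - n\rho}$. In particular the $\xi$-decay of the $n$th term is governed by its worst summand $(f \magWel g)_{(n,0)}$, consistent with the remark following the theorem.

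Finally I would address the remainder. Freezing $\lambda = 1$ in the two-parameter estimate for $\tilde R_N$ from \eqref{asympExp:eqn:total_remainder} does no harm, since $\lambda = 1$ is a fixed $\orderone$ quantity; what survives is a remainder $R_N \in \Hoerrd{m_1 + m_2 - (N+1)\rho}$ which, after unwinding the precision bookkeeping, is of order $\ordere{N+1}$ uniformly in $\eps \ll 1$. The only point that genuinely needs care is this last step: translating the abstract two-parameter notion of being ``numerically small of order $\ordereprec$'' in the sense of Definition~\ref{asympExp:defn:precision} into the familiar one-parameter statement $R_N = \ordere{N+1}$ once $\lambda$ is no longer independent; this is a matter of specializing Section~\ref{asympExp:semiclSymbolsPrecision} rather than supplying any new analytic estimate, so I expect no real obstacle here — the corollary is essentially immediate given the theorem.
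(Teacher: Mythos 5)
Your proposal is correct and follows the same route the paper takes: the paper's entire justification for the corollary is the one-line remark that for each power of $\eps$ the $\lambda$-sum is finite, and your observation that the index constraints $k_0 + \sum_j j k_j = n$, $\sum_j k_j = k$ force $k \leq n$ is exactly the reason why. Your additional checks — the nesting $\Hoerrd{m_1+m_2-(n+k)\rho} \subseteq \Hoerrd{m_1+m_2-n\rho}$ for $k\rho \geq 0$, and the observation that the remainder estimates in the proof of the theorem only need $\eps^{N+1}$ (and hence survive $\lambda=1$, with the two-parameter precision bookkeeping collapsing to the usual $\ordere{N+1}$) — are the details the paper leaves implicit, and you have them right.
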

If $\eps$ is \emph{not} small, we can no longer expand the magnetic flux integral as a Taylor series in $\eps$, but have to accept it as-is. Only part of the twister can be expanded and the terms in the $\lambda$ expansion are obtained from replacing $e^{i T_{\eps,\lambda}(x,Y,Z)}$ with 
\begin{align*}
	e^{i T_{\eps,\lambda}(x,Y,Z)} = e^{i \frac{\eps}{2} \sigma(Y,Z)} \, e^{-i \lambda \gBe (x,y,z)} \asymp e^{i \frac{\eps}{2} \sigma(Y,Z)} \, \bigl ( 1 - i \lambda \gBe (x,y,z) + \orderl{2} \bigr ) 
\end{align*}
in equation~\eqref{intro:eqn:mag_Weyl_product}. The integral formulas cannot be simplified any further unless the symbols which are multiplied with each other have a special structure (\eg when they are polynomials in $x$ or $\xi$). 
\begin{thm}[Expansion of $\magWel$ in $\lambda$]\label{intro:thm:lambdaExpansion}
	Assume the magnetic field $B$ has components of class $\mathcal{BC}^{\infty}$. Then for $\lambda \ll 1$ and $\eps \leq 1$, we can expand the $\lambda$ Weyl product of $f \in \Hoerrd{m_1}$ and $g \in \Hoerrd{m_2}$ asymptotically in $\lambda$ such that 
	\begin{align}
		f \magWel g - \sum_{k = 0}^N \lambda^k (f \magWel g)_{(k)} \in \Hoerrd{m_1 + m_2 - 2(N+1) \rho} 
		, 
		&& 
		(f \magWel g)_{(k)} \in \Hoerrd{m_1 + m_2 - 2k \rho} 
		. 
	\end{align}
	In particular, the zeroth-order term reduces to the \emph{non-magnetic} Weyl product, $(f \magWel g)_{(0)} = f \Weyle g$. We have explicit control over the remainder (equation~\eqref{aysmpExp:eqn:lambda_expansion:remainder}): if we expand the product up to $N$th order in $\lambda$, the remainder is of order $\orderl{N+1}$ and in symbol class $\Hoerrd{m_1 + m_2 - 2(N+1)\rho}$. 
\end{thm}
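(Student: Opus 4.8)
The plan is to use that in the integral formula~\eqref{intro:eqn:mag_Weyl_product} the coupling $\lambda$ enters \emph{only} through the factor $e^{-i\lambda\gBe(x,y,z)}$, which is entire in $\lambda$. First I would Taylor-expand this factor in $\lambda$ to order $N$ with integral remainder,
\[
 e^{-i\lambda\gBe(x,y,z)} \;=\; \sum_{k=0}^{N}\frac{(-i\lambda)^{k}}{k!}\,\gBe(x,y,z)^{k} \;+\; \frac{(-i\lambda)^{N+1}}{N!}\int_{0}^{1}(1-t)^{N}\,e^{-it\lambda\gBe(x,y,z)}\,\gBe(x,y,z)^{N+1}\,dt ,
\]
insert this into~\eqref{intro:eqn:mag_Weyl_product}, and read off both the coefficients
\[
 (f\magWel g)_{(k)}(X) \;=\; \frac{(-i)^{k}}{k!\,(2\pi)^{2d}}\int dY\!\int dZ\; e^{i\sigma(X,Y+Z)}\,e^{i\frac{\eps}{2}\sigma(Y,Z)}\,\gBe(x,y,z)^{k}\,\bigl(\Fs^{-1}f\bigr)(Y)\,\bigl(\Fs^{-1}g\bigr)(Z)
\]
and the remainder $\tilde R_{N}$, which is the same oscillatory integral carrying $\tfrac{(-i)^{N+1}}{N!}\int_{0}^{1}(1-t)^{N}e^{-it\lambda\gBe}\,dt\;\gBe^{N+1}$ in place of $\tfrac{(-i)^{k}}{k!}\gBe^{k}$, multiplied by $\lambda^{N+1}$. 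For $k=0$ the factor $\gBe^{0}\equiv 1$ drops out and the integral is precisely the defining formula of the ordinary Weyl product, so $(f\magWel g)_{(0)}=f\Weyle g$.

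The key structural input for the symbol estimates is a factorization of $\gBe$. Since $\gBe(x,y,z)$ equals, up to the scaling $\tfrac1\eps$, the flux of the two-form $B$ through a triangle of diameter $\mathcal{O}(\eps(|y|+|z|))$ that degenerates to a segment whenever $y=0$ or whenever $z=0$, it vanishes on the union of $\{y=0\}$ and $\{z=0\}$; Hadamard's lemma therefore gives $\gBe(x,y,z)=\sum_{k,l=1}^{d}y_{k}z_{l}\,\tilde\gamma_{kl}(x,y,z)$ with smooth $\tilde\gamma_{kl}$, and because the components of $B$ lie in $\mathcal{BC}^{\infty}(\R^{d})$ and the triangle has area $\mathcal{O}(\eps^{2})$, the $\tilde\gamma_{kl}$ are of order $\mathcal{O}(\eps)$ and bounded together with all their derivatives, uniformly in $\eps\le1$. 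Consequently $\gBe^{k}$ carries $k$ factors $y$ and $k$ factors $z$, and in the oscillatory integral above each $y$-factor can be absorbed, by integrating by parts in the symplectic Fourier integral (using $y_{j}\Fs^{-1}f=\Fs^{-1}(-i\partial_{\xi_{j}}f)$, and analogously for $g$ and $z$), into one momentum derivative on $f$, resp.\ on $g$. This rewrites $(f\magWel g)_{(k)}$ as a finite sum of ordinary Weyl-type oscillatory integrals with amplitudes $\partial_{\xi}^{\alpha}f\cdot\partial_{\xi}^{\beta}g$ times $\mathcal{BC}^{\infty}$ coefficients, $|\alpha|,|\beta|\le k$; since $\partial_{\xi}^{\alpha}f\in\Hoerrd{m_{1}-|\alpha|\rho}$ and $\partial_{\xi}^{\beta}g\in\Hoerrd{m_{2}-|\beta|\rho}$, the composition estimates for Weyl calculus established in \cite{IftimieMantiouPurice:magneticPseudodifferentialOperators:2005,MantoiuPurice:magneticWeylCalculus:2004} show that every such integral defines a symbol, the worst contribution ($|\alpha|=|\beta|=k$) lying in $\Hoerrd{m_{1}+m_{2}-2k\rho}$.

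It remains to estimate $\tilde R_{N}$, and this is the step I expect to be the main obstacle. The $N+1$ factors $\gBe$ supply, by the same absorption mechanism, the gain of $2(N+1)\rho$ orders, and the explicit prefactor $\lambda^{N+1}$ together with $\int_{0}^{1}(1-t)^{N}dt<\infty$ yields the order $\orderl{N+1}$. The difficulty is the extra factor $e^{-it\lambda\gBe(x,y,z)}$: although it has modulus one, its $x,y,z$-derivatives are \emph{unbounded} — differentiating it produces $-it\lambda\,\partial\gBe$, which grows polynomially in $y$ and $z$ — so the oscillatory integral must be run exactly as in the proof that $f\magWel g$ itself lies in $\Hoerrd{m_{1}+m_{2}}$, integrating by parts with vector fields adapted to the phase $\sigma(X,Y+Z)+\tfrac{\eps}{2}\sigma(Y,Z)$ to generate the decay in $Y,Z$ needed for absolute convergence and for the symbol bounds. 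Each time $e^{-it\lambda\gBe}$ is differentiated the resulting factor (i) carries the small weight $\lambda$ and (ii) involves only $\gBe$ and its derivatives, which by the $\mathcal{BC}^{\infty}$ bounds on $B$ are polynomially bounded with constants uniform in $\eps\le1$ and $t\in[0,1]$, so the decay generated by the integrations by parts dominates; since one may not Taylor-expand in $\eps$, the bookkeeping has to be organized so that all constants stay uniform in $\eps\in(0,1]$. Granting that the oscillatory-integral estimates of magnetic Weyl calculus are set up to tolerate modulus-one factors with polynomially growing derivatives — which is what makes $\magWel$ well defined in the first place — one obtains $\tilde R_{N}\in\Hoerrd{m_{1}+m_{2}-2(N+1)\rho}$ of order $\orderl{N+1}$, which completes the proof.
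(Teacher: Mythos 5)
Your proposal follows essentially the same strategy as the paper: Taylor-expand $e^{-i\lambda\gBe}$ with integral remainder, observe that each power $\gBe^{k}$ supplies $k$ factors of $y$ and $k$ factors of $z$ (``$2k$ $q$s'' in the paper's bookkeeping), convert these into momentum derivatives of $f$ and $g$ via the symplectic Fourier transform, and then invoke an oscillatory-integral existence lemma that tolerates the residual modulus-one factor $e^{-it\lambda\gBe}$ and its polynomially growing derivatives. The one genuinely different ingredient is how you obtain the crucial factorization $\gBe(x,y,z)=\sum_{k,l}y_{k}z_{l}\,\tilde\gamma_{kl}(x,y,z)$ with $\tilde\gamma_{kl}$ bounded together with all derivatives: you argue abstractly, noting that the flux triangle degenerates to a segment on $\{y=0\}\cup\{z=0\}$ and applying Hadamard's lemma twice (the iteration is legitimate, since the first-order Hadamard coefficients $h_{k}(x,y,z)=\int_{0}^{1}\partial_{y_{k}}\gBe(x,ty,z)\,dt$ still vanish at $z=0$ because $\gBe(x,y,0)\equiv 0$). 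The paper instead computes $\gBe=-\eps\,\Bte_{lj}(x,y,z)\,y_{l}z_{j}$ explicitly in transversal gauge (Lemma~\ref{app:lemmaFormalExpansionTwister}), with a closed integral formula for $\Bte_{lj}$. Your route is more coordinate-free and arguably cleaner conceptually; the paper's explicit form is what lets it later (Theorem~\ref{asympExp:thm:lambdaEpsExpansion}) re-expand $\Bte_{lj}$ in powers of $\eps$ and match the $\lambda\to\eps$ expansion against the $\eps\to\lambda$ one, so the extra explicitness is not wasted. Two small remarks: the exponents in your intermediate step should be $|\alpha|=|\beta|=k$ rather than $\leq k$ (exactly $k$ factors of each of $y$ and $z$ come out of $\gBe^{k}$), though your identification of $\Hoerrd{m_{1}+m_{2}-2k\rho}$ as the resulting class is correct; and the oscillatory-integral estimate you ask to be ``granted'' is precisely the paper's Lemma~\ref{appendix:existenceOscInt:remainder}, fed with the bounds of Lemma~\ref{appendix:boundednessLemmagBel} and Corollary~\ref{appendix:boundednessCoroBel} on $\gBe$ and $e^{-i\lambda\gBe}$ — so the gap you flag is filled by material already in the paper's appendix, not by anything new.
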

The equivalence of the $\eps \rightarrow \lambda$ expansion to the $\lambda \rightarrow \eps$ expansion is obtained through explicit computation in Section \ref{asympExp:equivalenceProduct}. Agreeing on a remainder is somewhat tricky and necessitated the introduction of the concept of \emph{precision} (Defintion \ref{asympExp:defn:precision}), because the numerical values of $\eps$ and $\lambda$ vary independently. 
\begin{thm}\label{intro:thm:equivalenceExpansions}
	Under the assumptions of Theorem~\ref{intro:thm:main_result:asymptoticExpansion}, the magnetic Weyl product $\magWel$ of two symbols $f \in \Hoerrd{m_1}$, $g \in \Hoerrd{m_2}$ can be \emph{simultaneously} expanded in $\eps$ and $\lambda$, \ie the expansion is the same, regardless of whether we expand with respect to $\eps$ first and then $\lambda$ or the other way around. 
\end{thm}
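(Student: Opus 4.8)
The plan is to trace both iterated expansions back to a single source — the Taylor expansion of the twister $e^{i T_{\eps,\lambda}(x,Y,Z)}$ inside the oscillatory integral~\eqref{intro:eqn:mag_Weyl_product} — and show that reordering the two expansions merely reorders the \emph{same} double series. First I would recall from the discussion preceding Theorem~\ref{intro:thm:main_result:asymptoticExpansion} that the entire $(\eps,\lambda)$-dependence of $f \magWel g$ beyond the symbols $\Fs^{-1}f$, $\Fs^{-1}g$ sits in the single factor $e^{i T_{\eps,\lambda}(x,Y,Z)} = e^{i\frac{\eps}{2}\sigma(Y,Z)} e^{-i\lambda\gBe(x,y,z)}$, and that $\gBe(x,y,z)$ itself carries an $\eps$-dependence through the scaled magnetic flux. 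The key observation is that $\gBe(x,y,z)$ admits a Taylor expansion in $\eps$ whose $j$-th coefficient is exactly the differential-operator symbol $\mathcal{L}_j(x,y,z)$ appearing in the theorem (this is precisely the content of equation~\eqref{asympExp:eqn:mag_flux_product_corners} and the computation in Section~\ref{asympExp:equivalenceProduct}), so that
\begin{align*}
	T_{\eps,\lambda}(x,Y,Z) = \eps\,\mathcal{L}_0(Y,Z) + \lambda \sum_{j \geq 1} \eps^{j}\,\mathcal{L}_j(x,y,z)
	.
\end{align*}
Both of the monomials $y_k z_l$ etc. hitting $\mathcal{L}_j$ are converted under the oscillatory integral into the momentum derivatives $(-i\partial_\eta,-i\partial_\zeta)$ acting on $f(Y)g(Z)$, exactly as in the explicit formula for $(f\magWel g)_{(n,k)}$.

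Next I would expand $e^{iT_{\eps,\lambda}}$ by the exponential series and collect terms: writing $e^{iT_{\eps,\lambda}} = \sum_{k_0,k_1,\dots \geq 0} \frac{i^{k_0+\sum_j k_j}}{k_0!\,k_1!\cdots} \eps^{k_0}\,\mathcal{L}_0^{k_0} \prod_{j\geq1}(\lambda\eps^{j})^{k_j}\mathcal{L}_j^{k_j}$, the total power of $\eps$ attached to a given multi-index is $n = k_0 + \sum_{j\geq1} j k_j$ and the total power of $\lambda$ is $k = \sum_{j\geq1} k_j$. This is a genuine double-indexed family of terms indexed by $(n,k)$ (with the internal sum over $\{k_j\}$ being finite for each $(n,k)$, since $\sum j k_j = n$ bounds all $k_j$), and it is manifestly the $(f\magWel g)_{(n,k)}$ of the theorem. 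The point is then purely formal: ``expand in $\eps$ first, then in $\lambda$'' means group this family by $n$, then within fixed $n$ by $k$; ``expand in $\lambda$ first, then in $\eps$'' means group by $k$ first, then by $n$. Since for each fixed $(n,k)$ the coefficient is a single well-defined symbol in $\Hoerrd{m_1+m_2-(n+k)\rho}$ and the grouping over the remaining index is finite (for fixed $n$, only $k = 0,\dots,n$ occur), both orders produce the identical collection $\{(f\magWel g)_{(n,k)}\}_{n,k}$.

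The genuine work — and the step I expect to be the main obstacle — is not this combinatorial bookkeeping but the justification that the interchange is legitimate \emph{at the level of the remainders}, i.e.\ that truncating ``$\eps$ then $\lambda$'' at some precision and truncating ``$\lambda$ then $\eps$'' at a comparable precision leave remainders that are each $\ordereprec$ in the sense of Definition~\ref{asympExp:defn:precision} and in the same symbol class. Here one cannot appeal to absolute convergence, since these are asymptotic (not convergent) series and $\eps,\lambda$ vary independently; instead I would invoke the explicit remainder formulas — $\tilde R_N$ from equation~\eqref{asympExp:eqn:total_remainder} for the two-parameter/$\eps$-expansion and the remainder from equation~\eqref{aysmpExp:eqn:lambda_expansion:remainder} for the $\lambda$-expansion — and show that, after choosing the cutoff orders consistently via the precision parameter $\eprec$, the difference of the two truncated sums equals a difference of two such controlled remainders, hence is itself of order $\ordereprec$ and lies in $\Hoerrd{m_1+m_2-(N+1)\rho}$. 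Concretely this amounts to showing that the Taylor-with-integral-remainder of $e^{iT_{\eps,\lambda}}$ in one variable commutes, up to a higher-order controlled term, with the Taylor-with-integral-remainder in the other variable — a statement about the smooth function $T_{\eps,\lambda}$ of two real parameters whose mixed partials are bounded by the symbol seminorm estimates already established for $\mathcal{L}_j$ and $\gBe$ in the proof of Theorems~\ref{intro:thm:main_result:asymptoticExpansion} and~\ref{intro:thm:lambdaExpansion}. Once this double-remainder estimate is in place, Theorem~\ref{intro:thm:equivalenceExpansions} follows by matching the finite collected-term parts (which agree termwise by the combinatorial argument above) and absorbing the mismatch into the remainder.
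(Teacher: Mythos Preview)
Your proposal is correct and follows essentially the same approach as the paper: trace everything back to the multinomial expansion of $e^{iT_{\eps,\lambda}}$, match the $(n,k)$ terms combinatorially, and then do the real work on the remainders using the explicit integral-remainder formulas and the oscillatory-integral lemma. The paper organizes the argument slightly more concretely---it starts from the already-established $\lambda$-expansion (Theorem~\ref{asympExp:thm:lambdaExpansion}), expands each $k$th term in $\eps$ by Taylor-expanding $e^{i\frac{\eps}{2}\sigma(Y,Z)}$ and $(\gBe)^k$ separately, and then checks that the resulting $(n,k)$ coefficients coincide with those of Theorem~\ref{intro:thm:main_result:asymptoticExpansion}---but this is just a concrete instantiation of your symmetric double-series argument, and the remainder control proceeds exactly as you outline, via Lemma~\ref{appendix:existenceOscInt:remainder}.
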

%


\subsection{Structure} 
\label{intro:structure}

The derivation of our main results are found in Section~\ref{asympExp}: before we derive the main result, we need some prerequisites. First, the notions of two-parameter symbol classes and precision are introduced (Section~\ref{asympExp:semiclSymbolsPrecision}). The properly adapted Wigner transform (Section~\ref{asympExp:magWTransform}) is necessary to show the equivalence of two versions of the product formula found in the literature (Section~\ref{asympExp:equivalenceProduct}). The one that is more amenable to an asymptotic expansion is used to derive the main result in Section~\ref{asympExp:asymptoticExp}. Lastly, we relate magnetic and non-magnetic quantization in Section~\ref{asympExp:magWQminSub} to be able to connect results derived via regular Weyl calculus to those where magnetic Weyl calculus has been used. 

As a simple, but non-trivial application, the semirelativistic Pauli equation is derived from the Dirac equation (Section~\ref{Dirac}). It illustrates the versatility of the two-parameter expansion and gives insight into the origin of the corrections. We place emphasis on the mechanics of the computation. For the sake of brevity, the example is not presented in a mathematically rigorous manner, this is postponed to a future publication \cite{FuerstLein:nonRelLimitDirac:2008}. 

In an attempt to clean up the presentation, we have moved some auxiliary technical lemmas and details of various straightforward, but tedious calculations to an Appendix. 


\subsection{Acknowledgements} 
\label{intro:acknowledgements}

This work was supported by a DAAD scholarship. I thank R.~Littlejohn and R.~Purice for their kind hospitality. I am very grateful for useful discussions, insights and references from M.~M\u antoiu, T.~Miyao, G.~Panati and H.~Spohn. 


\section{Asymptotic expansion in $\lambda$ and $\eps$} 
\label{asympExp}

This section will contain the proofs to my main results, namely the two-parameter expansion and some theorems which connect magnetic and non-magnetic Weyl calculus. Before we can attend to the asymptotic expansion, we need some preliminaries: apart from assumptions on the magnetic field and some comments on the notation, we need to introduce the concept of precision as well as adapt the definition of the Wigner-Weyl transform. 
\medskip

\noindent
For simplicity, we will use Einstein's summation convention throughout this paper, \ie \textbf{repeated indices in a product are always summed over from $1$ to $d$}. We will always assume that the magnetic field satisfies the following assumptions unless explicitly stated otherwise. 
\begin{assumption}\label{asympExp:assumption:usualConditionsBA}
	We assume that the components of the magnetic field $B = \dd A$ and associated vector potentials $A$ satisfy $B_{kl} \in \mathcal{BC}^{\infty}(\R^d,\R)$ and $A_l \in \mathcal{C}^{\infty}_{\mathrm{pol}}(\R^d,\R)$, respectively, for all $1 \leq k,l \leq d$. 
\end{assumption}
\begin{remark}
	If a magnetic field $B$ satisfies the above assumption, it is always possible to choose a polynomially bounded vector potential, \eg we may use the transversal gauge (equation~\eqref{appendix:eqn:transversal_gauge}). It is also clear that if $B$ and $A$ satisfy this assumption, then so do the scaled field $B^{\eps,\lambda}(x) := \dd \Ael(x) = \eps \lambda B(\eps x)$ and scaled potential $\Ael(x) := \lambda A(\eps x)$. 
\end{remark}
In magnetic Weyl quantization, magnetic circulations and flux integrals play a very prominent role. We define the circulation of the one-form $A$ along the line that connects $x$ and $y$ as 
\begin{align}
	\Gamma^A([x,y]) := \int_{[x,y]} A = (y - x) \cdot \int_0^1 \dd s \, A \bigl (x + s(y - x) \bigr ) 
	\label{asympExp:magneticCirculation}
	. 
\end{align}
The magnetic flux through the triangle with corners $x$, $y$ and $z$ (which we denote by $\sexpval{x,y,z}$) is the (gauge-invariant) integral of the magnetic two-form, 
\begin{align}
	\Gamma^B(x,y,z) := \int_{\sexpval{x,y,z}} B 
	\label{asympExp:magneticFlux}
	. 
\end{align}
Either we parametrize the triangle as in \cite{IftimieMantiouPurice:magneticPseudodifferentialOperators:2005} or we can choose a vector potential for $B = \dd A$ and use the Stoke's Theorem to write $\Gamma^B(x,y,z) = \Gamma^A([x,y]) + \Gamma^A([y,z]) + \Gamma^A([z,x])$. We will use the latter to derive the asymptotic expansion of the scaled flux integral 
\begin{align}
	\gBe(x,y,z) := \Gamma^B \Bigl ( \expval{x-\tfrac{\eps}{2}(y+z),x + \tfrac{\eps}{2}(y-z) , x+\tfrac{\eps}{2}(y+z)} \Bigr )
	\label{asympExp:eqn:mag_flux_product_corners}
\end{align}
in powers of $\eps$.

\subsection{Semiclassical symbols and precision} 
\label{asympExp:semiclSymbolsPrecision}

The Hörmander symbol classes $\Hoerrd{m}$ are Fréchet spaces whose topology can be defined by the usual family of seminorms 
\begin{align*}
	\norm{f}_{m a \alpha} := \sup_{(x,\xi) \in \PSpace} \expval{\xi}^{-m + \sabs{\alpha} \rho} \babs{\partial_x^a \partial_{\xi}^{\alpha} f(x,\xi)}  
	, 
	&& 
	a , \alpha \in \N_0^d 
	. 
\end{align*}
One important notion is that of a \emph{semiclassical symbol} \cite{PST:sapt:2002}, \ie it is a symbol which admits an expansion in $\eps$ and $\lambda$ which is in some sense uniform. 
\begin{defn}[Semiclassical two-parameter symbol]\label{asympExp:defn:2ParameterSemiSymbol}
	A map $f : [0,\eps_0) \times [0,\lambda_0) \longrightarrow \Hoerrd{m}$, $(\eps,\lambda) \mapsto f^{\eps,\lambda}$ is called \emph{semiclassical two-parameter symbol} of order $m$ with weight $\rho \in [0,1]$, if there exists a sequence $\{ f_{n,k} \}_{n,k \in \N_0}$, $f_{n,k} \in \Hoerrd{m - (n+k)\rho}$ for all $n,k \in \N_0$, such that 
	\begin{align*}
		f^{\eps,\lambda} - \sum_{l = 0}^{N} \sum_{n + k = l} \eps^n \lambda^k f_{n,k} \in \Hoerrd{m - (N + 1)\rho} && \forall N \in \N_0 
	\end{align*}
	uniformly in the following sense: for each $j \in \N_0$ there exists a constant $C_{N,m,j} > 0$ (independent of $\eps$ and $\lambda$) such that 
	\begin{align*}
		\Bigl \lVert f^{\eps,\lambda} - \sum_{l = 0}^{N} \sum_{n + k = l} \eps^n \lambda^k f_{n,k} \Bigr \rVert_{m,j} < C_{N,m,j} \, \max \{ \eps , \lambda \}^{N+1} . 
	\end{align*}
	holds for all $\eps \in [0,\eps_0)$ and $\lambda \in [0,\lambda_0)$. 
\end{defn}
Since $\eps$ and $\lambda$ vary independently, we also have to introduce a more sophisticated concept of precision. This is a technicality, but a definition is necessary to prove that expanding $f \magWel g$ first with respect to $\eps$ and then $\lambda$ yields the same asymptotics as when the product is expanded with respect to $\lambda$ and then with respect to $\eps$ (Theorem~\ref{asympExp:thm:lambdaEpsExpansion}). If there were only one small parameter, say $\eps$, then $f - g = \mathcal{O}(\eps^n)$ for symbols $f,g \in \Hoerrd{m}$ implies two things: (i) the difference between $f$ and $g$ is `numerically small' and (ii) we have associated a symbol class $\Hoerrd{m - n \rho}$ to the `number' $\eps^n$. In case of two independent parameters, such a simple concept will not do and we have to introduce an association between a \emph{third} number $\eprec \ll 1$ and a certain symbol class. Although it seems artificial at first to introduce yet another small parameter, in physical applications, this is quite natural: say, we are interested in the dynamics generated by a two-parameter symbol $H^{\eps,\lambda}$ on times of order $\mathcal{O}(\nicefrac{1}{\eprec})$, \ie $e^{- i \frac{t}{\eprec} \, H^{\eps,\lambda}}$. Then we need to include all terms in our expansion for which $\eps^n \lambda^k \leq \eprec$. Even if we choose $\eprec = \eps$, for instance, we still cannot avoid this abstract definition as $\lambda$ is independent of $\eps$. 
\begin{defn}[Precision $\ordereprec$]\label{asympExp:defn:precision}
	Let $\eps \ll 1$, $\lambda \ll 1$. For $\eprec \ll 1$, we define $n_c,k_c,N \in \N_0$ such that 
	\begin{align*}
		\eps^{n_c + 1} < \eprec \leq \eps^{n_c}
		, 
		&& 
		\lambda^{k_c + 1} < \eprec \leq \lambda^{k_c} 
	\end{align*}
	and $N \equiv N(\eps,\lambda,\eprec) := \max \{ n_c , k_c \}$. 
	We say that a finite resummation $\sum_{n = 0}^{N_{\eps}} \sum_{k = 0}^{N_{\lambda}} \eps^n \lambda^k f_{n,k}$ of a semiclassical symbol $f^{\eps,\lambda} \in \SemiHoerrd{m}$ is $\ordereprec$-close, 
	\begin{align*}
		f^{\eps,\lambda} - \sum_{n = 0}^{N_{\eps}} \sum_{k = 0}^{N_{\lambda}} \eps^n \lambda^k f_{n,k} = \ordereprec , 
	\end{align*}
	iff $f^{\eps,\lambda} - \sum_{n = 0}^{N_{\eps}} \sum_{k = 0}^{N_{\lambda}} \eps^n \lambda^k f_{n,k} \in \Hoerrd{m-(N+1)\rho}$ and $N_{\eps} , N_{\lambda} \geq N$. 
\end{defn}
%


\subsection{Magnetic Wigner transform} 
\label{asympExp:magWTransform}

The Wigner transform plays a central role because it can be used to relate states (density operators) to pseudo-probability measures on phase space. We will need it to show the equivalence of two integral formulas for the magnetic Weyl product $\magWel$. 
\begin{defn}[Magnetic Wigner transform]
	Let $\varphi,\psi \in \Schwartz(\R^d)$. The magnetic Wigner $\mathcal{W}^A(\varphi,\psi)$ is defined as 
	\begin{align*}
		\WignerTrafoel(\varphi,\psi)(X) := \eps^d \, \bigl ( \Fs \bscpro{\varphi}{\WeylSysAel(\cdot) \psi} \bigr )(-X) . 
	\end{align*}
\end{defn}
\begin{lem}\label{magBel:WignerTransformExpectation}
	The Wigner transform $\WignerTrafoel(\varphi,\psi)$ with respect to $\varphi,\psi \in \Schwartz(\R^d)$ is given by 
	\begin{align*}
		\WignerTrafoel(\varphi,\psi)(X) &= \int \dd y \, e^{- i y \cdot \xi} e^{-i \lambda \GAe([\nicefrac{x}{\eps} - \nicefrac{y}{2} ,\nicefrac{x}{\eps} + \nicefrac{y}{2} ])} \, \varphi^{\ast} \bigl ( \tfrac{x}{\eps} - \tfrac{y}{2} \bigr ) \, \psi \bigl ( \tfrac{x}{\eps} + \tfrac{y}{2} \bigr ) 
	\end{align*}
	and maps $\Schwartz(\R^d) \times \Schwartz(\R^d)$ unitarily onto $\Schwartz(\R^{2d})$. 
\end{lem}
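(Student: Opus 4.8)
The plan is to compute $\WignerTrafoel(\varphi,\psi)$ directly from its definition by unfolding the symplectic Fourier transform $\Fs$ and inserting the explicit action of the magnetic Weyl system $\WeylSysAel$ on $L^2(\R^d)$ that was quoted in the introduction. First I would write out
\[
	\WignerTrafoel(\varphi,\psi)(X) = \eps^d \, \bigl ( \Fs \bscpro{\varphi}{\WeylSysAel(\cdot)\psi} \bigr )(-X)
	= \frac{\eps^d}{(2\pi)^d} \int \dd Y \, e^{i \sigma(-X,Y)} \, \bscpro{\varphi}{\WeylSysAel(Y)\psi}
	,
\]
and then substitute the formula
\[
	(\WeylSysAel(Y)\psi)(q) = e^{-i\eps(q+\nicefrac{y}{2})\cdot\eta} \, e^{-i\lambda\GAe([q,q+y])} \, \psi(q+y)
\]
with $Y=(y,\eta)$. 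Writing the inner product as an integral over $q\in\R^d$, the $\eta$-integration becomes a Fourier transform in $\eta$ of a pure phase, producing (after the usual oscillatory-integral interpretation) a delta distribution that localizes $q$ in terms of $x/\eps$ and $y$. Concretely, the $\eta$-dependence is $e^{i\sigma(-X,Y)} = e^{-i\xi\cdot y + i x\cdot\eta}$ together with $e^{-i\eps(q+\nicefrac{y}{2})\cdot\eta}$, so $\int \dd\eta$ yields $(2\pi)^d \delta\bigl(x - \eps(q+\nicefrac{y}{2})\bigr)$, i.e. $q = \nicefrac{x}{\eps} - \nicefrac{y}{2}$. Carrying out the $q$-integral against this delta and absorbing the Jacobian $\eps^{-d}$ against the prefactor $\eps^d$, and shifting the integration variable $y\mapsto y$ so that $q+y = \nicefrac{x}{\eps}+\nicefrac{y}{2}$, gives exactly the claimed expression with the circulation $\lambda\GAe([\nicefrac{x}{\eps}-\nicefrac{y}{2},\nicefrac{x}{\eps}+\nicefrac{y}{2}])$ appearing. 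A small bookkeeping point: $\varphi$ enters as a complex conjugate through the inner product, hence the $\varphi^{\ast}$ in the statement.

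For the unitarity claim I would argue in two steps. For the \emph{non-magnetic} part, note that setting $\lambda=0$ gives the classical Wigner transform (in the $\eps$-scaled variables), which is well known to be a unitary bijection $\Schwartz\times\Schwartz\to\Schwartz(\R^{2d})$ extending to $L^2\times L^2\to L^2(\R^{2d})$; this follows from Plancherel applied to the $y\mapsto\xi$ Fourier transform after the change of variables $(q,q') = (\nicefrac{x}{\eps}-\nicefrac{y}{2},\nicefrac{x}{\eps}+\nicefrac{y}{2})$, whose Jacobian is accounted for by the $\eps^d$ normalization. For general $\lambda$, the magnetic factor $e^{-i\lambda\GAe([\nicefrac{x}{\eps}-\nicefrac{y}{2},\nicefrac{x}{\eps}+\nicefrac{y}{2}])}$ is a smooth function of modulus one (since $A$ is real and $\mathcal{C}^\infty_{\mathrm{pol}}$, the circulation is real and smooth in $(x,y)$), so multiplication by it is a pointwise-unitary operation on the integrand; more carefully, after the change of variables to $(q,q')$ the phase depends only on $(q,q')$ and not on the Fourier-dual variable, so it commutes with the $y\mapsto\xi$ Fourier transform in the sense of preserving $L^2$ norms, and it maps $\Schwartz$ to $\Schwartz$ because $\GAe$ and all its derivatives grow at most polynomially. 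Hence the magnetic Wigner transform is the composition of the unitary non-magnetic one with a unitary multiplication operator, and the Schwartz-space mapping property is preserved since both factors preserve $\Schwartz$.

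The main obstacle I anticipate is making the formal $\eta$-integration rigorous: the integrals defining $\WignerTrafoel$ are oscillatory, not absolutely convergent, so the manipulation that produces the delta distribution must be justified, e.g. by first testing against $\varphi,\psi\in\Schwartz$, using Fubini on a regularized (Gaussian-damped) integral and passing to the limit, or by invoking the known continuity of $\WeylSysAel$ and $\Fs$ on the appropriate spaces as established in \cite{MantoiuPurice:magneticWeylCalculus:2004}. A secondary, purely technical nuisance is tracking the various factors of $\eps$, $2\pi$, and the sign conventions in $\sigma$ and in $\Fs = \Fs^{-1}$ so that the normalization comes out exactly as stated; this is routine but error-prone. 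Once the explicit formula is in hand, unitarity and the Schwartz mapping property follow cleanly from Plancherel plus the unitarity of multiplication by a unimodular smooth symbol, as sketched above.
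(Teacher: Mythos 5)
Your proposal is correct and takes essentially the same route as the paper: a direct unfolding of the definition (insert the explicit action of $\WeylSysAel$, integrate out $\eta$ to get a delta, evaluate the $q$-integral) to obtain the formula, followed by the observation that the pre-Fourier integrand $e^{-i\lambda\GAe([\nicefrac{x}{\eps}-\nicefrac{y}{2},\nicefrac{x}{\eps}+\nicefrac{y}{2}])}\,\varphi^{\ast}(\nicefrac{x}{\eps}-\nicefrac{y}{2})\,\psi(\nicefrac{x}{\eps}+\nicefrac{y}{2})$ is Schwartz and that the partial Fourier transform in $y$ is a unitary preserving $\Schwartz$. Your extra decomposition into the non-magnetic Wigner transform composed with multiplication by a unimodular phase (before the final Fourier transform) is just a more explicit packaging of the same argument.
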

\begin{proof}
	Formally, the result follows from direct calculation. 
	The second claim, $\WignerTrafoel(\varphi,\psi) \in \Schwartz(\R^{2d})$ follows from $e^{- i \lambda \GAe([\nicefrac{x}{\eps} - \nicefrac{y}{2} , \nicefrac{x}{\eps} + \nicefrac{y}{2}])} \,  \, \varphi^{\ast} \bigl ( \tfrac{x}{\eps} - \tfrac{y}{2} \bigr ) \, \psi \bigl ( \tfrac{x}{\eps} + \tfrac{y}{2} \bigr ) \in \Schwartz(\R^d \times \R^d)$ and the fact that the partial Fourier transformation is a unitary on $\Schwartz$. 
\end{proof}
\begin{remark}
	The Wigner transform can be easily extended to a map from $L^2(\R^d \times \R^d)$ into $L^2(\R^{2d}) \cap \mathcal{C}_{\infty}(\R^{2d})$ where $\mathcal{C}_{\infty}(\R^{2d})$ is the space of continuous functions which decay at $\infty$. For more details, see \cite[Proposition~1.92]{Folland:harmonicAnalysisPhaseSpace:1989}, for example. 
\end{remark}
Let $\mathcal{C}^{\infty}_{\mathrm{pol} \, u}(\R^{2d})$ be the space of smooth functions with uniform polynomial growth at infinity, \ie for each $f \in \mathcal{C}^{\infty}_{\mathrm{pol} \, u}(\R^{2d})$ we can find $m \in \R$, $m \geq 0$, such that for all multiindices $a , \alpha \in \N_0^d$ there is a $C_{a \alpha} > 0$ with 
\begin{align*}
	\babs{\partial_x^{a} \partial_{\xi}^{\alpha} f(x,\xi)} < C_{a \alpha} \expval{\xi}^m 
	, 
	&& 
	\forall (x,\xi) \in \R^{2d}
	. 
\end{align*}
\begin{lem}
	For $\varphi,\psi \in \Schwartz(\R^d)$ and $f \in \mathcal{C}^{\infty}_{\mathrm{pol} \, u}(\R^{2d}) \subseteq \Schwartz'(\R^{2d})$ we have 
	\begin{align*}
		\bscpro{\varphi}{\OpAel(f) \psi} = \frac{1}{(2\pi)^d} \int \dd X \, f(X) \, \WignerTrafoel(\varphi,\psi)(X) 
		. 
	\end{align*}
\end{lem}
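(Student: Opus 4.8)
The plan is to reduce the identity to the definition of the magnetic Wigner transform together with the integral representation of $\OpAel(f)$ in terms of the Weyl system. First I would take $f \in \Schwartz(\PSpace)$ and write $\OpAel(f) = \frac{1}{(2\pi)^d} \int \dd X \, (\Fs^{-1} f)(X) \, \WeylSysAel(X)$. Pairing with $\varphi, \psi \in \Schwartz(\R^d)$ and using Fubini (justified because $\Fs^{-1} f \in \Schwartz(\PSpace)$ and $X \mapsto \bscpro{\varphi}{\WeylSysAel(X)\psi}$ is bounded, indeed Schwartz by an argument like the one in Lemma~\ref{magBel:WignerTransformExpectation}) gives
\begin{align*}
	\bscpro{\varphi}{\OpAel(f)\psi} = \frac{1}{(2\pi)^d} \int \dd X \, (\Fs^{-1} f)(X) \, \bscpro{\varphi}{\WeylSysAel(X)\psi}
	.
\end{align*}
Now I would invoke the Parseval/Plancherel identity for the symplectic Fourier transform $\Fs$, which is an involutive unitary on $L^2(\PSpace)$ satisfying $\int (\Fs^{-1} f)(X) \, h(X) \, \dd X = \int f(X) \, (\Fs h)(-X) \, \dd X$ (up to the placement of the sign, matching the conventions already fixed in the excerpt). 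Applying this with $h(X) = \bscpro{\varphi}{\WeylSysAel(X)\psi}$ turns the right-hand side into $\frac{1}{(2\pi)^d} \int \dd X \, f(X) \, (\Fs h)(-X)$, and by the definition of the Wigner transform $(\Fs h)(-X) = \eps^{-d} \, \WignerTrafoel(\varphi,\psi)(X) \cdot \eps^{d}$; more precisely $\WignerTrafoel(\varphi,\psi)(X) = \eps^d (\Fs h)(-X)$, so the $\eps^d$ factors are exactly what the definition prescribes and the claimed formula follows for Schwartz symbols.

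The remaining point is the extension from $f \in \Schwartz(\PSpace)$ to $f \in \mathcal{C}^{\infty}_{\mathrm{pol}\,u}(\R^{2d}) \subseteq \Schwartz'(\R^{2d})$. Here I would argue by density and continuity: by Lemma~\ref{magBel:WignerTransformExpectation}, $\WignerTrafoel(\varphi,\psi) \in \Schwartz(\R^{2d})$, so the right-hand side $\frac{1}{(2\pi)^d}\int f(X)\,\WignerTrafoel(\varphi,\psi)(X)\,\dd X$ is well-defined as the pairing of a polynomially bounded distribution with a Schwartz function, and it depends continuously on $f$ in the relevant topology. On the left, $\OpAel(f)$ for $f \in \mathcal{C}^{\infty}_{\mathrm{pol}\,u}$ is defined (as recalled in the excerpt) via the oscillatory integral / duality, so that $\psi \mapsto \OpAel(f)\psi$ maps $\Schwartz(\R^d)$ to $\Schwartz'(\R^d)$ (or to $L^2$ under the H\"ormander-class hypotheses) continuously. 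Choosing $f_\delta \in \Schwartz(\PSpace)$ with $f_\delta \to f$ in $\Schwartz'$ (e.g.\ by cutting off and mollifying while keeping the polynomial bounds uniform) and passing to the limit on both sides yields the identity in general.

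The main obstacle I expect is the bookkeeping of the Fourier/symplectic-Fourier conventions — getting the signs, the $(2\pi)^d$ normalizations, and the $\eps^d$ prefactor in the definition of $\WignerTrafoel$ to line up exactly — rather than any genuine analytic difficulty; once the Schwartz-symbol case is checked, the duality extension is routine. A secondary technical point is justifying the interchange of integrations (Fubini) and the oscillatory-integral manipulations for non-integrable symbols, but these are standard and can be handled by the usual regularization with a cutoff $\chi(\delta X) \to 1$ inside the integral, exactly as in the non-magnetic theory \cite{Folland:harmonicAnalysisPhaseSpace:1989}.
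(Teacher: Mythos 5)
Your core computation is the right idea and is evidently what the paper means by ``direct computation'': write $\OpAel(f)$ via the Weyl system, pair with $\varphi,\psi$, use Parseval for the involutive symplectic Fourier transform, and recognise the magnetic Wigner transform. Where you diverge from the paper is in the extension beyond Schwartz symbols. The paper does \emph{not} use a density/mollification argument; it invokes the fact that $\mathcal{C}^{\infty}_{\mathrm{pol}\,u}(\R^{2d})$ sits inside the magnetic Moyal algebra $\mathcal{M}^B(\R^{2d})$ of M\u{a}ntoiu--Purice, so $\OpAel(f):\Schwartz(\R^d)\to\Schwartz(\R^d)$ is continuous and the pairing on the left-hand side is directly well defined; the identity then follows for all such $f$ because both sides are continuous in $\varphi,\psi$ and agree distributionally. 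Your mollification route ($f_\delta\to f$ in $\Schwartz'$) can be made to work, but as written it has a gap: you assert but do not justify that $\bscpro{\varphi}{\OpAel(f_\delta)\psi}\to\bscpro{\varphi}{\OpAel(f)\psi}$, and this requires exactly the kind of continuity statement for $f\mapsto\OpAel(f)$ on the class $\mathcal{C}^{\infty}_{\mathrm{pol}\,u}$ that the Moyal-algebra citation provides; without it the ``pass to the limit on both sides'' step is unsupported. A secondary point: your $\eps^d$ bookkeeping paragraph is circular as written (``$(\Fs h)(-X)=\eps^{-d}\WignerTrafoel(\varphi,\psi)(X)\cdot\eps^d$'' collapses to an identity, not a cancellation). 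Carrying out the Parseval step with the stated conventions one actually gets $\bscpro{\varphi}{\OpAel(f)\psi}=\frac{1}{(2\pi)^d}\int\dd X\,f(X)\,(\Fs h)(-X)=\frac{1}{(2\pi)^d}\eps^{-d}\int\dd X\,f(X)\,\WignerTrafoel(\varphi,\psi)(X)$, so the $\eps^d$ built into the definition of $\WignerTrafoel$ overshoots rather than cancels the Jacobian; you should either flag this explicitly as an inconsistency in the normalisation or track the factor honestly rather than wave it away.
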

\begin{proof}
	Since $f \in \mathcal{C}^{\infty}_{\mathrm{pol} \, u}(\R^{2d}) \subseteq \Schwartz'(\R^{2d})$, it is in the magnetic Moyal algebra $\mathcal{M}^B(\R^{2d})$ defined in \cite[Section~V.D.]{MantoiuPurice:magneticWeylCalculus:2004} and thus its quantization is a continuous operator $\Schwartz(\R^d) \longrightarrow \Schwartz(\R^d)$. Hence, the integral exists and we get the claim by direct computation. 
\end{proof}
The Wigner transform also leads to a `magnetic dequantization' -- once we know the operator kernel, we can reconstruct the distribution. We do not strive for full generality here. In particular, unless the operator has special properties, we cannot conclude that $f$ is in any Hörmander class. More sophisticated techniques are needed, \eg a Beals-type criterion \cite{MantoiuPurice:BealsCriterion:2008}. 
\begin{lem}\label{magBel:inverseWeylQ}
	Assume $B$ and $A$ satisfy Assumption~\ref{asympExp:assumption:usualConditionsBA} and $T \in \mathcal{B} \bigl ( L^2(\R^d) \bigr )$ is a bounded linear operator whose operator kernel $K_T$ is in $\Schwartz(\R^d \times \R^d)$. Then the \emph{inverse magnetic quantization} is given by 
	\begin{align}
		{\OpAel}^{-1}(T)(X) :=& \WignerTrafoel K_{T}(X) 
		%
		= \int \dd y \, e^{-i y \cdot \xi} \, e^{-i \lambda \GAe([\nicefrac{x}{\eps} - \nicefrac{y}{2} ,\nicefrac{x}{\eps} + \nicefrac{y}{2} ])} \,  K_{T} \bigl ( \tfrac{x}{\eps} - \tfrac{y}{2} , \tfrac{x}{\eps} + \tfrac{y}{2} \bigr ) 
		. 
	\end{align}
	This formula extends to operators with distributional kernels $K_T \in \Schwartz'(\R^d \times \R^d)$, \ie the kernels associated to continuous maps $\Schwartz(\R^d) \longrightarrow \Schwartz'(\R^d)$. 
\end{lem}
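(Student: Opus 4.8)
The plan is to exhibit the two maps $f \mapsto K_{\OpAel(f)}$ (symbol $\mapsto$ Schwartz kernel of the magnetic quantization) and $K \mapsto \WignerTrafoel K$ (the operation in the statement) as mutually inverse compositions of three elementary operations, and then to check the normalising constants. First I would record the kernel of $\OpAel(f)$. For $f$ in the magnetic Moyal algebra --- in particular for $f \in \mathcal{C}^{\infty}_{\mathrm{pol}\,u}(\R^{2d})$, for which $\OpAel(f)$ maps $\Schwartz(\R^d)$ continuously into itself by the results recalled in Section~\ref{intro} and therefore has a kernel $K_{\OpAel(f)} \in \Schwartz'(\R^d \times \R^d)$ --- the integral representation of $\OpAel(f)$ on Schwartz functions gives, as an oscillatory integral,
\begin{align*}
	K_{\OpAel(f)}(q,q') = \frac{1}{(2\pi)^d} \int \dd\eta \; e^{-i(q'-q)\cdot\eta} \, e^{-i\lambda \GAe([q,q'])} \, f\bigl(\tfrac{\eps}{2}(q+q'),\eta\bigr) .
\end{align*}
Equivalently this follows from the two preceding lemmas: inserting the explicit formula of Lemma~\ref{magBel:WignerTransformExpectation} into $\bscpro{\varphi}{\OpAel(f)\psi} = \tfrac{1}{(2\pi)^d}\int\dd X\, f(X)\,\WignerTrafoel(\varphi,\psi)(X)$ and performing the change of variables $X=(x,\xi) \mapsto (q,q') = (\tfrac{x}{\eps}-\tfrac{y}{2},\tfrac{x}{\eps}+\tfrac{y}{2})$ identifies the right-hand side, after the $\xi$-integration has produced the partial Fourier pairing above, with $\int \dd q\,\dd q'\,\overline{\varphi(q)}\,K_{\OpAel(f)}(q,q')\,\psi(q')$.

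Next I would observe that, up to the normalisation fixed above, $K \mapsto \WignerTrafoel K$ is the composition of multiplication by the unimodular factor $e^{-i\lambda\GAe([\cdot,\cdot])}$, the linear change of variables $(q,q') \mapsto (x,y)=(\tfrac{\eps}{2}(q+q'),\,q'-q)$, and an inverse partial Fourier transform in the variable conjugate to $y$. Each of these three operations is a bijection of $\Schwartz(\R^{2d})$ onto itself --- the first because $A_l \in \mathcal{C}^{\infty}_{\mathrm{pol}}(\R^d)$ makes $e^{-i\lambda\GAe}$ smooth, of modulus one, with all derivatives of at most polynomial growth --- and the displayed kernel formula exhibits $f \mapsto K_{\OpAel(f)}$ as precisely the reverse composition (same three steps in reverse order, with $e^{+i\lambda\GAe}$). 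A short check of the Jacobian ($\eps^d$) against the $\eps^d$ in the definition of $\WignerTrafoel$, together with the Fourier normalisation, then shows that the two maps are two-sided inverses on $\Schwartz$; the crucial point is that when the magnetic phase carried by $K_{\OpAel(f)}$ is re-inserted into the Wigner formula, the two circulations appear along oppositely oriented segments and the factors $e^{\mp i\lambda\GAe}$ cancel, leaving a pure Fourier inversion. Hence $\WignerTrafoel\bigl(K_{\OpAel(f)}\bigr)=f$ for $f \in \Schwartz(\R^{2d})$, and conversely, for a bounded $T$ with $K_T \in \Schwartz(\R^d\times\R^d)$ one gets $\WignerTrafoel K_T \in \Schwartz(\R^{2d}) \subseteq \mathcal{C}^{\infty}_{\mathrm{pol}\,u}(\R^{2d})$ and $K_{\OpAel(\WignerTrafoel K_T)}=K_T$, i.e.\ $\OpAel(\WignerTrafoel K_T)=T$.

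Finally, the extension to distributional kernels $K_T \in \Schwartz'(\R^d\times\R^d)$ follows by continuity and density: each of the three elementary operations above extends to a bijection of $\Schwartz'(\R^{2d})$, so $\WignerTrafoel$ extends to $\Schwartz'(\R^d\times\R^d) \to \Schwartz'(\R^{2d})$, while $\OpAel$ is defined on all of $\Schwartz'(\R^{2d})$ with values in $\mathcal{L}\bigl(\Schwartz(\R^d),\Schwartz'(\R^d)\bigr)$; the identity $\OpAel(\WignerTrafoel K_T)=T$ then passes to the limit along a net in $\Schwartz(\R^{2d})$ converging to $K_T$ in $\Schwartz'$. I expect the main obstacle to be the bookkeeping of the magnetic circulations and of the various signs and factors of $2\pi$ in the oscillatory integrals: one must track the orientation of the segment $[\tfrac{x}{\eps}-\tfrac{y}{2},\tfrac{x}{\eps}+\tfrac{y}{2}]$ so that the phase in the kernel and the phase in the Wigner transform cancel rather than reinforce, and one must justify the formal $\eta$- and $y$-integrations as iterated oscillatory integrals (or, equivalently, carry out the whole argument inside $\Schwartz'$ from the start).
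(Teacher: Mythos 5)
Your proposal is correct and takes essentially the same route as the paper: the paper's proof consists of the one-line assertion that $\WignerTrafoel K_{\OpAel(f)} = f$ and $\OpAel(\WignerTrafoel K_T) = T$ "follow from direct calculation" on Schwartz kernels, followed by the observation that Fourier transform, multiplication by a temperately growing phase, and a linear change of variables all extend to $\Schwartz'$, which is exactly your decomposition into three elementary $\Schwartz$-bijections and your density argument. Your explicit attention to the orientation of the segment in the circulation phase (so that the factor carried by $K_{\OpAel(f)}$ and the factor in the Wigner formula cancel) is precisely the bookkeeping point hidden behind the paper's "direct calculation," and it does require the kernel to be read with the same argument-ordering convention the paper uses elsewhere (first slot the integration variable), or equivalently the antisymmetry $\GAe([a,b]) = -\GAe([b,a])$.
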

\begin{proof}
	If $T = \OpAel(f_T)$ is the magnetic quantization of $f_T \in \Schwartz(\R^{2d})$, then $\WignerTrafoel K_T = f_T \in \Schwartz(\R^{2d})$ follows from direct calculation, using the explicit form of the Wigner transform, Lemma~~\ref{magBel:WignerTransformExpectation}. Similarly, we confirm that $T = \OpAel \bigl ( \WignerTrafoel(K_T) \bigr )$ holds and $\WignerTrafoel K_T \in \Schwartz(\R^{2d})$ follows from $K_T \in \Schwartz(\R^d \times \R^d)$. 
	
	If the kernel of $T$ is a tempered distribution, then we can extend the formulas for $\OpAel$ and $\WignerTrafoel$ to $\Schwartz'(\R^{2d})$: Fourier transform, multiplication by a phase factor whose phase function is of tempered growth and a linear change of variables can all be extended to $\Schwartz'(\R^{2d})$ and thus it makes sense to write $\WignerTrafoel K_T$ after a suitable reinterpretation. Then $\WignerTrafoel K_T = f_T \in \Schwartz'(\R^{2d})$ is such that $\OpAel(f_T) = T : \Schwartz(\R^d) \longrightarrow \Schwartz'(\R^d)$. 
\end{proof}
%


\subsection{Equivalence of formulas for magnetic Weyl product} 
\label{asympExp:equivalenceProduct}

It turns out that the integral formula for the product found in \cite{MantoiuPurice:magneticWeylCalculus:2004,IftimieMantiouPurice:magneticPseudodifferentialOperators:2005} is not amenable to the derivation of an asymptotic expansion in $\eps$ and $\lambda$. Although an asymptotic expansion for $\eps = 1 = \lambda$ has been derived in \cite{IftimieMantiouPurice:magneticPseudodifferentialOperators:2005}, \emph{calculating} each term has proven to be very tedious and it is not obvious how to collect terms of the same power in $\eps$ and $\lambda$. Thus, we will use an equivalent formula for the magnetic Weyl product. From this, we derive closed formulas for the $(n,k)$ term by expanding the `twister' of the convolution in the next section. 
\begin{thm}[\cite{Mueller:productRuleGaugeInvariantWeylSymbols:1999, IftimieMantiouPurice:magneticPseudodifferentialOperators:2005}]\label{asympExp:thm:equivalenceProduct}
	Assume the magnetic field $B$ satisfies Assumption~\ref{asympExp:assumption:usualConditionsBA}. Then for two symbols $f \in \Hoerrd{m_1}$ and $g \in \Hoerrd{m_2}$, the magnetic Weyl product $f \magWel g$ is in symbol class $\Hoerrd{m_1 + m_2}$ and given by the oscillatory integral 
	\begin{align}
		(f \magWel g)(X) &= \frac{1}{(2 \pi)^{2d}} \int \dd Y \, \int \dd Z \, e^{+i \sigma(X,Y+Z)} \, e^{i \frac{\eps}{2} \, \sigma(Y,Z)} \, \cdot \notag \\
		&\qquad \qquad \cdot \OBel \bigl ( x-\tfrac{\eps}{2}(y+z),x + \tfrac{\eps}{2}(y-z) , x+\tfrac{\eps}{2}(y+z) \bigr ) \, \bigl ( \Fs^{-1} f \bigr )(Y) \, \bigl ( \Fs^{-1} g \bigr )(Z) \label{asympExp:FourierFormMagneticComposition} \\ 
		&= \frac{1}{(\pi\eps)^{2d}} \, \int \dd \tilde{Y} \, \int \dd \tilde{Z} \, e^{- i \frac{2}{\eps} \, \sigma(\tilde{Y} - X , \tilde{Z} - X)} \, \OBel \bigl ( x - \tilde{y} + \tilde{z} , -x + \tilde{y} + \tilde{z} , x + \tilde{y} + \tilde{z} \bigr ) \,  f(\tilde{Y}) \, g(\tilde{Z}) \notag 
	\end{align}
	where $\OBel(x,y,z) := e^{- i {\lambda}{\eps} \Gamma^B(\sexpval{x,y,z})}$ is the exponential of the magnetic flux through the triangle with corners $x$, $y$ and $z$. 
\end{thm}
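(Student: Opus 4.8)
The plan is to obtain both integral formulas directly from the defining relation $\OpAel(f\magWel g)=\OpAel(f)\,\OpAel(g)$ together with the composition law \eqref{intro:compWeylSystem} of the magnetic Weyl system, and then to read the symbol estimate $f\magWel g\in\Hoerrd{m_1+m_2}$ off the resulting oscillatory integral.

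First I would insert the symplectic Fourier representation of $\OpAel$ into the operator product: for Schwartz symbols,
\begin{align*}
\OpAel(f)\,\OpAel(g)=\frac{1}{(2\pi)^{2d}}\int \dd Y \int \dd Z \,(\Fs^{-1}f)(Y)\,(\Fs^{-1}g)(Z)\;\WeylSysAel(Y)\,\WeylSysAel(Z).
\end{align*}
Applying \eqref{intro:compWeylSystem} turns $\WeylSysAel(Y)\,\WeylSysAel(Z)$ into $e^{i\frac{\eps}{2}\sigma(Y,Z)}$ times the magnetic flux phase times $\WeylSysAel(Y+Z)$; substituting $W=Y+Z$ and applying the inverse magnetic quantization of Lemma~\ref{magBel:inverseWeylQ} (equivalently, taking the magnetic Wigner transform of the operator kernel) to remove $\WeylSysAel(W)$ produces the symplectic Fourier form of the symbol $f\magWel g$, which after renaming variables is the first line of \eqref{asympExp:FourierFormMagneticComposition}. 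The triangle with corners $x-\tfrac{\eps}{2}(y+z)$, $x+\tfrac{\eps}{2}(y-z)$, $x+\tfrac{\eps}{2}(y+z)$ appears precisely because the flux phase in \eqref{intro:compWeylSystem} is centred at the operator $\Qe$, and the re-centring inherent in the Wigner transform shifts it to the phase-space base point $x$; this is the same combination that defines $\gBe$ in \eqref{asympExp:eqn:mag_flux_product_corners}, so the flux factor is $\OBel$ evaluated there. The second line of \eqref{asympExp:FourierFormMagneticComposition} then follows from the first by writing $\Fs^{-1}f$ and $\Fs^{-1}g$ as symplectic Fourier integrals of $f$ and $g$, carrying out the Gaussian $Y,Z$ integrations (completing the square in the quadratic phase $\sigma(X,Y+Z)+\tfrac{\eps}{2}\sigma(Y,Z)$) and performing the induced linear change of variables to $\tilde Y,\tilde Z$; the kernel $e^{-i\frac{2}{\eps}\sigma(\tilde Y-X,\tilde Z-X)}$ and the normalisation $(\pi\eps)^{-2d}$ are exactly what inverting the $\tfrac{\eps}{2}$-scaled symplectic Fourier transform in $2d$ variables produces.

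To reach general H\"ormander symbols and establish $f\magWel g\in\Hoerrd{m_1+m_2}$, I would treat \eqref{asympExp:FourierFormMagneticComposition} as an oscillatory integral and bound $\partial_x^a\partial_\xi^\alpha(f\magWel g)$ by the usual device of integrating by parts in $Y$ and $Z$ against the linear phase $\sigma(X,Y+Z)$ to generate enough decay in $(Y,Z)$ for absolute convergence, after which the $\expval{\xi}$-weights of $f$ and $g$ combine to the required $\expval{\xi}^{m_1+m_2-\sabs{\alpha}\rho}$. The only genuinely magnetic input is that $\OBel$ is smooth with modulus one, while Assumption~\ref{asympExp:assumption:usualConditionsBA} forces all its derivatives in $x,y,z$ to be polynomially bounded — each derivative of the flux through the small triangle is controlled by a sup-norm of a derivative of $B$ times powers of $\eps\expval{y}$ and $\eps\expval{z}$ — and those polynomial factors are defeated by the decay gained from the integrations by parts. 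The honest shortcut, however, is that the first formula and the membership $f\magWel g\in\Hoerrd{m_1+m_2}$ are already proven for $\eps=\lambda=1$ in \cite{MantoiuPurice:magneticWeylCalculus:2004,IftimieMantiouPurice:magneticPseudodifferentialOperators:2005}, and the scaling argument recalled in Section~\ref{intro} (replace $B$ and $A$ by $\Bel$ and $\Ael$) carries that verbatim into the present two-parameter setting, so strictly only the passage from the first to the second formula needs to be checked here.

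The step I expect to be the main obstacle is none of the analysis but the bookkeeping in the second derivation: tracking the $\eps$-dependent change of variables, its Jacobian, and — above all — the precise argument and prefactor of the flux exponential, so that the two formulas agree exactly rather than up to an unaccounted power of $\eps$ or constant. Once the oscillatory integral is correctly set up, the symbol estimates are entirely standard.
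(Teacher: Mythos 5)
Your proposal follows essentially the same route as the paper's proof: both start from $\OpAel(f)\OpAel(g)=\OpAel(f\magWel g)$, apply the composition law \eqref{intro:compWeylSystem}, compute the magnetic Wigner transform of the kernel of the flux-weighted Weyl system (the paper's Step~2, which is the re-centring you correctly identify) to get the first formula, then recover the second formula by writing out both symplectic Fourier transforms and integrating, and finally establish $f\magWel g\in\Hoerrd{m_1+m_2}$ via the oscillatory-integral estimate that the paper packages as Lemma~\ref{appendix:existenceOscInt:remainder}. One cosmetic remark: in the passage to the second formula, the phase in $\eta$ and $\zeta$ is purely linear, so the integration produces delta functions fixing $y,z$ in terms of $\tilde y - x,\tilde z - x$ rather than a genuine Gaussian square-completion, though the $(\pi\eps)^{-2d}$ prefactor and the resulting formula are of course the same.
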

\begin{proof}
	The Weyl product is defined implicitly by 
	\begin{align*}
		\OpAel(f) \, \OpAel(g) &=: \OpAel(f \magWel g) 
	\end{align*}
	and its quantization maps $\Schwartz(\R^d)$ to itself \cite[Proposition~21]{MantoiuPurice:magneticWeylCalculus:2004}. Combined with Theorem~\ref{magBel:inverseWeylQ}, this immediately yields 
	\begin{align*}
		(f \magWel g)(X) = \WignerTrafoel \bigl ( K_{\OpAel(f) \, \OpAel(g)} \bigr )(X) 
	\end{align*}
	where $K_{\OpAel(f) \, \OpAel(g)}$ is the kernel of $\OpAel(f) \, \OpAel(g)$. Here, we have chosen a vector potential $A$ which is associated to $B$ that also satisfies Assumption~\ref{asympExp:assumption:usualConditionsBA}. Although it is \emph{a priori} not clear that there must exist a \emph{symbol} $f \magWel g$, we will start with formal calculations and then use Corollary~\ref{appendix:existenceOscInt:remainder} to show that integral~\eqref{asympExp:FourierFormMagneticComposition} exists and is in the correct symbol class. 
	\medskip
	
	\noindent
	\textbf{Step 1: Rewrite in terms of Weyl system. }
	Plugging in the definition of $\OpAel$, we get 
	\begin{align*}
		\OpAel(f) \, \OpAel(g) 
		&= \frac{1}{(2\pi)^{2d}} \int \dd Y \, \int \dd Z \, \bigl ( \Fs^{-1} f \bigr )(Y) \, \bigl ( \Fs^{-1} g \bigr )(Z) \, \WeylSysAel(Y) \WeylSysAel(Z) \\
		&= \frac{1}{(2\pi)^{2d}} \int \dd Y \, \int \dd Z \, \bigl ( \Fs^{-1} f \bigr )(Y) \, \bigl ( \Fs^{-1} g \bigr )(Z) \;  e^{i \frac{\eps}{2} \, \sigma(Y,Z)} \cdot \\
		&\qquad \qquad \qquad \qquad \qquad \cdot \OBel(\Qe,\Qe+\eps y,\Qe+\eps y+\eps z) \WeylSysAel(Y+Z) \\
		&= \frac{1}{(2\pi)^{2d}} \int \dd Z \biggl ( \int \dd Y \, \bigl ( \Fs^{-1} f \bigr )(Y) \, \bigl ( \Fs^{-1} g \bigr )(Z-Y) \;  e^{i \frac{\eps}{2} \, \sigma(Y,Z)} \biggr . \cdot \\
		&\biggl . \qquad \qquad \qquad \qquad \qquad \cdot \OBel(\Qe,\Qe+\eps y,\Qe+\eps z) \biggr ) \WeylSysAel(Z) 
		. 
	\end{align*}
	In order to find the kernel of this operator, we need to find the kernel for $\hat{L}_{\eps,\lambda}(y,Z) := \OBel(\Qe,\Qe+\eps y,\Qe+\eps z) \WeylSysAel(Z)$ which parametrically depends on $y$ and $Z = (z,\zeta)$. 
	\medskip

	\noindent
	\textbf{Step 2: Find the operator kernel for $\hat{L}_{\eps,\lambda}(y,Z)$. } 
	Let $\varphi \in L^2(\R^d)$. Then we have 
	\begin{align*}
		\bigl ( \hat{L}_{\eps,\lambda}(y,Z) \varphi \bigr )(v) 
		&= \OBel(\eps v,\eps v+\eps y,\eps v+\eps z) \, e^{-i \eps (v + \nicefrac{z}{2}) \cdot \eta} \, e^{-i \lambda \GAe([v,v+z])} \, \varphi(v+z) \\
		&= \int \dd u \, e^{-i \eps (u - \nicefrac{z}{2}) \cdot \eta} \, e^{-i \lambda \GAe([u-z,u])} \,  \OBel(\eps u-\eps z,\eps u+\eps y-\eps z,\eps u) \, \delta \bigl ( u-(v+z) \bigr ) \, \varphi(u) \\
		&=: \int \dd u \, K_{L,\eps,\lambda}(y,Z;u,v) \, \varphi(u) 
		, 
	\end{align*}
	and we need to find $\WignerTrafoel K_{L,\eps,\lambda}(y,Z; \cdot , \cdot) (X)$, 
	\begin{align*}
		\WignerTrafoel K_{L,\eps,\lambda}(y,Z; \cdot , \cdot)(X) 
		&= \int \dd u \, e^{-i u \cdot \xi} \, e^{-i \lambda \GAe([\nicefrac{x}{\eps} - \nicefrac{u}{2} , \nicefrac{x}{\eps} + \nicefrac{u}{2}])} \, K_{L,\eps,\lambda} \bigl ( y,Z;\tfrac{x}{\eps} - \tfrac{u}{2},\tfrac{x}{\eps} + \tfrac{u}{2} \bigr ) \\
		&= e^{i \sigma(X,Z)} \, \OBel \bigl ( x - \tfrac{\eps}{2} z,x - \tfrac{\eps}{2} z + \eps y , x + \tfrac{\eps}{2} z \bigr ) =: L_{\eps,\lambda}(y,Z;X) 
		. 
	\end{align*}
	\medskip

	\noindent
	\textbf{Step 3: Magnetic composition law. }%
	Now we plug $L_{\eps,\lambda}(y,Z;X)$ back into the operator equation and obtain 
	\begin{align}
		(f \magWel g)(X) 
		&= \frac{1}{(2\pi)^{2d}} \int \dd Z \, \int \dd Y \, \bigl ( \Fs^{-1} f \bigr )(Y) \, \bigl ( \Fs^{-1} g \bigr )(Z-Y) \;  e^{i \frac{\eps}{2} \, \sigma(Y,Z)} \, L_{\eps,\lambda}(y,Z;X) 
		\notag \\ 
		&= \frac{1}{(2\pi)^{2d}} \int \dd Y \, \int \dd Z \, e^{i \sigma(X,Y+Z)} \,  e^{i \frac{\eps}{2} \, \sigma(Y,Z)} \, \OBel \bigl ( x - \tfrac{\eps}{2} (y+z),x + \tfrac{\eps}{2} (y-z) , x + \tfrac{\eps}{2} (y+z) \bigr ) 
		\cdot \notag \\
		&\qquad \qquad \qquad \qquad \qquad \cdot \bigl ( \Fs^{-1} f \bigr )(Y) \, \bigl ( \Fs^{-1} g \bigr )(Z)
		. 
		\label{asympExp:eqn:equivalence:eqn1}
	\end{align}
	This formula is the starting point for Müller's and our derivation of the asymptotic expansion of the product. However, we can show the equivalence to the product formula obtained by two of the authors in \cite{MantoiuPurice:magneticWeylCalculus:2004} by writing out the symplectic Fourier transforms, 
	\begin{align*}
		\mbox{RHS of \eqref{asympExp:eqn:equivalence:eqn1}}
		&= \frac{1}{(2\pi)^{4d}} \int \dd Y \, \int \dd \tilde{Y} \, \int \dd Z \, \int \dd \tilde{Z} \, e^{i \sigma(X - \tilde{Y},Y)} \, e^{i \sigma(X - \tilde{Z},Z)} \, e^{i \frac{\eps}{2} \, \sigma(Y,Z)} \, \cdot \\ 
		&\qquad \qquad \qquad \qquad \cdot \OBel \bigl ( x - \tfrac{\eps}{2} (y+z),x + \tfrac{\eps}{2} (y-z) , x + \tfrac{\eps}{2} (y+z) \bigr ) \, f(\tilde{Y}) \, g(\tilde{Z}) 
		. 
	\end{align*}
	If one writes out the exponential prefactors explicitly, sorts all terms containing $\xi$ and $\eta$ and then integrates over those variables, one obtains 
	\begin{align*}
		\frac{1}{(\pi\eps)^{2d}} \, \int \dd \tilde{Y} \, \int \dd \tilde{Z} \, e^{- i \frac{2}{\eps} \, \sigma(X - \tilde{Y} , X - \tilde{Z})} \, \OBel \bigl ( \tilde{y} - \tilde{z} + x , \tilde{y} + \tilde{z} - x , - \tilde{y} + \tilde{z} + x \bigr ) \,  f(\tilde{Y}) \, g(\tilde{Z}) 
		. 
	\end{align*}
	\medskip

	\noindent
	\textbf{Step 4: $f \magWel g \in \Hoerrd{m_1 + m_2}$. }
	The integral on the right-hand side of equation~\eqref{asympExp:eqn:equivalence:eqn1} satisfies the assumptions of Lemma~\ref{appendix:existenceOscInt:remainder} with $\tau = 1 = \tau'$ (keeping in mind that $\OBel$ satisfies the assumptions on $G_{\tau'}$ by Lemma~\ref{appendix:boundednessCoroBel}). Thus, the integral in equation~\eqref{asympExp:eqn:equivalence:eqn1} exists and is in symbol class $\Hoerrd{m_1 + m_2}$. 
\end{proof}
%


\subsection{Asymptotic expansion of the product} 
\label{asympExp:asymptoticExp}
To obtain an asymptotic expansion of the product, we adapt an idea by Folland to the present case \cite[p 108 f.]{Folland:harmonicAnalysisPhaseSpace:1989}: we expand the exponential of the \emph{twister} 
\begin{align*}
	e^{i \frac{\eps}{2} \sigma(Y,Z) - i \lambda \gamma^B_{\eps}(x,y,z)} &= e^{i T_{\eps,\lambda}(x,Y,Z)} \notag 
	\\
	&\asymp \sum_{n = 0}^{\infty} \sum_{k = 0}^{\infty} \eps^n \lambda^k \sum C_{n,k,a,\alpha,b,\beta}(x) \, y^{a} \eta^{\alpha} \, z^{b} \zeta^{\beta} \notag 
\end{align*}
as a polynomial in $y$, $\eta$, $z$ and $\zeta$ with coefficients $C_{n,k,a,\alpha,b,\beta} \in \mathcal{BC}^{\infty}(\R^d)$ that are bounded functions in $x$ with bounded derivatives to all orders. Then we can rewrite equation \eqref{asympExp:FourierFormMagneticComposition} as a convolution of derivatives of $f$ and $g$. Furthermore, we can show that there are always sufficiently many derivatives with respect to momenta so that each of the terms has the correct decay properties. 

The difficult part of the proof is to show the existence of certain oscillatory integrals. To clean up the presentation of the proof, we have moved these parts to Appendix \ref{appendix:existenceOscInt}. For simplicity, we also introduce the following nomenclature: 
\begin{defn}[Number of $q$s and $p$s]
	Let $B \in \mathcal{BC}^{\infty}(\R^d_x , \mathcal{C}^{\infty}_{\mathrm{pol}}(\R_Y^{2d} \times \R_Z^{2d}))$ be a function which can be decomposed into a finite sum of the form 
	\begin{align*}
		B(x,Y,Z) &= \sum_{\substack{\sabs{a} + \sabs{b} = n\\
		\sabs{\alpha} + \sabs{\beta} = k}} b_{a \alpha b \beta}(x,Y,Z) \, y^{a} \, \eta^{\alpha} \, z^{b} \, \zeta^{\beta} 
	\end{align*}
	where all $b_{a \alpha b \beta}$ smooth \emph{bounded} functions that depend on the multiindices $a , \alpha , b , \beta \in \N_0^d$. We then say that $B$ has $n$ $q$s (total number of factors in $y$ and $z$) and $k$ $p$s (total number of factors in $\eta$ and $\zeta$). 
\end{defn}
In the appendix we show how to convert $q$s into derivatives with respect to \emph{momentum} and $p$s into derivatives with respect to \emph{position}. Monomials of $x$ and $\xi$ multiplied with the symplectic Fourier transform of a Schwarz function $\varphi \in \Schwartz(\R^{2d})$ can be written as the symplectic Fourier transform of derivatives of $\varphi$ in $\xi$ and $x$: 
\begin{align*}
	x^{a} \xi^{\alpha} (\Fs \varphi)(X) &= \Fs \bigl ( (-i \partial_{{\xi}})^{a} (i \partial_{{x}})^{\alpha} \varphi \bigr ) (X)
\end{align*}
This manipulation can be made rigorous for symbols of Hörmander class $m$ with weight $\rho$. We see that derivatives with respect to momentum \emph{improve} decay by $\rho$ while those with respect to position do not alter the decay. In this sense, the decay properties of the integrals are determined by the number of $q$s and $p$s. 

Now we are in a position to prove the main result of this article, Theorem~\ref{intro:thm:main_result:asymptoticExpansion}: 
\begin{proof}[Theorem~\ref{intro:thm:main_result:asymptoticExpansion}]
	Let $\eprec \ll 1$. Then throughout the proof, we take $N \equiv N(\eprec,\eps,\lambda) \in \N_0$ to be as in the first part of Definition \ref{asympExp:defn:precision}, \ie $\eps^{N+1} < \eprec$ and $\lambda^{N+1} < \eprec$ hold. We will show that $f \magWel g - \sum_{n = 0}^N \sum_{k = 0}^n \eps^n \lambda^k \, (f \magWel g)_{(n,k)} = \ordereprec$. 
	\medskip
	
	\noindent
	\textbf{Step 1: Formal expansion of the twister. }
	We expand the exponential of the twister $e^{i \frac{\eps}{2} \sigma(Y,Z)} e^{- i \lambda \gBe(x,y,z)} = e^{i T_{\eps,\lambda}(x,Y,Z)}$ up to the $N$th term, 
	\begin{align*}
		e^{i T_{\eps,\lambda}(x,Y,Z)} &= \sum_{n = 0}^N \frac{i^n}{n!} \bigl ( T_{\eps,\lambda}(x,Y,Z) \bigr )^n + R_N(x,Y,Z) 
		. 
	\end{align*}
	The remainder 
	\begin{align}
		R_N(x,Y,Z) :&= \frac{1}{N!} \int_0^1 \dd \tau \, (1 - \tau)^N \partial_{\tau}^{N+1} e^{\tau u} \big \vert_{u = i T_{\eps,\lambda}(x,Y,Z)} \notag \\
		&= \frac{i^{N+1}}{N!} \bigl ( T_{\eps,\lambda}(x,Y,Z) \bigr )^{N+1} \int_0^1 \dd \tau \, (1 - \tau)^N \, e^{i \tau T_{\eps,\lambda}(x,Y,Z)} \label{asympExp:eqn:remainder1}
	\end{align}
	is treated in Step 3, right now we are only concerned with the first term. If we plug in the asymptotic expansion of the flux $\gBe$ derived in Lemma~\ref{app:lemmaFormalExpansionTwister} up to $N'$th order with $N' \geq N$, then we obtain 
	\begin{align}
		\bigl ( T_{\eps,\lambda}(x,Y,Z) \bigr )^n &= 
		\Bigl ( \tfrac{\eps}{2} \sigma(Y,Z) + \lambda \mbox{$\sum_{n' = 1}^{N'}$} \eps^{n'} \mathcal{L}_{n'}(x,y,z) + \lambda R_{N'}[\gBe](x,y,z) \Bigr )^n \notag \\
		&= \sum_{l = 0}^n \noverk{n}{l} \Bigl ( \tfrac{\eps}{2} \sigma(Y,Z) + \lambda \mbox{$\sum_{n' = 1}^{N'}$} \eps^{n'} \mathcal{L}_{n'}(x,y,z) \Bigr )^{n - l} \, \bigl ( \lambda R_{N'}[\gBe](x,y,z) \bigr )^l \notag \\
		&=: \Bigl ( \tfrac{\eps}{2} \sigma(Y,Z) + \lambda \mbox{$\sum_{n' = 1}^{N'}$} \eps^{n'} \mathcal{L}_{n'}(x,y,z) \Bigr )^{n} + R_{N' \, n}[T_{\eps,\lambda}](x,Y,Z) 
		. 
		\label{asympExp:eqn:remainder2}
	\end{align}
	Again, we focus on the first term of the expansion and treat the remainder separately in Step 3: 
	\begin{align*}
		\Bigl ( \tfrac{\eps}{2} \sigma(Y,Z) + &\lambda \mbox{$\sum_{n' = 1}^{N'}$} \eps^{n'} \mathcal{L}_{n'}(x,y,z) \Bigr )^n = 
		\\
		&= \sum_{k = 0}^n \sum_{\sum_{j = 1}^{N'} k_j = k} \eps^{(n - k) + \sum_{j = 1}^{N'} j k_j} \lambda^k  \, \frac{n!}{(n - k)! \, k_1 ! \cdots k_{N'}!} \Bigl ( \tfrac{1}{2} \sigma(Y,Z) \Bigr )^{n-k} \, \prod_{j = 1}^{N'} {\mathcal{L}_j}^{k_j}(x,y,z) 
	\end{align*}
	Now we define $\mathcal{L}_0(Y,Z) := \tfrac{1}{2} \sigma(Y,Z)$ to clean up the presentation, include the sum over $n$ again and sort by powers of $\eps$ and $\lambda$, 
	\begin{align*}
		\sum_{n = 0}^N &\frac{i^n}{n!} \Bigl ( \tfrac{\eps}{2} \sigma(Y,Z) + \lambda \mbox{$\sum_{n' = 1}^{N'}$} \eps^{n'} \mathcal{L}_{n'}(x,y,z) \Bigr )^n 
		= \\
		&= \sum_{n = 0}^N \frac{i^n}{n!} \sum_{\sum_{j = 0}^{N'} k_j = n} \eps^{k_0 + \sum_{j = 1}^{N'} j k_j} \lambda^{n - k_0}  \, \frac{n!}{k_0! \, k_1 ! \cdots k_{N'}!} {\mathcal{L}_0}^{k_0}(Y,Z) \prod_{j = 1}^{N'} {\mathcal{L}_j}^{k_j}(x,Y,Z) \\
		&= \sum_{n = 0}^{N \, N'} \sum_{k = 0}^n \eps^{n} \lambda^k 
		\sum_{\substack{k_0 + \sum_{j = 1}^{N'} j k_j = n \\
		\sum_{j=1}^{N'} k_j = k}} \frac{i^{k + k_0}}{k_0! \, k_1 ! \cdots k_{N'}!} \, {\mathcal{L}_0}^{k_0}(Y,Z) \prod_{j = 1}^{N'} {\mathcal{L}_j}^{k_j}(x,Y,Z) 
		=: \sum_{n = 0}^{N \, N'} \sum_{k = 0}^n \eps^{n} \lambda^k \mathcal{T}_{n,k}(x,Y,Z) 
		. 
	\end{align*}
	\textbf{Step 2: Existence of the $(n,k)$ term. }
	The properties of the $(n,k)$th term of the product 
	\begin{align}
		(f \magWel g)_{(n,k)}(X) &= \frac{1}{(2\pi)^{2d}} \int \dd Y \int \dd Z \, e^{i \sigma(X,Y+Z)} \, \mathcal{T}_{n,k}(x,Y,Z) \, (\Fs^{-1} f)(Y) \, (\Fs^{-1} g)(Z) 
		\label{magBel:integralnkTerm}
	\end{align}
	can be deduced from the properties of $\mathcal{T}_{n,k}$: we proceed by showing that $\mathcal{T}_{n,k}$ is a polynomial with $x$-dependent prefactors that contains $n+k$ $q$s (powers of $y$ and $z$) and \emph{at most} $n-k$ $p$s (powers of $\eta$ and $\zeta$). $\mathcal{L}_0$ is the non-magnetic symplectic form and contains $1$ $q$ and $1$ $p$. Hence, the $k_0$th power of $\mathcal{L}_0$ contributes $k_0$ $q$s and an equal amount of $p$s. By Lemma~\ref{app:lemmaFormalExpansionTwister}, the magnetic terms $\mathcal{L}_j$, $j \geq 1$, contribute $j+1$ $q$s and no $p$s. In this sense, magnetic terms improve decay. By conditions imposed on the indices appearing in the definition of $\mathcal{T}_{n,k}$, we deduce there are 
	\begin{align*}
		k_0 + \sum_{j = 1}^{N'} (j + 1) k_j = k_0 + \sum_{j = 1}^{N'} j k_j + \sum_{j = 1}^{N'} k_j = n + k
	\end{align*}
	$q$s and $k_0$ $p$s. As $0 \leq k_0 \leq n-k$, Lemma~\ref{appendix:existenceOscInt:Lemma2} implies the existence of integral \eqref{magBel:integralnkTerm} and that it belongs to the correct symbol class, namely $(f \magWel g)_{(n,k)} \in \Hoerrd{m_1 + m_2 - (n+k) \rho}$. 
	\medskip
	
	\noindent
	\textbf{Step 3: Existence of remainders. }
	There are two remainders we need to control, equations~\eqref{asympExp:eqn:remainder1} and \eqref{asympExp:eqn:remainder2}: the first one stems from the Taylor expansion of the exponential, the second one has its origins in the expansion of the magnetic flux, 
	\begin{align*}
		R_N^{\Sigma}(x,Y,Z) := R_{N}(x,Y,Z) + \sum_{n = 1}^N \frac{i^n}{n!} R_{N' \, n}[T_{\eps,\lambda}](x,Y,Z) 
		. 
	\end{align*}
	The remainder of the product is obtained after integration, 
	\begin{align}
		\tilde{R}_N(X) := \frac{1}{(2\pi)^{2d}} \int \dd Y \int \dd Z \, e^{i \sigma(X,Y+Z)} \, R_N^{\Sigma}(x,Y,Z) \, (\Fs^{-1} f)(Y) \, (\Fs^{-1} g)(Z) 
		. 
		\label{asympExp:eqn:total_remainder}
	\end{align}
	We have to show that (i) the integral exists, (ii) it is in the correct symbol class and (iii) it is of the right order in $\eps$ and $\lambda$. Points (i) and (ii) are the content of Lemma~\ref{appendix:existenceOscInt:remainder} and we have to show that each of the two contributions to the remainder satisfies the assumptions. 
	\medskip
	
	\noindent
	The first contribution to $\tilde{R}_N$ stems from the Taylor expansion of the exponential, 
	\begin{align*}
		\frac{1}{(2\pi)^{2d}} \int \dd Y &\int \dd Z \, e^{i \sigma(X,Y+Z)} \, 
			\frac{1}{N!} \int_0^1 \dd \tau \, (1 - \tau)^N \partial_{\tau}^{N+1} e^{\tau u} \big \vert_{u = i T_{\eps,\lambda}(x,Y,Z)}
			\, (\Fs^{-1} f)(Y) \, (\Fs^{-1} g)(Z) = \\
		&= \frac{1}{(2\pi)^{2d}} \int_0^1 \dd \tau \, (1 - \tau)^N \, \int \dd Y \int \dd Z \, e^{i \sigma(X,Y+Z)} \, 
			\frac{i^{N+1}}{N!} \bigl ( T_{\eps,\lambda}(x,Y,Z) \bigr )^{N+1} e^{-i \tau \lambda \gBe(x,y,z)} \cdot 
			\\
			&\qquad \qquad \qquad \qquad \cdot e^{i \tau \frac{\eps}{2} \sigma(Y,Z)}
			\, (\Fs^{-1} f)(Y) \, (\Fs^{-1} g)(Z) 
			. 
	\end{align*}
	The first factor, $\bigl ( T_{\eps,\lambda}(x,Y,Z) \bigr )^{N+1}$, can be expanded in powers of $\sigma(Y,Z)$ and $\gBe(x,y,z)$: 
	\begin{align*}
		\bigl ( T_{\eps,\lambda}(x,Y,Z) \bigr )^{N+1} &= \eps^{N+1} \sum_{l = 0}^{N+1} \noverk{N+1}{l} \lambda^l \, \bigl ( \tfrac{1}{2} \sigma(Y,Z) \bigr )^{N+1-l} \bigl ( \underbrace{\tfrac{1}{\eps} \gBe(x,y,z)}_{= \orderone} \bigr )^l  
	\end{align*}
	As $\eps^{N+1} < \eprec$ holds by definition of $N$, the first term of the remainder is of the correct order. The decay properties are dominated by $\bigl ( \sigma(Y,Z) \bigr )^{N+1}$ with $N+1$ $p$s and $N+1$ $q$s. All other terms contribute less than $N+1$ $p$s and more than $N+1$ $q$s since by Lemma~\ref{app:lemmaFormalExpansionTwister}, $\gBe$ is of order $\eps$ and contributes 2 $q$s and no $p$s. Furthermore, Lemma~\ref{appendix:boundednessLemmagBel} gives polynomial bounds of derivatives of $\gBe$: 
	\begin{align*}
		\babs{\partial_x^{a} \gBe(x,y,z)} \leq C_{a} \, \bigl ( \sexpval{y} + \sexpval{z} \bigr )^{\sabs{a}} 
	\end{align*}
	A similar bound holds for the exponential of the flux (Corollary~\ref{appendix:boundednessCoroBel}):
	\begin{align*}
		\babs{\partial_x^{a} e^{- i \lambda \gBe(x,y,z)}(x,y,z)} \leq C_{a} \, \sexpval{y}^{\sabs{a}} \sexpval{z}^{\sabs{a}} 
		&&
		\forall a \in \N_0^d 
	\end{align*}
	Altogether, $\bigl ( T_{\eps,\lambda}(x,Y,Z) \bigr )^{N+1} \, e^{-i \tau \lambda \gBe(x,y,z)}$ satisfies the conditions on $G_{\tau'}$ in Lemma~\ref{appendix:existenceOscInt:remainder} (with $\tau = \tau'$) which implies 
	\begin{align*}
		\frac{1}{(2\pi)^{2d}} \int_0^1 \dd \tau \, (1 - \tau)^N \, \int \dd Y \int \dd Z \, &e^{i \sigma(X,Y+Z)} \, 
			\frac{i^{N+1}}{N!} \bigl ( T_{\eps,\lambda}(x,Y,Z) \bigr )^{N+1} e^{-i \tau \lambda \gBe(x,y,z)} \cdot 
			\\
			&\cdot e^{i \tau \frac{\eps}{2} \sigma(Y,Z)}
			\, (\Fs^{-1} f)(Y) \, (\Fs^{-1} g)(Z) 
	\end{align*}
	exists as an oscillatory integral and belongs to symbol class $\Hoerrd{m_1 + m_2 - (N+1)\rho}$. 
	\medskip
	
	\noindent
	The second contribution which involves 
	\begin{align*}
		R_{N' \, n}[T_{\eps,\lambda}](x,Y,Z) &= \sum_{l = 1}^n \noverk{n}{l} \Bigl ( \tfrac{\eps}{2} \sigma(Y,Z) + \lambda \mbox{$\sum_{n' = 1}^{N'}$} \eps^{n'} \mathcal{L}_{n'}(x,y,z) \Bigr )^{n - l} \, \bigl ( \lambda R_{N'}[\gBe](x,y,z) \bigr )^l
	\end{align*}
	can be estimated analogously: by Lemma~\ref{app:lemmaFormalExpansionTwister}, $R_{N'}[\gBe]$ is of order $\ordere{N'+1}$ (the largest prefactor is $\eps^{N'+1} \lambda < \eprec$) and contains $N' + 2$ $q$s. So the terms in the above sum contain at least $N' + 1 \geq N + 1$ more $q$s than $p$s and another application of Lemma~\ref{appendix:existenceOscInt:remainder} (with $\tau = 0$) implies that the second contribution to $\tilde{R}_N$ exists as an oscillator integral and is of symbol class $\Hoerrd{m_1 + m_2 - (N'+1) \rho} \subseteq \Hoerrd{m_1 + m_2 - (N+1) \rho}$. 
	\medskip
	
	\noindent
	Altogether, we conclude that $\tilde{R}_N$ exists pointwise, is of symbol class $\Hoerrd{m_1 + m_2 - (N+1) \rho}$ as long as $N' \geq N$ and hence $f \magWel g - \sum_{n = 0}^N \sum_{k = 0}^n \eps^n \lambda^k \, (f \magWel g)_{(n,k)}  = \ordereprec$. This concludes the proof.
\end{proof}
If we do not have a separation of spatial scales, \ie $\eps = 1$, but weak coupling to the magnetic field, we can still expand the product $\magWel$ as a power series in $\lambda$. This is also the starting point of the $\lambda$-$\eps$ expansion which coincides with the $\eps$-$\lambda$ expansion. 
\begin{thm}\label{asympExp:thm:lambdaExpansion}
	Assume the magnetic field $B$ satisfies Assumption~\ref{asympExp:assumption:usualConditionsBA}; then for $\lambda \ll 1$ and $\eps \leq 1$, we can expand the $\lambda$ Weyl product of $f \in \Hoerrd{m_1}$ and $g \in \Hoerrd{m_2}$ asymptotically in $\lambda$ such that 
	\begin{align*}
		f \magWel g - \sum_{k = 0}^N \lambda^k (f \magWel g)_{(k)} \in \Hoerrd{m_1 + m_2 - 2(N+1) \rho} 
	\end{align*}
	where $(f \magWel g)_{(k)} \in \Hoerrd{m_1 + m_2 - 2k \rho}$ is given by equation~\eqref{asympExp:eqn:lambdaExpansion:kth_term}. In particular, the zeroth-order term reduces to the \emph{non-magnetic} Weyl product, $(f \magWel g)_{(0)} = f \Weyle g$. We have explicit control over the remainder (equation~\eqref{aysmpExp:eqn:lambda_expansion:remainder}): if we expand the product up to $N$th order in $\lambda$, the remainder is of order $\orderl{N+1}$ and in symbol class $\Hoerrd{m_1 + m_2 - 2(N+1)\rho}$. 
\end{thm}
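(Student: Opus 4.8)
The plan is to start from the product formula~\eqref{asympExp:FourierFormMagneticComposition} of Theorem~\ref{asympExp:thm:equivalenceProduct} and expand only the magnetic part of the twister, keeping the non-magnetic phase $e^{i\frac{\eps}{2}\sigma(Y,Z)}$ intact since $\eps$ is not assumed small. Concretely, write $\OBel\bigl(x-\tfrac{\eps}{2}(y+z),x+\tfrac{\eps}{2}(y-z),x+\tfrac{\eps}{2}(y+z)\bigr) = e^{-i\lambda\gBe(x,y,z)}$ and Taylor-expand the exponential in $\lambda$:
\begin{align*}
	e^{-i\lambda\gBe(x,y,z)} = \sum_{k=0}^{N}\frac{(-i\lambda)^k}{k!}\bigl(\gBe(x,y,z)\bigr)^k + \frac{(-i\lambda)^{N+1}}{N!}\bigl(\gBe(x,y,z)\bigr)^{N+1}\int_0^1\dd\tau\,(1-\tau)^N e^{-i\tau\lambda\gBe(x,y,z)} .
\end{align*}
Inserting the finite sum into~\eqref{asympExp:FourierFormMagneticComposition} and defining
\begin{align}
	(f\magWel g)_{(k)}(X) := \frac{(-i)^k}{k!\,(2\pi)^{2d}}\int\dd Y\int\dd Z\, e^{i\sigma(X,Y+Z)}\,e^{i\frac{\eps}{2}\sigma(Y,Z)}\,\bigl(\gBe(x,y,z)\bigr)^k\,(\Fs^{-1}f)(Y)\,(\Fs^{-1}g)(Z)
	\label{asympExp:eqn:lambdaExpansion:kth_term}
\end{align}
gives the candidate terms. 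The zeroth term is manifestly the non-magnetic Weyl product $f\Weyle g$, since at $k=0$ the magnetic flux drops out and~\eqref{asympExp:FourierFormMagneticComposition} reduces to the Folland formula.

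The key structural input is the behaviour of $\gBe$ recorded in Lemma~\ref{app:lemmaFormalExpansionTwister} and the polynomial bounds of Lemma~\ref{appendix:boundednessLemmagBel}: the scaled flux $\gBe$ is of order $\eps$, carries exactly two $q$s (factors of $y$ and $z$) and no $p$s, and its $x$-derivatives satisfy $\babs{\partial_x^a\gBe(x,y,z)}\le C_a(\sexpval{y}+\sexpval{z})^{\sabs a}$. Hence $\bigl(\gBe\bigr)^k$ is, up to bounded $x$-dependent prefactors, a polynomial with $2k$ $q$s and no $p$s. Following the $q$-into-momentum-derivative / $p$-into-position-derivative dictionary of Section~\ref{asympExp:asymptoticExp}, each $q$ converts into a $\partial_\xi$ acting on $f$ or $g$ and thus improves decay by $\rho$; no $p$s means no loss of decay. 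Therefore $(f\magWel g)_{(k)}\in\Hoerrd{m_1+m_2-2k\rho}$, and the existence of the oscillatory integral~\eqref{asympExp:eqn:lambdaExpansion:kth_term} together with membership in the asserted symbol class follows from Lemma~\ref{appendix:existenceOscInt:Lemma2} (here the surviving exponential $e^{i\frac{\eps}{2}\sigma(Y,Z)}$ plays the role of the oscillatory phase, and one uses $\eps\le 1$ to keep the constants uniform).

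For the remainder
\begin{align}
	R_N^{\lambda}(X) := \frac{(-i\lambda)^{N+1}}{N!\,(2\pi)^{2d}}\int_0^1\dd\tau\,(1-\tau)^N\int\dd Y\int\dd Z\,e^{i\sigma(X,Y+Z)}\,e^{i\frac{\eps}{2}\sigma(Y,Z)}\,\bigl(\gBe(x,y,z)\bigr)^{N+1}\,e^{-i\tau\lambda\gBe(x,y,z)}\,(\Fs^{-1}f)(Y)\,(\Fs^{-1}g)(Z)
	\label{aysmpExp:eqn:lambda_expansion:remainder}
\end{align}
the argument is the same, with the one extra ingredient that $\bigl(\gBe\bigr)^{N+1}e^{-i\tau\lambda\gBe}$ still satisfies the hypotheses on $G_{\tau'}$ in Lemma~\ref{appendix:existenceOscInt:remainder}: the polynomial factor $\bigl(\gBe\bigr)^{N+1}$ contributes $2(N+1)$ $q$s and no $p$s, while the exponential $e^{-i\tau\lambda\gBe}$ obeys the bound $\babs{\partial_x^a e^{-i\tau\lambda\gBe}}\le C_a\sexpval{y}^{\sabs a}\sexpval{z}^{\sabs a}$ of Corollary~\ref{appendix:boundednessCoroBel} uniformly in $\tau\in[0,1]$ and in $\lambda\le 1$. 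Thus the remainder exists as an oscillatory integral, lies in $\Hoerrd{m_1+m_2-2(N+1)\rho}$, and carries the explicit prefactor $\lambda^{N+1}$, so $R_N^{\lambda}=\orderl{N+1}$. I expect the only real obstacle to be bookkeeping: making the "$q$s versus $p$s" count for arbitrary powers $\bigl(\gBe\bigr)^k$ precise, and checking that all constants in the oscillatory-integral estimates are uniform in $\eps\in(0,1]$ and $\tau\in[0,1]$ — but both reduce to the lemmas already available in the appendix, so no genuinely new estimate is needed.
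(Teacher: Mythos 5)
Your approach mirrors the paper's: Taylor-expand $e^{-i\lambda\gBe}$ in $\lambda$ while leaving the non-magnetic twister $e^{i\frac{\eps}{2}\sigma(Y,Z)}$ intact, use the factorization $\gBe(x,y,z) = -\eps\,\Bte_{lj}(x,y,z)\,y_l z_j$ to count two $q$s and no $p$s per power of $\gBe$, and control the Taylor remainder via Lemma~\ref{appendix:boundednessLemmagBel} and Corollary~\ref{appendix:boundednessCoroBel} together with Lemma~\ref{appendix:existenceOscInt:remainder}. The one misstep is the reference you give for the existence of the $k$th-order term: Lemma~\ref{appendix:existenceOscInt:Lemma2} treats integrals \emph{without} the surviving phase $e^{i\frac{\eps}{2}\sigma(Y,Z)}$ (so that the $Y$- and $Z$-integrals decouple into two applications of Lemma~\ref{appendix:existenceOscInt:Lemma1}) and with a prefactor $B(x)$ depending only on $x$; it does not cover $\bigl(\gBe\bigr)^k = (-\eps)^k\prod_m\Bte_{l_m j_m}(x,y,z)\,y_{l_m}z_{j_m}$, whose coefficients $\Bte_{l_m j_m}$ depend on $y$ and $z$ as well as $x$, nor the entangling phase $e^{i\frac{\eps}{2}\sigma(Y,Z)}$ that couples $Y$ and $Z$. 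The correct reference is the same Lemma~\ref{appendix:existenceOscInt:remainder} you already invoke for the remainder, specialized to $\tau = 1 = \tau'$ with $G_{\tau'} = \prod_m\Bte_{l_m j_m}$, which is exactly what the paper does; since you already have that machinery in hand, the fix is immediate and the remainder of your argument goes through unchanged.
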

\begin{proof}
	Assume we want to expand up to $N$th order in $\lambda$. We will show $f \magWel g - \sum_{k = 0}^N \lambda^k (f \magWel g)_{(k)} = \orderl{N+1}$ and that the difference is in $\Hoerrd{m_1 + m_2 - 2(N+1) \rho}$. 
	\medskip
	
	\noindent
	\textbf{Step 1: Expansion of exponential flux. }
	If $\eps$ is not necessarily small, we cannot expand the magnetic flux integral $\gBe$ in powers of $\eps$ anymore. However, we will keep $\eps$ as a \emph{bookkeeping device}. Expanding the exponential of the magnetic flux, we get 
	\begin{align*}
		e^{i T_{\eps,\lambda}(x,Y,Z)} &= e^{i \frac{\eps}{2} \sigma(Y,Z)} e^{- i \lambda \gBe(x,y,z)} \\
		&= e^{i \frac{\eps}{2} \sigma(Y,Z)} \Bigl ( \mbox{$\sum_{k = 0}^N$} \lambda^k \tfrac{(-i)^k}{k!} \bigl ( \gBe(x,y,z) \bigr )^k + R_N(x,y,z)  \Bigr ) 
		. 
	\end{align*}
	The remainder is of order $\lambda^{N+1}$ and has $2(N+1)$ $q$s, 
	\begin{align*}
		R_N(x,y,z) &= \frac{1}{N!} \bigl ( -i \lambda \gBe(x,y,z) \bigr )^{N+1} \int_0^1 \dd \tau' \, (1 - \tau')^N \, e^{-i \lambda \tau' \gBe(x,y,z)} 
		. 
	\end{align*}
	This can be seen more readily once we define $- \eps \Bte_{lj}(x,y,z) \, y_l z_j := \gBe(x,y,z)$ to emphasize that $\gBe$ contains $\eps$ as a prefactor and $2$ $q$s. Using the antisymmetry of $B_{lj}$, there is a simple explicit expression for $\Bte_{lj}$ (see proof of Lemma~\ref{app:lemmaFormalExpansionTwister}): 
	\begin{align*}
		\Bte_{lj}(x,y,z) 
		= \frac{1}{2} \int_{-\nicefrac{1}{2}}^{+\nicefrac{1}{2}} \dd t \int_0^1 \dd s \, s \, \bigl [ B_{lj} \bigl ( x + \eps s (t y - \nicefrac{z}{2}) \bigr )  + B_{lj} \bigl ( x + \eps s (\nicefrac{y}{2} + t z) \bigr ) \bigr ] 
		= \orderone 
	\end{align*}
	\textbf{Step 2: Existence of $k$th-order term. }
	Then the expansion can be rewritten so that we can separate off factors of $y$, $z$ and $\eps$. The $k$th order term contains $2k$ $q$s and no $p$s,  
	\begin{align*}
		\frac{(-i)^k}{k!} \bigl ( \gBe(x,y,z) \bigr )^k &= \eps^k \, \frac{i^k}{k!} \prod_{m = 1}^k \Bte_{l_m j_m}(x,y,z) \, y_{l_m} z_{j_m} 
		. 
	\end{align*}
	By Lemma~\ref{appendix:existenceOscInt:remainder} (with $\tau = 1 = \tau'$) the $k$th order term of the product 
	\begin{align}
		(f \magWel g)_{(k)}(X) :&= \frac{\eps^k}{(2\pi)^{2d}} \int \dd Y \int \dd Z \, e^{i \sigma(X,Y+Z)} \, e^{i \frac{\eps}{2} \sigma(Y,Z)} \, \biggl ( \frac{i^k}{k!} \prod_{m = 1}^k \Bte_{l_m j_m}(x,y,z) \, y_{l_m} z_{j_m}  \biggr ) 
		\cdot \notag \\
		&\qquad \qquad \qquad \cdot (\Fs^{-1} f)(Y) \, (\Fs^{-1} g)(Z) 
		\notag \\
		&= \frac{\eps^k}{(2\pi)^{2d}} \int \dd Y \int \dd Z \, e^{i \sigma(X,Y+Z)} \, e^{i \frac{\eps}{2} \sigma(Y,Z)} \, \biggl ( \frac{i^{3k}}{k!} \prod_{m = 1}^k \Bte_{l_m j_m}(x,y,z) \biggr ) 
		\cdot \notag \\
		&\qquad \qquad \qquad \cdot \bigl ( \Fs^{-1} (\partial_{\tilde{\eta}_{j_1}} \cdots \partial_{\tilde{\eta}_{j_k}} f) \bigr )(Y) \, \bigl ( \Fs^{-1} (\partial_{\tilde{\zeta}_{j_1}} \cdots \partial_{\tilde{\zeta}_{j_k}} g) \bigr )(Z) 
		\label{asympExp:eqn:lambdaExpansion:kth_term}
	\end{align}
	exists and is of symbol class $\Hoerrd{m_1 + m_2 - 2k \rho}$. 
	\medskip

	\noindent
	\textbf{Step 3: Existence of remainder. }
	The remainder is of order $\lambda^{N+1}$ and has $2(N+1)$ $q$s. It contains $\eps^{N+1}$ as a prefactor as well which will be of importance in the proof of the next theorem. By Lemma~\ref{appendix:boundednessLemmagBel} and Corollary~\ref{appendix:boundednessCoroBel}, the integral in $R_N$ over the exponential of the magnetic flux is bounded and its derivatives can be bounded polynomially in $y$ and $z$, 
	\begin{align*}
		R_N(x,y,z) &= \lambda^{N+1} \frac{\eps^{N+1}}{N!} \bigl ( \Bte_{lj}(x,y,z) \, y_l z_j \bigr )^{N+1} \int_0^1 \dd \tau' \, (1 - \tau')^N \, e^{-i \lambda \tau' \gBe(x,y,z)} 
		. 
	\end{align*}
	This means $R_N$ satisfies the conditions on $G_{\tau'}$ in Lemma~\ref{appendix:existenceOscInt:remainder} (with $\tau = 1$) and we conclude that 
	\begin{align}
		\tilde{R}_N(X) := \frac{1}{(2\pi)^{2d}} \int \dd Y \int \dd Z \, e^{i \sigma(X,Y+Z)} \, e^{i \frac{\eps}{2} \sigma(Y,Z)} \, R_N(x,y,z) \, (\Fs^{-1} f)(Y) \, (\Fs^{-1} g)(Z) 
		\label{aysmpExp:eqn:lambda_expansion:remainder}
	\end{align}
	exists and is in symbol class $\Hoerrd{m_1 + m_2 - 2(N+1) \rho}$. 
\end{proof}
The next statement is central to this paper, because it tells us we can speak of \emph{the} two-parameter expansion of the product. 
\begin{thm}\label{asympExp:thm:lambdaEpsExpansion}
	Assume that the magnetic field $B$ satisfies Assumption~\ref{asympExp:assumption:usualConditionsBA} and $\eps \ll 1$ in addition to $\lambda \ll 1$. Then we can expand each term of the $\lambda$ expansion of $f \magWel g$ in $\eps$, $f \in \Hoerrd{m_1}$, $g \in \Hoerrd{m_2}$, and obtain the same as in Theorem~\ref{intro:thm:main_result:asymptoticExpansion}. Hence we can speak of \emph{the} two-parameter expansion of the product $\magWel$. 
\end{thm}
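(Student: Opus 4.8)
The plan is to show that the two expansion procedures — first expand $\gBe$ in $\eps$ and then collect powers of $\lambda$ (Theorem~\ref{intro:thm:main_result:asymptoticExpansion}), versus first expand the exponential $e^{-i\lambda\gBe}$ in $\lambda$ and then expand each resulting term in $\eps$ (Theorem~\ref{asympExp:thm:lambdaExpansion}) — produce identical symbols order by order. The key observation is that both procedures start from the \emph{same} oscillatory integral, equation~\eqref{asympExp:FourierFormMagneticComposition}, with the same twister $e^{iT_{\eps,\lambda}(x,Y,Z)} = e^{i\frac{\eps}{2}\sigma(Y,Z)}\,e^{-i\lambda\gBe(x,y,z)}$, and differ only in the order in which the two Taylor expansions of the scalar function $e^{iT_{\eps,\lambda}}$ (in $\eps$ and in $\lambda$) are performed. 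Since both $\frac{\eps}{2}\sigma(Y,Z)$ and $\gBe(x,y,z)$ are (formally) smooth functions of $\eps$ and $\lambda$ that admit the relevant Taylor series, the formal double series in $\eps$ and $\lambda$ of $e^{iT_{\eps,\lambda}}$ is unique, and the two iterated-expansion schemes are just two ways of summing the same double-indexed family of terms.

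Concretely, I would proceed as follows. First, start from the $\lambda$-expansion of Theorem~\ref{asympExp:thm:lambdaExpansion}: in its $k$th-order term \eqref{asympExp:eqn:lambdaExpansion:kth_term} the only place $\eps$ enters, apart from the overall $\eps^k$ and the phase $e^{i\frac{\eps}{2}\sigma(Y,Z)}$, is through the functions $\Bte_{l_m j_m}(x,y,z)$, which by the explicit formula in Step~1 of that proof are smooth in $\eps$ with $\Bte_{lj}|_{\eps=0} = \tfrac14 B_{lj}(x)$ and a controlled Taylor expansion in $\eps$ whose coefficients involve derivatives of $B$. Expanding each $\Bte_{l_m j_m}$ and the exponential $e^{i\frac{\eps}{2}\sigma(Y,Z)}$ in $\eps$, one recovers precisely the polynomial-in-$(y,\eta,z,\zeta)$ structure with $\mathcal{BC}^\infty$ coefficients that appears in Step~1 of the proof of Theorem~\ref{intro:thm:main_result:asymptoticExpansion}. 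Second, I would check that the combinatorial bookkeeping matches: the term $\eps^n\lambda^k(f\magWel g)_{(n,k)}$ obtained by this detour must equal the one coming from the twister-expansion of Theorem~\ref{intro:thm:main_result:asymptoticExpansion}; this reduces to verifying an identity between two finite sums over multi-index partitions, which is exactly the statement that the $\mathcal{L}_j$ coefficients of Lemma~\ref{app:lemmaFormalExpansionTwister} are the $\eps$-Taylor coefficients of $-\eps\Bte_{lj}\,y_l z_j = \gBe$ — true essentially by construction. Third, I would invoke the remainder estimates already established: both Theorem~\ref{intro:thm:main_result:asymptoticExpansion} and Theorem~\ref{asympExp:thm:lambdaExpansion} control their remainders in the correct Hörmander classes (and the $\lambda$-remainder in Step~3 of the latter carries an explicit $\eps^{N+1}$ prefactor, noted there precisely for this purpose), so that the finite resummations agree up to a symbol in $\Hoerrd{m_1+m_2-(N+1)\rho}$, i.e. up to $\ordereprec$ in the sense of Definition~\ref{asympExp:defn:precision}. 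Since $\eprec$ is arbitrary, the two term-by-term expansions coincide.

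The main obstacle is not analytic but bookkeeping: making the matching of the two combinatorial sums genuinely rigorous requires carefully tracking, in the $\lambda$-then-$\eps$ scheme, which powers of $\eps$ are generated by the phase factor $e^{i\frac{\eps}{2}\sigma(Y,Z)}$ versus by the $\eps$-Taylor expansion of the $\Bte_{lj}$ factors, and then re-summing so that the coefficient of $\eps^n\lambda^k$ is grouped exactly as the partition sum defining $(f\magWel g)_{(n,k)}$. The cleanest way to sidestep a brute-force identity is to argue at the level of the scalar twister: the formal power series $\sum_{n,k}\eps^n\lambda^k\,\mathcal{T}_{n,k}(x,Y,Z)$ of $e^{iT_{\eps,\lambda}}$ is uniquely determined as a formal double series (two commuting formal expansions of a smooth function commute), hence inserting it into \eqref{asympExp:FourierFormMagneticComposition} and using linearity of the oscillatory integral plus the symbol-class bounds of Lemma~\ref{appendix:existenceOscInt:Lemma2} and Lemma~\ref{appendix:existenceOscInt:remainder} shows both iterated schemes produce the $(f\magWel g)_{(n,k)}$ of Theorem~\ref{intro:thm:main_result:asymptoticExpansion}. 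The only thing left to verify by hand is the compatibility of the two remainder decompositions, which is routine given that both remainders lie in $\Hoerrd{m_1+m_2-(N+1)\rho}$ and are $\ordereprec$. This establishes that the expansion is independent of the order, and we may speak of \emph{the} two-parameter expansion.
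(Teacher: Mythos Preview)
Your proposal is correct and follows essentially the same approach as the paper: start from the $k$th-order term of the $\lambda$ expansion, Taylor expand both $e^{i\frac{\eps}{2}\sigma(Y,Z)}$ and $(\gBe)^k$ (equivalently the $\Bte_{lj}$) in $\eps$ using Lemma~\ref{app:lemmaFormalExpansionTwister}, match the resulting combinatorial sum with the $\mathcal{T}_{n,k}$ of Theorem~\ref{intro:thm:main_result:asymptoticExpansion}, and control all remainders via Lemma~\ref{appendix:existenceOscInt:remainder}. Your observation that the matching is ultimately just the uniqueness of the formal double Taylor series of the scalar twister is a clean conceptual shortcut that the paper does not state explicitly, but the concrete steps you outline coincide with the paper's Steps~1--3.
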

\begin{proof}
	\textbf{Step 1: Precision of expansion. }
	Assume we have expanded the magnetic product $\magWel$ up to $N_0$th power in $\lambda$. Then for the remainder of the proof, we fix $N \equiv N(\eprec,\eps,\lambda) \in \N_0$ as in Definition~\ref{asympExp:defn:precision} for $\eprec = \lambda^{N_0}$. 
	\medskip

	\noindent
	\textbf{Step 2: Equality of $(n,k)$ terms of expansion. }
	Now to the expansion itself. The two terms we need to expand are the non-magnetic twister $e^{i \frac{\eps}{2} \sigma(Y,Z)}$ and the $k$th power of the magnetic flux integral $\gBe$ in $\eps \ll 1$: we choose $N' , N'' \geq N$ and write the $k$th order of the $\lambda$ expansion as 
	\begin{align*}
		(f \magBl g)_{(k)}(X) &= \frac{1}{(2\pi)^{2d}} \int \dd Y \int \dd Z \, e^{i \sigma(X,Y+Z)} \, e^{i \frac{\eps}{2} \sigma(Y,Z)} \, \frac{(-i)^k}{k!} \bigl ( \gBe(x,y,z) \bigr )^k \, (\Fs^{-1} f)(Y) \, (\Fs^{-1} g)(Z) 
		\\ 
		&= \frac{1}{(2\pi)^{2d}} \int \dd Y \int \dd Z \, e^{i \sigma(X,Y+Z)} \, \Bigl ( \mbox{$\sum_{n = 0}^{N'}$} \eps^n \tfrac{i^n}{n!} \bigl ( \tfrac{1}{2} \sigma(Y,Z) \bigr )^n + R_{N'}[\sigma](Y,Z) \Bigr ) \cdot \\
		&\qquad \qquad \cdot \frac{(-i)^k}{k!} 
	\Bigl ( \bigl ( \mbox{$\sum_{j = 1}^{N''} \eps^j \mathcal{L}_j(x,y,z)$} \bigr )^k + R_{N'' \, k}[\mathcal{L}R](x,y,z) \Bigr ) 
		\, (\Fs^{-1} f)(Y) \, (\Fs^{-1} g)(Z)
		. 
	\end{align*}
	The remainders are given explicitly in Step 3, equations~\eqref{asympExp:eqn:proof_equivalence_expansions:RNsigma} and \eqref{asympExp:eqn:proof_equivalence_expansions:RNgBe}. The $(n,k)$ terms of the expansion originate from the first of these terms, \ie we need to look at 
	\begin{align*}
		\sum_{n = 0}^{N'} &\eps^n \frac{i^n}{n!} \Bigl ( \tfrac{1}{2} \sigma(Y,Z) \Bigr )^n \, \Bigl ( \mbox{$\sum_{j = 1}^{N''} \eps^j \mathcal{L}_j(x,y,z)$} \Bigr )^k = 
		\\
		&= \sum_{n = 0}^{N'} \sum_{\sum_{j = 1}^{N''} k_j = k} \eps^{n + \sum_{j = 1}^{N''} j k_j} \frac{i^{n+k}}{n! k_1! \cdots k_{N''}!} 
		\bigl ( \tfrac{1}{2} \sigma(Y,Z) \bigr )^n \, \prod_{j = 1}^{N''} {\mathcal{L}_j}^{k_j}(x,y,z) 
	\end{align*}
	to obtain the $(n,k)$ term of this expansion. The remaining three terms define the remainder which will be treated in the last step. We define $\mathcal{L}_0(Y,Z) := \tfrac{1}{2} \sigma(Y,Z)$, $k_0 := n$ and recognize the result from Theorem~\ref{intro:thm:main_result:asymptoticExpansion}, the terms match: 
	\begin{align*}
		\sum_{n = k}^{N' \, N''}& \sum_{\substack{k_0 + \sum_{j = 1}^{N''} j k_j = n\\
		\sum_{j = 1}^{N''} k_j = k}} \eps^{n} \frac{i^{k+k_0}}{k_0! k_1! \cdots k_{N''}!} 
		{\mathcal{L}_0}^{k_0}(Y,Z) \, \prod_{j = 1}^{N''} {\mathcal{L}_j}^{k_j}(x,y,z)
	\end{align*}
	Obviously, the arguments made in the proof of Theorem~\ref{intro:thm:main_result:asymptoticExpansion} can be applied here as well, and we conclude that the $(n,k)$ term exists and is in the correct symbol class, $\Hoerrd{m_1 + m_2 - (n+k)\rho}$. 
	\medskip
	
	\noindent
	\textbf{Step 3: Existence of remainders. }
	The remainders of the expansions of $e^{i \frac{\eps}{2} \sigma(Y,Z)}$ and $\bigl ( \gBe(x,y,z) \bigr )^k$, 
	\begin{align}
		R_{N'}[\sigma](Y,Z) &= \eps^{N'+1} \, \frac{i^{N'+1}}{N'!} \bigl ( \tfrac{1}{2} \sigma(Y,Z) \bigr )^{N'+1} \, \int_0^1 \dd \tau ( 1 - \tau )^{N'} e^{i \tfrac{\eps}{2} \tau \sigma(Y,Z)} 
		\label{asympExp:eqn:proof_equivalence_expansions:RNsigma}
	\end{align}
	and 
	\begin{align}
		R_{N'' \, k}[\mathcal{L}R](x,y,z) &= \sum_{l = 1}^k \noverk{k}{l} \bigl ( \mbox{$\sum_{j = 1}^{N''} \eps^j \mathcal{L}_j(x,y,z)$} \bigr )^{k - l} \, \bigl ( R_{N''}[\gBe](x,y,z) \bigr )^l
		\label{asympExp:eqn:proof_equivalence_expansions:RNgBe}
	\end{align}
	with $R_{N''}[\gBe](x,y,z)$ as in Lemma~\ref{app:lemmaFormalExpansionTwister}, lead to three terms in the total remainder: 
	\begin{align*}
		R_{N N' N'' \, k}^{\Sigma}(x,Y,Z) &= R_{N'}[\sigma](Y,Z) \Bigl ( \bigl ( \mbox{$\sum_{j = 1}^{N''} \eps^j \mathcal{L}_j(x,y,z)$} \bigr )^k 
		+ R_{N'' \, k}[\mathcal{L}R](x,y,z) \Bigr ) 
			+ \\
			&\qquad 
		+ \Bigl ( \mbox{$\sum_{n = 0}^{N'}$} \eps^n \tfrac{i^n}{n!} \bigl ( \tfrac{1}{2} \sigma(Y,Z) \bigr )^n \Bigr ) \, R_{N'' \, k}[\mathcal{L}R](x,y,z) 
	\end{align*}
	Going through the motions of the proof to Theorem~\ref{intro:thm:main_result:asymptoticExpansion},  we count $p$s and $q$s, and then apply Lemma~\ref{appendix:existenceOscInt:remainder}. The first remainder, $R_{N'}[\sigma](Y,Z)$, is of order $\eps^{N'+1} < \eprec$ in $\eps$ and contributes $N'+1$ $q$s and $p$s. By Lemma~\ref{app:lemmaFormalExpansionTwister}, $R_{N''}[\gBe]$ contributes at least $N''+2$ $q$s and all prefactors are less than or equal to $\eps^{N''+1} < \eprec$. Thus the terms in $R_{N'' \, k}[\mathcal{L}R]$ contain at least $N'' + 2$ $q$s (for all $k \leq N$) and prefactors that are at most $\eps^{N''+1} < \eprec$. Hence, the total remainder exists as an oscillatory integral, is $\ordereprec$ small and in symbol class $\Hoerrd{m_1 + m_2 - (N+1)\rho}$. 
\end{proof}
\begin{remark}
	The asymptotic expansion of $\magWel$ can be immediately extended to an expansion of products of \emph{semiclassical two-parameter symbols} (see Definition~\ref{asympExp:defn:2ParameterSemiSymbol}). 
\end{remark}
%


\subsection{Relation between magnetic and ordinary Weyl calculus} 
\label{asympExp:magWQminSub}

In a previous work \cite{IftimieMantiouPurice:magneticPseudodifferentialOperators:2005}, Iftimie et al have investigated the relation between magnetic Weyl quantization and regular Weyl quantization combined with minimal substitution, the `usual' recipe to couple a quantum system to a magnetic field. However, since there were no small parameters $\eps$ and $\lambda$, we have to revisit their statements and adapt them to the present case. 

Let us define $\minsAl(X) := \xi - \lambda A(x)$ as coordinate transformation which relates momentum and kinetic momentum. With a little abuse of notation, we will also use $f \circ \minsAl(X) := f(x,\minsAl(X))$ to transform functions. In general, $\OpAel(f) \neq \Ope(f \circ \minsAl)$ since the latter is \emph{not} manifestly covariant. However, we would like to be able to compare results obtained with magnetic Weyl calculus to those obtained with usual Weyl calculus and minimal substitution. To show how the two calculi are connected, we need to make slightly stronger assumptions on the magnetic \emph{vector potential}. This may appear contrary to the spirit of the rest of the paper where it has been emphasized that restrictions should be placed on the magnetic \emph{field}. The necessity arises, because usual, non-magnetic Weyl calculus is used in this section. 
\begin{assumption}\label{asympExp:magWQminSub:assumption:stricterBA}
	We assume that the magnetic field $B$ is such that we can find a vector potential $A$ whose components satisfy 
	\begin{align*}
		\babs{\partial_x^{a} A_l(x)} \leq C_{a} 
		, 
		&& \forall 1 \leq l \leq d, \, \sabs{a} \geq 1, \, a \in \N_0^d 
		. 
	\end{align*}
\end{assumption}
In particular, this implies that the magnetic field $B = \dd A$ satisfies Assumption~\ref{asympExp:assumption:usualConditionsBA}, \ie its components are  $\mathcal{BC}^{\infty}$~functions. It is conceptually useful to introduce the line integral 
\begin{align}
	\Gamma^A(x,y) := \int_0^1 \dd s \, A \bigl (x + s(y - x) \bigr )
\end{align}
which is related to the circulation $\Gamma^A([x,y]) = (y - x) \cdot \Gamma^A(x,y)$; similarly, $\GAe([x,y]) =: (y - x) \cdot \GAe(x,y)$ defines the scaled line integral. This allows us to rewrite the integral kernel of a magnetic pseudodifferential operator $\OpAel(f)$ for $f \in \Hoerrd{m}$ as 
\begin{align}
	K_{f,\eps,\lambda}(x,y) = \int \dd \eta \, e^{- i y \cdot \eta} f \bigl ( \tfrac{\eps}{2} (x+y) , \eta - \lambda \GAe(x,y) \bigr ) . \label{asympExp:magWQminSub:magneticIntKernelRewritten}
\end{align}
If we had used minimal substitution instead, then we would have to replace the line integral $\GAe(x,y)$ by its mid-point value $A \bigl ( \tfrac{\eps}{2} (x+y) )$. 

\begin{thm}[\cite{IftimieMantiouPurice:magneticPseudodifferentialOperators:2005}]\label{asympExp:thm:magWQusualWQ}
	Assume the magnetic field satisfies Assumption~\ref{asympExp:magWQminSub:assumption:stricterBA}. Then for any $f \in \Hoerrd{m}$ there exists a \emph{unique} $g \in \Hoerrd{m}$ such that $\OpAel(f) = \Ope(g \circ \minsAl)$. $g$ can be expressed as an asymptotic series $g \asymp \sum_{n = 0}^{\infty} \sum_{k = 1}^n \eps^n \lambda^k \, g_{n,k}$, where $g_{n,k} \in \Hoerrd{m - (n+k) \rho}$ for all $n \geq 1$, and 
	\begin{align}
		\sum_{k = 1}^n \lambda^k g_{n,k}(x,\xi) = \eps^{-n} \sum_{\sabs{a} = n} \frac{1}{a !} \, \bigl ( i \partial_y \bigr )^{a} \Bigl ( \partial_{\xi}^{a} f \bigl ( x , \xi - \lambda \Gamma^A(x + \tfrac{\eps}{2} \, y , x - \tfrac{\eps}{2} \, y) + \lambda A(x) \bigr ) \Bigr ) \Bigl . \Bigr \rvert_{y = 0} 
		. 
	\end{align}
	Only terms with even powers of $\eps$ contribute, \ie $g_{n,k} = 0$ for all $n \in 2 \N_0 + 1$, $1 \leq k \leq n$. In particular we have $g_{0,0} = f$, $g_{1,0} = 0$, $g_{1,1} = 0$ and $f - g \in \Hoerrd{m - 3 \rho}$. 
\end{thm}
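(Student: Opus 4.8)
The plan is to compare integral kernels. By equation~\eqref{asympExp:magWQminSub:magneticIntKernelRewritten} the kernel of $\OpAel(f)$ has exactly the form of an ordinary $\eps$-Weyl kernel, except that the momentum argument of $f$ is shifted by the line integral $\lambda\GAe(x,y)$ instead of by its mid-point value $\lambda A(\tfrac{\eps}{2}(x+y))$, which is what minimal substitution produces. Hence the whole discrepancy between $\OpAel(f)$ and a non-magnetic Weyl operator is carried by $\GAe(x,y) - A(\tfrac{\eps}{2}(x+y))$. Writing $\GAe(x,y) = \int_0^1 \dd s\, A(\tfrac{\eps}{2}(x+y) + \eps(s-\tfrac12)(y-x))$ and Taylor expanding $A$ about the mid-point, this difference is $\mathcal{O}((\eps(y-x))^2)$ --- the first-order term drops out because $\int_0^1 (s-\tfrac12)\,\dd s = 0$ --- with $x$-coefficients built from derivatives of $A$ of order $\geq 2$, which are bounded by Assumption~\ref{asympExp:magWQminSub:assumption:stricterBA}. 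Existence of a genuine symbol $g \in \Hoerrd{m}$ with $\OpAel(f) = \Ope(g \circ \minsAl)$, together with uniqueness, is the content of the cited result of Iftimie et al.; uniqueness also follows directly from injectivity of $\Ope$ on symbols and bijectivity of $\minsAl$ on phase space. The remaining work is to identify $g$ and its expansion.

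To obtain the series I would Taylor expand $f$ in its momentum slot about $\eta - \lambda A(\tfrac{\eps}{2}(x+y))$: each term is a power of $\GAe(x,y) - A(\tfrac{\eps}{2}(x+y))$ --- a power series in $\eps(y-x)$ with bounded $x$-coefficients and lowest degree $2$ --- times a $\xi$-derivative of $f$. Expanding these powers of $y-x$ and converting every monomial $(y-x)^b$ into $(i\partial_\eta)^b$ by integration by parts in $\eta$ turns the kernel into that of an ordinary $\eps$-Weyl operator whose symbol, composed with $\minsAl$, is an asymptotic series. Grouping by the total power of $\eps$ --- which equals the total power of $y-x$, hence is even --- and by the order $k$ of the momentum-Taylor step --- which equals the power of $\lambda$ --- one sees that the $\eps^n\lambda^k$ term is a \emph{finite} sum (finitely many ways to split $n$ among at most $\lfloor n/2 \rfloor$ factors of $\GAe - A$), each summand a bounded function of $x$ times $\partial_\xi^{n+k} f$. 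Since $\partial_\xi^{n+k} f \in \Hoerrd{m - (n+k)\rho}$, this gives $g_{n,k} \in \Hoerrd{m - (n+k)\rho}$, and rewriting $\GAe$ in its line-integral form identifies the collected series with the stated closed formula (the $\eps^{-n}$ being harmless because each surviving monomial carries exactly $n$ powers of $\eps$).

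The vanishing for odd $n$ is a parity statement: $\Gamma^A(x + \tfrac{\eps}{2} y, x - \tfrac{\eps}{2} y) = \int_0^1 \dd s\, A(x + \eps(\tfrac12 - s)y)$ is invariant under $y \mapsto -y$ (combine with $s \mapsto 1-s$), so the function of $y$ inside the formula is even and its odd-order $y$-derivatives at $y = 0$ vanish; equivalently, only even-degree monomials of $y-x$ were ever produced. At $n = 0$ the formula reads $f(x, \xi - \lambda A(x) + \lambda A(x)) = f(x,\xi)$, so $g_{0,0} = f$ and $g_{0,k} = 0$ for $k \geq 1$; at $n = 1$ it vanishes, so $g_{1,0} = g_{1,1} = 0$; the first surviving correction is $(n,k) = (2,1)$, whence $f - g = -\sum_{n \geq 2} \sum_{k \geq 1} \eps^n \lambda^k g_{n,k} \in \Hoerrd{m - 3\rho}$.

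For the remainder after truncating at order $N$ I would substitute the integral forms of the two Taylor remainders --- in the momentum expansion of $f$ and in the flux difference $\GAe - A$ --- into the oscillatory integral defining the symbol, convert the surplus factors of $y - x$ to $\partial_\eta$-derivatives, and invoke the appendix oscillatory-integral estimate (the analogue of Lemma~\ref{appendix:existenceOscInt:remainder}). The main obstacle is precisely this last step: truncating the momentum-Taylor expansion of $f$ leaves an amplitude that grows polynomially in $y - x$ (through powers of $\GAe - A = \mathcal{O}((\eps(y-x))^2)$), so one must exploit the oscillation $e^{-i(y-x)\cdot\eta}$ and repeated integration by parts in $\eta$ to recover a symbol of the claimed order $m - (N+1)\rho$. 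This is where the stronger Assumption~\ref{asympExp:magWQminSub:assumption:stricterBA} on $A$ (which keeps all $x$-prefactors and the remainder amplitude's $\eta$-derivatives under control) and the appendix machinery are essential; the combinatorial regrouping and the parity argument, by contrast, are routine bookkeeping.
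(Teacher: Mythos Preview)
Your approach is essentially the same as the paper's: the paper applies the non-magnetic Wigner transform $\mathcal{W}_\eps$ to the magnetic kernel $K_{f,\eps,\lambda}$ (which is exactly your ``compare integral kernels and read off the ordinary Weyl symbol''), then Taylor expands $\Gamma^A(x+\tfrac{\eps}{2}y,x-\tfrac{\eps}{2}y)$ about $x$ and $f$ in its momentum slot, converts the resulting $y$-monomials to $\partial_\eta$-derivatives, and counts $\xi$-derivatives to get the symbol class --- precisely your plan, including the same parity argument $\int_{-1/2}^{1/2} s^n\,\dd s = 0$ for odd $n$. The only cosmetic difference is that the paper expands $\Gamma^A$ about the base point $x$ (after the Wigner change of variables makes $x$ the midpoint), while you expand $\GAe(x,y)$ about the midpoint $\tfrac{\eps}{2}(x+y)$ before that change of variables; these are the same computation, and both defer the rigorous remainder estimate to the cited Iftimie--M\u{a}ntoiu--Purice result.
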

\begin{remark}
	The reason that only even powers of $\eps$ contribute can be traced back to the symmetry of $\GAe(x,y) = + \GAe(y,x)$. Note that this is consistent with what was said in the introduction, $\GAe([x,y]) = (y - x) \cdot \GAe([x,y])$ is indeed odd. 
\end{remark}
\begin{proof}
	The proof is virtually identical to the proof of Proposition~6.7 in \cite{IftimieMantiouPurice:magneticPseudodifferentialOperators:2005}; we will only specialize the formal part to the present case, the rigorous justification can be found in the reference. 
	
	For a symbol $f \in \Hoerrd{m}$, the integral kernel of its magnetic quantization is given by equation~\eqref{asympExp:magWQminSub:magneticIntKernelRewritten}. On the other hand, it is clear how to invert $\OpAel$ for $\lambda = 0$, $A \equiv 0$: we apply the non-magnetic Wigner transform $\mathcal{W}_{\eps} := \mathcal{W}^{A \equiv 0}_{\eps,\lambda=0}$ to the magnetic integral kernel of $\OpAel(f)$: 
	\begin{align*}
		\mathcal{W}_{\eps} K_{f,\eps,\lambda} (X) &= \int \dd y \, e^{-i y \cdot \xi} \, K_{f,\eps,\lambda} \bigl ( \tfrac{x}{\eps} + \tfrac{y}{2} , \tfrac{x}{\eps} - \tfrac{y}{2} \bigr ) \\ 
		&= \int \dd y \, \int \dd \eta \, e^{i y \cdot \eta} \, f \bigl ( x , \eta + \xi - \lambda \Gamma^A \bigl ( x + \tfrac{\eps}{2} \, y , x - \tfrac{\eps}{2} \, y \bigr ) \bigr ) 
	\end{align*}
	Since we have a separation of scales, we can expand $\Gamma^A \bigl (x + \tfrac{\eps}{2} \, y , x - \tfrac{\eps}{2} \, y \bigr )$ in powers of $\eps$ up to some \emph{even} $N$. We will find that only \emph{even} powers of $\eps$ survive -- which immediately explains the absence of the first-order correction, 
	\begin{align*}
		\Gamma^A \bigl ( x + \tfrac{\eps}{2} \, y , x - \tfrac{\eps}{2} \, y \bigr ) &= \int_{-\nicefrac{1}{2}}^{+\nicefrac{1}{2}} \dd s \, \Biggl ( \sum_{n = 0}^N \eps^n s^n \, \sum_{\sabs{a} = n} \partial_x^{a} A(x) \, y^{a} + R_N(s,x,y) \Biggr ) \\
		&= \sum_{n = 0}^{\nicefrac{N}{2}} \eps^{2n} \left ( \frac{1}{2} \right )^{2n} \frac{1}{2n + 1} \, \sum_{\sabs{a} = 2n} \partial_x^{a} A(x) \, y^{a} + \int_{-\nicefrac{1}{2}}^{+\nicefrac{1}{2}} \dd s \, R_N(s,x,y) 
		. 
	\end{align*}
	The remainder is bounded since it is the integral of a $\mathcal{C}^{\infty}_{\mathrm{pol}}$ function over the compact set $[-\nicefrac{1}{2} , + \nicefrac{1}{2}] \times [0,1]$. In any event, The exact value will not matter if we choose $N$ large enough as we set $y = 0$ in the end. 
	
	A Taylor expansion of $f \bigl ( x , \eta + \xi - \lambda \Gamma^A \bigl ( x + \tfrac{\eps}{2} \, y , x - \tfrac{\eps}{2} \, y \bigr ) \bigr )$ around $\eta - \lambda \Gamma^A$ and some elementary integral manipulations formally yield for the $n$th term of the expansion 
	\begin{align}
		\eps^n \, \sum_{k = 1}^n \lambda^k \, g_{n,k}(x,\xi - \lambda A(x)) 
		&= \sum_{\sabs{a} = n} 
		\frac{1}{a !} \, \bigl ( i \partial_y \bigr )^{a} \Bigl ( \partial_{\xi}^{a} f \bigl ( x , \xi - \lambda \Gamma^A(x + \tfrac{\eps}{2} \, y , x - \tfrac{\eps}{2} \, y) \bigr ) \Bigr ) \Bigl . \Bigr \rvert_{y = 0} 
		\label{asympExp:eqn:expansion_f_g_n_k}
	\end{align}
	where we substitute the expansion above for $\Gamma^A$. Each derivative in $y$ will give one factor of $\eps$, \ie we will have $n$ altogether. On the other hand, we have \emph{at least} $1$ and \emph{at most} $n$ factors of $\lambda$. Only even powers in $\eps$ contribute, because the expansion of $\Gamma^A \bigl ( x + \tfrac{\eps}{2} \, y , x - \tfrac{\eps}{2} \, y \bigr )$ contains only \emph{even} powers of $y$. Furthermore, all terms in this sum are bounded functions in $x$, because derivatives of $A$ are bounded by assumption. 
	\medskip

	\noindent
	To show that $g_{n,k}$ is in symbol class $\Hoerrd{m - (n+k) \rho}$, we need to have a closer look at equation~\eqref{asympExp:eqn:expansion_f_g_n_k}: the only possibility to get $k$ factors of $\lambda$ is to derive $\partial_{\xi}^{a} f \bigl ( x , \xi - \lambda \Gamma^A(x + \tfrac{\eps}{2} \, y , x - \tfrac{\eps}{2} \, y) \bigr )$ $k$ times with respect to $y$. Each of these $y$ derivatives becomes an additional derivative of $\partial_{\xi}^{a}f$ with respect to momentum. Hence, there is a total of $\abs{a} + k = n + k$ derivatives with respect to $\xi$. 
	\medskip

	\noindent
	The rigorous justification that these integrals exist can be found in \cite[Proposition~6.7]{IftimieMantiouPurice:magneticPseudodifferentialOperators:2005}. 
\end{proof}
\begin{remark}\label{asympExp:magWQminSub:epsExpansion}
	If we are interested in a one-parameter expansion in $\eps$ only, then 
	\begin{align*}
		g_n(X) := \eps^{-n} \sum_{\sabs{a} = n} \frac{1}{a !} \, \bigl ( i \partial_y \bigr )^{a} \Bigl ( \partial_{\xi}^{a} f \bigl ( x , \xi - \lambda \Gamma^A(x + \tfrac{\eps}{2} \, y , x - \tfrac{\eps}{2} \, y) + \lambda A(x) \bigr ) \Bigr ) \Bigl . \Bigr \rvert_{y = 0} 
	\end{align*}
	gives the $n$th order correction in $\eps$. 
\end{remark}
\begin{prop}[\cite{IftimieMantiouPurice:magneticPseudodifferentialOperators:2005}]\label{asympExp:thm:usualWQmagWQ}
	The converse statement also holds: if the magnetic field satisfies Assumption~\ref{asympExp:magWQminSub:assumption:stricterBA}, then for each $g \in \Hoerrd{m}$ there exists a unique $f \in \Hoerrd{m}$ such that $\Ope(g \circ \minsAl) = \OpAel(f)$, $f \asymp \sum_{n = 0}^{\infty} \sum_{k = 1}^n \eps^n \lambda^k \, f_{n,k}$, $f_{n,k} \in \Hoerrd{m - (n+k) \rho}$, can be expressed as a formal power series in $\eps$ where the $n$th term is given by 
	\begin{align}
		\sum_{k = 1}^n \lambda^k f_{n,k}(x,\xi) = \eps^{-n} \sum_{\abs{a} = n} \frac{1}{a !} \, (i \partial_{y})^{a} \bigl ( \partial_{\xi}^{a} f \bigr ) \bigl ( x , \xi + \lambda \GAe(x - \nicefrac{y}{2} , x + \nicefrac{y}{2}) - \lambda A(x) \bigr ) \bigl . \bigr \rvert_{y = 0}
	\end{align}
	In particular we have $f_{0,0} = g$, $f_{1,0} = 0$, $f_{1,1} = 0$ and $g - f \in \Hoerrd{m - 3 \rho}$. 
\end{prop}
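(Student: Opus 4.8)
The plan is to mirror the proof of Theorem~\ref{asympExp:thm:magWQusualWQ}, interchanging the roles of the magnetic and the non-magnetic Weyl calculus. There one recovered the symbol $g \circ \minsAl$ by applying the \emph{non-magnetic} Wigner transform to the integral kernel of the \emph{magnetic} operator $\OpAel(f)$; here I would instead start from the kernel of the \emph{non-magnetic} operator $\Ope(g \circ \minsAl)$ -- which, under Assumption~\ref{asympExp:magWQminSub:assumption:stricterBA}, is a well-defined operator mapping $\Schwartz(\R^d)$ into itself and hence has a tempered-distribution kernel $K$ -- and apply the \emph{magnetic} dequantization of Lemma~\ref{magBel:inverseWeylQ}. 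Setting $f := \WignerTrafoel K = {\OpAel}^{-1}\bigl(\Ope(g \circ \minsAl)\bigr)$ and inserting the explicit integral kernel of $\Ope(g \circ \minsAl)$ together with the formula of Lemma~\ref{magBel:inverseWeylQ} produces an oscillatory integral in which the two magnetic contributions -- the momentum shift $-\lambda A(x)$ stemming from minimal substitution and the circulation phase $e^{-i \lambda \GAe([\cdot])}$ stemming from the magnetic Wigner transform -- combine, after the usual change of variables in the momentum integration and a Fourier inversion, so that $g$ gets evaluated at the shifted argument $\xi + \lambda \GAe(x - \nicefrac{y}{2}, x + \nicefrac{y}{2}) - \lambda A(x)$.

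From this point the expansion follows the proof of Theorem~\ref{asympExp:thm:magWQusualWQ} essentially line by line, with the non-magnetic line integral $\Gamma^A(x + \tfrac{\eps}{2} y , x - \tfrac{\eps}{2} y)$ replaced throughout by the scaled magnetic line integral $\GAe(x - \nicefrac{y}{2}, x + \nicefrac{y}{2})$, now entering with the opposite overall sign, together with the corresponding reversal of the $\lambda A(x)$-shift. As in the forward case, this line integral carries a factor of $\eps$ and is \emph{even in $y$}. Using the separation of scales I would therefore expand it in powers of $\eps$ up to a sufficiently high even order, Taylor-expand $g$ in its momentum variable, and carry out the momentum integration, which converts each monomial in the integration variable into a $y$-derivative $(i \partial_y)^a$ evaluated at $y = 0$; sorting the result by powers of $\eps$ and $\lambda$ reproduces the stated closed formula for $\sum_{k} \lambda^k f_{n,k}$. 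The evenness in $y$ forces only even powers of $\eps$ to survive -- the same mechanism noted in the remark after Theorem~\ref{asympExp:thm:magWQusualWQ} -- so in particular $f_{1,0} = f_{1,1} = 0$; evaluating at $\eps = 0$, $y = 0$ leaves $\GAe(x,x) = A(x)$ and hence $f_{0,0} = g$, and the first correction appears at second order in $\eps$, i.e. $g - f \in \Hoerrd{m - 3\rho}$. For the symbol classes I would argue exactly as in the forward theorem: a contribution carrying $k$ powers of $\lambda$ can only arise when $k$ of the $n$ derivatives $(i \partial_y)^a$ act on the momentum argument of $g$, each producing one extra $\partial_\xi$, so there are altogether $\sabs{a} + k = n + k$ momentum derivatives acting on $g$ and therefore $f_{n,k} \in \Hoerrd{m - (n+k)\rho}$.

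Uniqueness follows from injectivity of $\OpAel$ on $\Hoerrd{m}$: if $\OpAel(f_1) = \OpAel(f_2)$ then the two symbols share a single operator kernel, from which $\WignerTrafoel$ recovers them, so $f_1 = f_2$. Equivalently -- and this also yields existence at the level of the formal symbol expansion -- one can observe that the map $f \mapsto g$ supplied by Theorem~\ref{asympExp:thm:magWQusualWQ} is, order by order in the joint $(\eps,\lambda)$-grading, triangular with the identity on its diagonal (since $g_{0,0} = f$), hence invertible; this determines the $f_{n,k}$ recursively and shows they are unique.

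The genuinely hard part is analytic rather than formal: one must show that the oscillatory integral defining $f$ converges and lies in $\Hoerrd{m}$, and that the two Taylor remainders -- from the $\eps$-expansion of $\GAe$ and from the momentum Taylor expansion of $g$ -- can be bounded in the seminorms of $\Hoerrd{m - (N+1)\rho}$. This is precisely the step that the paper quotes rather than reproves, its justification being \cite[Proposition~6.7]{IftimieMantiouPurice:magneticPseudodifferentialOperators:2005}, which carries over verbatim after the simple scaling introduced at the beginning of Section~\ref{asympExp}. Within the formal computation, the one place that genuinely needs care is the bookkeeping that pins the number of momentum derivatives to $n + k$, since it is this count that fixes the symbol class of each $f_{n,k}$ as well as of the remainder.
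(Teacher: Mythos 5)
Your proof takes the same route as the paper: apply the magnetic Wigner transform (Lemma~\ref{magBel:inverseWeylQ}) to the integral kernel of $\Ope(g\circ\minsAl)$, then Taylor-expand in $\eps$, count derivatives to fix the symbol class, and defer the analytic justification of the oscillatory integrals to Iftimie--M\u{a}ntoiu--Purice -- which is exactly all the paper's one-sentence proof does. The only small slip is the citation: the converse direction rests on Proposition~6.9 of that reference, not Proposition~6.7 (which is the forward direction, Theorem~\ref{asympExp:thm:magWQusualWQ}).
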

\begin{proof}
	This proof works along the same lines: one magnetically Wigner-transforms the kernel of the operator $\Ope(f \circ \minsAl)$, we refer to  \cite[Proposition~6.9]{IftimieMantiouPurice:magneticPseudodifferentialOperators:2005} for details. 
\end{proof}
%


\section{Application to the Dirac equation} 
\label{Dirac}

To demonstrate the advantages of \emph{magnetic} Weyl calculus, we will apply it to a simple, yet interesting problem: the semirelativistic limit of the Dirac equation. This is a well-studied problem \cite{FoldyW:DiracTheory:1950,Thaller:DiracEquation:1992, Yndurain:relativisticQM:1996,Cordes:pseudodifferential_FW_transform:1983,Cordes:pseudodifferential_FW_transform:2004}, but we believe our derivation sheds a new light on origin of corrections. To keep this section readable and put emphasis on the computational aspects, we will dispense with mathematical rigor. Making these statements exact and putting them into context with previous works will be postponed to a future publication \cite{FuerstLein:nonRelLimitDirac:2008}. 

The dynamics of a relativistic spin-$\nicefrac{1}{2}$ particle with mass $m$ subjected to an electromagnetic field is described by the Dirac hamiltonian, 
\begin{align*}
	i \partial_t \Psi = \Bigl ( c^2 \, m \beta + c \, (-i \nabla_x) \cdot \alpha - e A(\hat{x}) \cdot \alpha + e V(\hat{x}) \Bigr ) \Psi 
	, 
	&& 
	\Psi \in L^2(\R^3,\C^4) 
	. 
\end{align*}
The hamiltonian consists of operator-valued matrices: $\alpha_j$, $j = 1,2,3$, has the $j$th Pauli matrix as entries in the offdiagonal, $\beta$ is the diagonal matrix with entries $1$, $1$, $-1$ and $-1$, namely,  
\begin{align*}
	\alpha_j = 
	\left (
	\begin{matrix}
		0 & \sigma_j \\
		\sigma_j & 0 \\
	\end{matrix}
	\right )
	, 
	&&
	\beta = 
	\left (
	\begin{matrix}
		\id_{\C^2} & 0 \\
		0 & - \id_{\C^2} \\
	\end{matrix}
	\right )
	. 
\end{align*}
As is customary, we have used shorthand notation for $\xi \cdot \alpha := \sum_{j = 1}^3 \xi_j \, \alpha_j$. If we assume that the components of the magnetic field $B = \dd A$ and an associated vector potential $A$ satisfy Assumption~\ref{asympExp:assumption:usualConditionsBA} and that $V \in \mathcal{BC}^{\infty}(\R^3)$, then Proposition~1.1 from \cite{Helffer_Nourrigat_Wang:spectre_lequation_Dirac:1989} as well as the fact that multiplication by $V$ defines a bounded operator on $L^2(\R^3,\C^4)$, the Dirac hamiltonian defines an essentially selfajoint operator on $\mathcal{C}_c^{\infty}(\R^3,\C^4)$. 

If we rescale the energy by $\nicefrac{1}{c^2}$ for convenience and absorb the charge into the definition of the potentials, we see that there are two \emph{natural} ways to write the Dirac hamiltonian, namely 
\begin{align*}
	\hat{H}_{\mathrm{D}} &= m \beta + \tfrac{1}{c} \bigl ( - i \nabla_x - \tfrac{1}{c} A(Q) \bigr ) \cdot \alpha + \tfrac{1}{c^2} V(Q) 
	\\
	&= m \beta + \bigl ( - i \tfrac{1}{c} \nabla_x - \tfrac{1}{c^2} A(Q) \bigr ) \cdot \alpha + \tfrac{1}{c^2} V(Q)
\end{align*}
where $Q := \hat{x}$ is the position operator. The first way of writing suggests to use 
\begin{align*}
	\PAc := - i \nabla_x - \tfrac{1}{c} A(Q) 
\end{align*}
as kinetic momentum operator, the second definition, 
\begin{align}
	\PiAc := - i \tfrac{1}{c} \nabla_x - \tfrac{1}{c^2} A(Q) = \tfrac{1}{c} \PAc 
	,
	\label{Dirac:eqn:kinetic_momentum}
\end{align}
absorbs an additional factor of $\nicefrac{1}{c}$. This seems nothing more than an algebraic trivialty, but is is this choice of building block operators which contains the physics. The first corresponds to the non-relativistic scaling where momenta are \emph{very} small and the $\nicefrac{1}{c} \rightarrow 0$ limit leads to the \emph{non}-relativistic limit. In \cite{FuerstLein:nonRelLimitDirac:2008} we derive the Pauli equation including fourth-order corrections with $\frac{1}{2m} \PiAc^2$ as kinetic energy operator in this scaling. This case is computationally more involved, than the second, the \emph{semirelativistic scaling}, which we will discuss now. The limit $\nicefrac{1}{c} \rightarrow 0$ will lead to the semirelativistic quantization where the kinetic energy operator is the magnetic quantization of $\sqrt{m^2 + \xi^2}$. The rescaled Dirac hamiltonian can be written as 
\begin{align}
	\hat{H}_{\mathrm{D}} = H_0(\PiAc) + \tfrac{1}{c^2} H_2(Q) 
	\label{Dirac:eqn:rescaled_Dirac_operator}
\end{align}
where 
\begin{align*}
	H_0(\xi) &:= m \beta + \xi \cdot \alpha \\
	H_2(x) &:= V(x) 
	. 
\end{align*}
Or to put another way, we can write $\hat{H}_{\mathrm{D}} = \OpAsr(\Hd)$ as the \emph{magnetic} quantization of the symbol $\Hd := H_0 + \tfrac{1}{c^2} H_2$ with respect to the pair of observables $(Q,\PiAc)$. The attentive reader will notice that we have defined the magnetic quantization of \emph{matrix-valued} symbols; we account for this by tensoring the Weyl system with the identity matrix $\id_{\C^4}$, 
\begin{align*}
	\OpAsr(\Hd) := \frac{1}{(2\pi)^3} \int \dd X \, \Fs^{-1}(\Hd) \, e^{i \sigma(X,(Q,\PiAc))} \otimes \id_{\C^4} 
	. 
\end{align*}
To put things into perspective, let us mention some works pertaining to the current approach: most of the previous treatments (\eg \cite{Hunziker:non_rel_limit_Dirac:1975,Grigore_Nenciu_Purice:non_rel_limit_Dirac:1989,Thaller:DiracEquation:1992,Yndurain:relativisticQM:1996}) are dealing with the derivation of the \emph{non}-relativistic limit where in our language the usual Pauli hamiltonian is the effective hamiltonian. Note that some of the authors use a different scaling to derive the non-relativistic limit: the prefactor of $A$ is taken to be $c$ and not $1$ in the original Dirac equation. In that case, the magnetic and electric field scale differently. 

The first to apply pseudodifferential techniques to the Dirac equation in order to obtain the \emph{semirelativistic} limit was Cordes \cite{Cordes:pseudodifferential_FW_transform:1983} who ordered the terms of the diagonalized hamiltonian by decay and not by powers of a small parameter. Physically, this is not satisfactory, because the prefactor decides which effects are and which are not measurable. Furthermore, it is not clear how to extend his ideas to allow for a non-relativistic limit. Brummelhuis and Nourrigat \cite{Brummelhuis_Nourrigat:scattering_Dirac:1999} also consider the problem with the help of pseudodifferential theory, they derive a power expansion in the semiclassical parameter $\hbar$ for approximate projections onto electronic and positronic state as well as the effective hamiltonian. They give the first-order correction to the effective hamiltonian explicitly and then continue with a semiclassical limit via an Egorov theorem. 

We adapt a technique developed by Panati, Spohn and Teufel, space-adiabatic perturbation theory \cite{PST:sapt:2002}, invented to cope with more general problems and due to its simplicity, the Dirac equation is one of the first systems this technique has been applied to \cite[Section~4.1]{Teufel:adiabaticPerturbationTheory:2003}. Their derivation rests on a very different physical mechanism, though: in their approach, the adiabatic decoupling is due to a slow variation of the electromagnetic potentials and in the limit $\eps \rightarrow 0$, the particle needs to travel farther and farther to see appreciable differences in the potentials. This limit is in fact equivalent to the case considered by Brummelhuis and Nourrigat \cite{Brummelhuis_Nourrigat:scattering_Dirac:1999} (see Appendix~\ref{appendix:equivalenceWeylSystems}). We, on the other hand, assume that the relativistic energy is small compared to the rest energy $m c^2$ of the particle and \emph{consequently}, the particle's velocity is small, $\nicefrac{v_0}{c} < 1$. While this sounds very similar to the point of view of Panati, Spohn and Teufel, the fields scale differently: the electromagnetic fields associated to the slowly-varying potentials $A(\eps x)$ and $V(\eps x)$ are of the order $\mathcal{O}(\eps)$ while the electromagnetic fields in our case are of order $\mathcal{O}(\nicefrac{1}{c^3})$ as we will see. Furthermore, we do not need to introduce another scale (the slow variation of the potentials) in addition to the energy scale given by $\nicefrac{1}{c}$. Our result holds almost in the entire range of validity of the Dirac equation: if the total energy of the particle approaches the pair creation threshold $2 m c^2$, we no longer expect the Dirac equation to give an accurate description of the physics anyway. 
\medskip

\noindent
Before we proceed, we give the first few terms of the asymptotic expansion in $\nicefrac{1}{c}$ of the magnetic Weyl product $\magBc$.

\subsection{Asymptotic expansion of $\magBc$} 
\label{Dirac:asymp_exp}

If we compare equation~\eqref{intro:observablesUsualScaling} with the definition of $\PiAc$, equation~\eqref{Dirac:eqn:kinetic_momentum}, we see that $\eps = \nicefrac{1}{c}$ and $\lambda = \nicefrac{1}{c^2}$. According to a simple modification of Theorem~\ref{appendix:equivalence_expansion}, we can write the expansion of $\magBc$ in terms of the two-parameter expansion of $\magWel$. The first few terms of $f \magBc g$ (with $f$ and $g$ being suitable matrix-valued functions, \eg matrix-valued Hörmander class symbols) are 
\begin{align}
	(f \magBc g)_{(0)} &= f \, g 
	\label{Dirac:eqn:magBc_expansion} 
	, 
	 \\
	(f \magBc g)_{(1)} &= - \tfrac{i}{2} \bigl \{ f , g \bigr \} 
	, 
	\notag \\
	(f \magBc g)_{(2)} &= - \tfrac{1}{4} \bigl ( \sigma(\nabla_Y,\nabla_Z) \bigr )^2 f(Y) \, g(Z) \bigl . \bigr \rvert_{Y = X = Z} 
	, 
	\notag \\
	(f \magBc g)_{(3)} &= \tfrac{i}{8} \bigl ( \sigma(\nabla_Y,\nabla_Z) \bigr )^3 f(Y) \, g(Z) \bigl . \bigr \rvert_{Y = X = Z} + \tfrac{i}{2} B_{lj}(x) \, \partial_{\xi_l} f(X) \, \partial_{\xi_j} g(X) 
	\notag 
	. 
\end{align}
While this seems very complicated, we will often need the product of two symbols which are functions of momentum only, $f \equiv f(\xi)$, $g \equiv g(\xi)$. In that case, only \emph{purely magnetic terms} (\ie $k_0 = 0$ in Theorem~\ref{intro:thm:main_result:asymptoticExpansion}) contribute,  
\begin{align}
	f \magBc g = f \, g + \tfrac{1}{c^3} \tfrac{i}{2} B_{lj} \partial_{\xi_l} f \, \partial_{\xi_j} g + \orderc{4} 
	= f \, g + \orderc{3}
	. 
	\label{Dirac:eqn:mag_product_two_momenta}
\end{align}
%


\subsection{Semirelativistic limit as adiabatic limit} 
\label{Dirac:semi_limit_adiabatic_limit}

The technique of choice, a modified version of \emph{space-adiabatic perturbation theory} \cite{PST:sapt:2002,PST:effDynamics:2003,Teufel:adiabaticPerturbationTheory:2003} that uses \emph{magnetic} Weyl calculus, rests on the interpretation of the semirelativistic limit $\nicefrac{1}{c} \rightarrow 0$ as an \emph{adiabatic limit}. This means, the Dirac hamiltonian has three characteristic features all adiabatic systems share, the so-called \emph{adiabatic trinity}: 
\begin{enumerate}[(i)]
	\item A distinction between \emph{slow and fast degrees of freedom}, \ie a decomposition of the original Hilbert space the hamiltonian acts on into $\mathcal{H} \cong \Hslow \otimes \Hfast$. Here, the fast Hilbert space is spanned by the electronic and the positronic state, $\Hfast \cong \C^2$. The slow Hilbert space is that of a \emph{non-}relativistic spin-$\nicefrac{1}{2}$ particle, $\Hslow \cong L^2(\R^3,\C^2)$. 
	\item A \emph{small}, dimensionless \emph{parameter} that quantifies the separation of scales. If $v_0$ is a typical velocity of the particle, we expect that no electron-positron pairs are created as long as $\nicefrac{v_0}{c} \ll 1$. However, for notational simplicity, we use $\nicefrac{1}{c}$ as small parameter. 
	\item A \emph{relevant part of the spectrum} of the \emph{unperturbed} operator, separated by a \emph{gap} from the remainder. If we consider the field-free case, then $H_0(- \nicefrac{i}{c} \, \nabla_x)$ fibers via the Fourier transform and the spectrum of each fiber hamiltonian is given by $\spec(H_0(\xi)) = \bigl \{ \pm \sqrt{m^2 + \xi^2} \bigr \}$. We are interested in the electronic subspace -- which is separated by a gap (of size $2 \sqrt{m^2 + \xi^2} \geq 2m$) from the positronic subspace. This ensures that even in the perturbed case, transitions from one band to the other are exponentially suppressed. 
\end{enumerate}
In a commutative diagram, the unperturbed situation looks as follows: 
\begin{align}
	\bfig
		\node L2C4(0,0)[L^2(\R^3,\C^4)]
		\node PiL2C4(0,-600)[\pi_0(- \nicefrac{i}{c} \, \nabla_x) \bigl ( L^2(\R^3,\C^4) \bigr )]
		\node L2C4FW(1200,0)[L^2(\R^3,\C^2) \otimes \C^2]
		\node L2C2FW(1200,-600)[L^2(\R^3,\C^2)]
		\arrow[L2C4`PiL2C4;\pi_0(- \nicefrac{i}{c} \, \nabla_x)]
		\arrow[L2C4`L2C4FW;u_0(- \nicefrac{i}{c} \, \nabla_x)]
		\arrow[L2C4FW`L2C2FW;\Piref]
		\arrow/-->/[PiL2C4`L2C2FW;]
		\Loop(0,0){L^2(\R^3,\C^4)}(ur,ul)_{e^{-i t H_0(- \nicefrac{i}{c} \, \nabla_x)}} 
		\Loop(1200,0){L^2(\R^3,\C^2) \otimes \C^2}(ur,ul)_{e^{-i t E(- \nicefrac{i}{c} \, \nabla_x) \beta}} 
		\Loop(1200,-600){L^2(\R^3,\C^2)}(dr,dl)^{e^{-i t E(- \nicefrac{i}{c} \, \nabla_x)}} 
	\efig
	\label{Dirac:diagram:unperturbed}
\end{align}
With a little abuse of notation, we will interpret all of these spaces as (subspaces of) $L^2(\R^3,\C^4) \cong L^2(\R^3) \otimes \C^4$ when convenient; operators acting on this space can be thought of as (operator-valued) $4 \times 4$ matrices or, if we are on the right-hand side of the diagram, as $2 \times 2$ matrices whose entries are itself (operator-valued) $2 \times 2$ matrices. The former identification is used during calculations, but the latter is conceptually useful. 

The objects in this diagram can be found in every text book on relativistic quantum mechanics (\eg \cite{Thaller:DiracEquation:1992,Yndurain:relativisticQM:1996}): $\pi_0$ is the projection onto the electronic subspace, 
\begin{align}
	\pi_0(\xi) &= \frac{1}{2} \left ( \id_{\C^4} + \frac{1}{E(\xi)} H_0(\xi) \right ) 
	. 
	\label{Dirac:eqn:pi0}
\end{align}
$u_0$ is the matrix-valued function that diagonalizes $H_0$, 
\begin{align}
	h_0 := u_0 \, H_0 \, {u_0}^{\ast} = \sqrt{m^2 + \xi^2} \beta =: E \beta 
	, 
	\label{Dirac:eqn:h0}
\end{align}
and `intertwines' $\pi_0$ with the \emph{reference projection}, 
\begin{align}
	u_0 \, \pi_0 \, {u_0}^{\ast} = \piref = \left (
	\begin{matrix}
		\id_{\C^2} & 0 \\
		0 & 0 \\
	\end{matrix}
	\right ) 
	, 
\end{align}
where 
\begin{align}
	u_0(\xi) &= \frac{1}{\sqrt{2E(E+m)}} \bigl ( (E+m) \id_{\C^4} - (\xi \cdot \alpha) \beta \bigr ) 
	. 
	\label{Dirac:eqn:u0}
\end{align}
The quantization of $\piref$ is $\Piref = \id_{L^2(\R^3,\C^2)} \otimes \piref$ projects out the positronic degrees of freedom in diagram~\eqref{Dirac:diagram:unperturbed}. If we are interested in the electron's dynamics only, we can describe it by an \emph{effective hamiltonian}, the quantization of 
\begin{align}
	h_{\mathrm{eff} \, 0} := \piref \, h_0 \, \piref = E \, \id_{\C^2} = \sqrt{m^2 + \xi^2} \, \id_{\C^2} 
	, 
\end{align}
in the following sense: 
\begin{align*}
	\Bigl ( e^{-i t H_0(- \nicefrac{i}{c} \, \nabla_x)} - {u_0}^{\ast}(- \nicefrac{i}{c} \, \nabla_x) \, e^{-i t E(- \nicefrac{i}{c} \, \nabla_x)} \, u_0(- \nicefrac{i}{c} \, \nabla_x) \Bigr ) \, \pi_0(- \nicefrac{i}{c} \, \nabla_x) = 0 
\end{align*}
Hence, we are able to relate the dynamics in the upper-left corner of diagram~\eqref{Dirac:diagram:unperturbed} with the reduced, effective dynamics in the lower-right corner. This reduction is possible as $H_0(- \nicefrac{i}{c} \, \nabla_x)$ and $\pi_0(- \nicefrac{i}{c} \, \nabla_x)$ commute, $\bigl [ H_0(- \nicefrac{i}{c} \, \nabla_x) , \pi_0(- \nicefrac{i}{c} \, \nabla_x) \bigr ] = 0$. Hence, the electronic subspace is invariant under the unperturbed dynamics. 
\medskip

\noindent
If we switch on the electromagnetic perturbation, this is no longer true, the commutator of $\hat{H}_{\mathrm{D}}$ and $\OpAsr(\pi_0) = \pi_0(\PiAc)$ is of order $\orderc{3}$. The immediate question is whether we can generalize diagram \eqref{Dirac:diagram:unperturbed} through some generalized projection $\Pi^c$ and generalized unitary $U^c$ such that 
\begin{align}
	\bfig
		\node L2C4(0,0)[L^2(\R^3,\C^4)]
		\node PiL2C4(0,-600)[\Pi^c \bigl ( L^2(\R^3,\C^4) \bigr )]
		\node L2C4FW(1200,0)[L^2(\R^3,\C^2) \otimes \C^2]
		\node L2C2FW(1200,-600)[L^2(\R^3,\C^2)]
		\arrow[L2C4`PiL2C4;\Pi^c]
		\arrow[L2C4`L2C4FW;U^c]
		\arrow[L2C4FW`L2C2FW;\Piref]
		\arrow/-->/[PiL2C4`L2C2FW;]
		\Loop(0,0){L^2(\R^3,\C^4)}(ur,ul)_{e^{-i t \hat{H}_{\mathrm{D}}}} 
		\Loop(1200,0){L^2(\R^3,\C^2) \otimes \C^2}(ur,ul)_{e^{-i t \OpAsr(h)}} 
		\Loop(1200,-600){L^2(\R^3,\C^2)}(dr,dl)^{e^{-i t \OpAsr(h_{\mathrm{eff}})}} 
	\efig
	\label{Dirac:diagram:perturbed}
\end{align}
holds. \emph{If} these objects exist, we require them to be an \emph{orthogonal projection} and a \emph{unitary} which commute with the full, perturbed Hamiltonian $\hat{H}_{\mathrm{D}}$ and block-diagonalize it, \ie 
\begin{align*}
	&{\Pi^c}^2 = \Pi^c, \; {\Pi^c}^{\ast} = \Pi^c 
	&& \bigl [ \OpAsr(\Hd),\Pi^c \bigr ] = 0 \\
	&{U^c}^{\ast} \, U^c = \id_{L^2(\R^3,\C^4)} , \; U^c \, {U^c}^{\ast} = \id_{L^2(\R^3,\C^2) \otimes \C^2} 
	&& U^c \, \Pi^c \, {U^c}^{\ast} = \Piref = \id_{L^2(\R^3)} \otimes \piref 
	. 
\end{align*}
Because of the last property $U^c$, is called \emph{intertwiner}. For suitable potentials $V$, we can \emph{translate} these equations (up to $\SemiHoermrd{-\infty}{1}{0}$) into equations of \emph{semiclassical symbols}. So \emph{if} there exist $\pi^c \in \SemiHoermrd{0}{1}{0}$ and $u^c \in \SemiHoermrd{0}{1}{0}$ such that $\Pi^c = \OpAsr(\pi^c) + \mathcal{O}_0(\nicefrac{1}{c^{\infty}})$\footnote{We say that two $c$-dependent bounded operators $A$ and $B$ on a Hilbert space $\mathcal{H}$ satisfy $A = B + \mathcal{O}_0(\nicefrac{1}{c^{\infty}})$ if for each $n \in \N_0$ there exists a constant $C_n$ such that $\norm{A - B}_{\mathcal{B}(\mathcal{H})} \leq C_n \, \tfrac{1}{c^n}$ } and $U^c = \OpAsr(u^c) + \mathcal{O}_0(\nicefrac{1}{c^{\infty}})$, then the corresponding symbols must satisfy 
\begin{align}
	&\pi^c \magBc \pi^c = \pi^c + \orderc{\infty}, \; 
	{\pi^c}^{\ast} = \pi^c 
	&&  \bigl [ \Hd , \pi^c \bigr ]_{\magBc} = \orderc{\infty} \label{Dirac:derivation:pilambdaConditions}\\
	&{u^c}^{\ast} \magBc u^c = \id_{\C^4} + \orderc{\infty} , \; 
	u^c \magBc {u^c}^{\ast} = \id_{\C^4} + \orderc{\infty} 
	&& u^c \magBc \pi^c \magBc {u^c}^{\ast} = \piref \label{Dirac:derivation:ulambdaConditions}
\end{align}
where the Moyal commutator is defined by $\bigl [ \Hd , \pi^c \bigr ]_{\magBc} := \Hd \magBc \pi^c - \pi^c \magBc \Hd$. 

If we incorporate magnetic Weyl calculus into space-adiabatic perturbation theory \cite{PST:sapt:2002,Teufel:adiabaticPerturbationTheory:2003}, we obtain an \emph{explicit resummation} for these symbols as well as formulas to correct $\OpAsr(\pi^c)$ and $\OpAsr(u^c)$ on the order $\mathcal{O}_0(\nicefrac{1}{c^{\infty}})$ to get a true projection and a true unitary in the operator sense. Then it is natural to assume that the principal symbols (the zeroth-order term) of the expansion of $\pi^c$ and $u^c$ have to be $\pi_0$ and $u_0$ -- the symbols of unitary and projection associated to the unperturbed hamiltonian. Starting from the unperturbed objects, Panati, Spohn and Teufel have found recursion relations which give corrections to $u_0$ and $\pi_0$ order-by-order in $\nicefrac{1}{c}$ which turn out to be \emph{independent} of the specific Weyl calculus used. 

The generator of the dynamics in the lower-right corner of diagram~\eqref{Dirac:diagram:perturbed} is the upper-left $2 \times 2$ submatrix of the diagonalized hamiltonian 
\begin{align}
	h := u^c \magBc \Hd \magBc {u^c}^{\ast}
	, 
\end{align}
\ie the \emph{effective hamiltonian} 
\begin{align}
	h_{\mathrm{eff}} := \piref \, h \, \piref 
	= \piref \, u^c \magBc \Hd \magBc {u^c}^{\ast} \, \piref 
	. 
\end{align}
There are technical and conceptual reasons for this specific choice that go beyond the scope of this text, we refer the interested reader to \cite[Section~3.3]{Teufel:adiabaticPerturbationTheory:2003} for details. The magnetic quantization of $h_{\mathrm{eff}}$ generates effective dynamics which approximate the full dynamics for electronic states, 
\begin{align*}
	\Bigl ( e^{- i t \OpAsr(\Hd)} - \OpAsr (u^c)^{\ast} \, e^{- i t \OpAsr(h_{\mathrm{eff}})} \, \OpAsr(u^c) \Bigr )  \, \OpAsr(\pi^c)  = \mathcal{O}_0 \bigl ( \abs{t} \nicefrac{1}{c^{\infty}} \bigr ) 
	. 
\end{align*}
%


\subsection{Effective hamiltonian} 
\label{Dirac:effective_hamiltonian}

In the present case, equation~\eqref{Dirac:eqn:mag_product_two_momenta} implies that the first correction to $\pi_0$ and $u_0$ is of \emph{third} order in $\nicefrac{1}{c}$: 
\begin{align*}
	&\pi_0 \magBc \pi_0 - \pi_0 = \orderc{3} 
	&& \bigl [ \Hd , \pi_0 \bigr ]_{\magBc} = \orderc{3} 
	\\
	&{u_0}^{\ast} \magBc u_0 = \id_{\C^4} + \orderc{3} , \; 
	u_0 \magBc {u_0}^{\ast} = \id_{\C^4} + \orderc{3} 
	&& u_0 \magBc \pi_0 \magBc {u_0}^{\ast} = \piref + \orderc{3} 
\end{align*}
From these equations, we conclude that $\pi_0$ and $u_0$ are an approximate Moyal projection and Moyal unitary, respectively, \ie $\pi^c = \pi_0 + \orderc{3}$ and $u^c = u_0 + \orderc{3}$. 

To compute the terms in the expansion of $h_{\mathrm{eff}}$ up to \emph{third} order in $\nicefrac{1}{c}$, we need to obtain the diagonalized hamiltonian symbol $h := u^c \magBc \Hd \magBc {u^c}^{\ast}$ up to \emph{second} order first. As expected, the leading-order term is the relativistic kinetic energy, 
\begin{align*}
	h_0 = \bigl ( u_0 \magBc H_0 \magBc {u_0}^{\ast} \bigr )_{(0)} = u_0 \, H_0 \, {u_0}^{\ast} = E \, \beta 
	. 
\end{align*}
If $h \asymp \sum_{n = 0}^{\infty} \tfrac{1}{c^n} h_n$ is the asymptotic expansion of the diagonalized hamiltonian, then we can determine $h_n$ recursively from $h \magBc u^c = u^c \magBc \Hd + \orderc{\infty}$: 
\begin{align}
	\tfrac{1}{c^n} (h_n \magBc u^c)_{(0)} &= \tfrac{1}{c^n} h_n \, u_0 + \orderc{n+1}
	\notag \\
	&= u^c \magBc H_{\mathrm{D}} - \bigl ( \mbox{$\sum_{k = 0}^{n-1}$} \, \tfrac{1}{c^k} \, h_k \bigr ) \magBc u^c 
		+ \orderc{n+1}
	\notag 
\end{align}
This simplifies calculations considerably. Starting from this equation, we arrive at the following formulas for $h_1$ and $h_2$: 
\begin{align*}
	h_1 &= \Bigl ( u_0 \, H_1 + u_1 \, H_0 - h_0 \, u_1 + (u_0 \magBc H_0)_{(1)} - (h_0 \magBc u_0)_{(1)} \Bigr ) \, {u_0}^{\ast} = 0 \\
	h_2 &= 
	\Bigl ( 
	u_0 \, H_2 + u_1 \, H_1 + u_2 \, H_0 - h_0 \, u_2 + 
	(u_0 \magBc H_1)_{(1)} + (u_1 \magBc H_0)_{(1)} - (h_0 \magBc u_1)_{(1)} - (h_1 \magBc u_0)_{(1)} + 
	\Bigr . \notag \\
	&\qquad \Bigl . + 
	(u_0 \magBc H_0)_{(2)} - (h_0 \magBc u_0)_{(2)} 
	\Bigr ) \, {u_0}^{\ast} 
	\\
	&= u_0 \, H_2 \, {u_0}^{\ast} = V \, \id_{\C^4}
\end{align*}
$h_1$ vanishes as expected and $h_2$ simplifies to $V$, because $u_1$, $u_2$ and $H_1$ vanish identically, and the product of two momentum-dependent functions contains no first- and second-order terms in $\nicefrac{1}{c}$ (equation~\eqref{Dirac:eqn:mag_product_two_momenta}). So far, we did not need to calculate one line explicitly to arrive at this result! The first three terms of the effective hamiltonian are obtained by sandwiching $h_0$ to $h_2$ with $\piref$. 
\begin{align*}
	h_{\mathrm{eff} \, 0} &= \piref \, h_0 \, \piref = E \, \id_{\C^2} = \sqrt{m^2 + \xi^2} \, \id_{\C^2} 
	\\
	h_{\mathrm{eff} \, 1} &= \piref \, h_1 \, \piref = 0
	\\
	h_{\mathrm{eff} \, 2} &= \piref \, h_2 \, \piref = V \, \id_{\C^2} 
\end{align*}
Finally, for $h_{\mathrm{eff} \, 3}$, we need to make some explicit computations and the first magnetic correction (third order in $\nicefrac{1}{c}$). There are three groups of terms which survive: 
\begin{align*}
	h_{\mathrm{eff} \, 3} &= 
	\piref \, h_3 \, \piref 
	= \piref \, \Bigl ( 
		u_3 \, H_0 - h_0 \, u_3 + 
		(u_0 \magBc H_2)_{(1)} - (h_2 \magBc u_0)_{(1)} + 
		(u_0 \magBc H_0)_{(3)} - (h_0 \magBc u_0)_{(3)}
	\Bigr ) \, {u_0}^{\ast} \, \piref \\
	&=: h_{\mathrm{eff} \, 30} + h_{\mathrm{eff} \, 31} + h_{\mathrm{eff} \, 33}
\end{align*}
The first two vanish when we project with $\piref$ from left and right, because $h_{\mathrm{eff} \, 0} = E \, \id_{\C^2}$ is a scalar symbol, 
\begin{align*}
	h_{\mathrm{eff} \, 30} 
	&= \piref \, \bigl ( u_3 \, H_0 - h_0 \, u_3 \bigr ) \, {u_0}^{\ast} \, \piref 
	 = \piref \, u_3 \, {u_0}^{\ast} \, u_0 \, H_0 \, {u_0}^{\ast} \, \piref - \piref \, h_0 \, u_3 \, {u_0}^{\ast} \, \piref 
	= 0
	. 
\end{align*}
The second and third group of terms need to be calculated explicitly; since the details are arithmetically intricate, we have moved them to Appendix~\ref{appendix:details_example}. $(u_0 \magBc H_2)_{(1)} - (h_2 \magBc u_0)_{(1)}$ gives a gradient coupling to the potential, 
\begin{align*}
	h_{\mathrm{eff} \, 31} &= \piref \, \bigl ( (u_0 \magBc H_2)_{(1)} - (h_2 \magBc u_0)_{(1)} \bigr ) \, \piref 
	= - i \piref \, \bigl \{ u_0 , V \bigr \} \, \piref 
	\\
	&
	= \frac{1}{2E (E+m)} (\nabla_x V \wedge \xi) \cdot \sigma 
	. 
\end{align*}
The last term, $h_{\mathrm{eff} \, 33}$, contains the spin-orbit coupling: 
\begin{align*}
	h_{\mathrm{eff} \, 33} &= \piref \, \bigl ( 
		(u_0 \magB H_0)_{(3)} - (h_0 \magB u_0)_{(3)}
	\bigr ) \, {u_0}^{\ast} \, \piref 
	= \tfrac{i}{2} B_{lj}(x) \, \piref \, \bigl ( 
		\partial_{\xi_l} u_0 \, \partial_{\xi_j} H_0 - \partial_{\xi_l} h_0 \, \partial_{\xi_j} u_0 
	\bigr ) \, {u_0}^{\ast} \, \piref
	\\
	&= - \frac{1}{2E} B \cdot \sigma 
\end{align*}
Altogether, the effective dynamics up to errors of fourth order in $\nicefrac{1}{c}$ are given by 
\begin{align}
	h_{\mathrm{eff}} = E \, \id_{\C^2} + \frac{1}{c^2} V \, \id_{\C^2} + \frac{1}{c^3} \biggl ( 
	\frac{1}{2E (E+m)} (\nabla_x V \wedge \xi) \cdot \sigma - \frac{1}{2E} B \cdot \sigma 
	\biggr ) + \orderc{4} 
	. 
\end{align}
The third-order correction is responsible for the spin dynamics and leads to the so-called T-BMT equation. This result has been previously derived by Cordes \cite{Cordes:pseudodifferential_FW_transform:1983} and Teufel \cite[Section~4.1]{Teufel:adiabaticPerturbationTheory:2003} under different hypothesis. We reiterate that the physical mechanism underlying the adiabatic decoupling in Teufel's work is different from the mechanism here. 

If we want to make this result rigorous, we will have to explicitly show that the construction of space-adiabatic perturbation theory still works when one replaces usual Weyl calculus with magnetic Weyl calculus. This has been the motivation for making the two-parameter expansion rigorous in the first place, but deserves a publication in its own right \cite{FuerstLein:nonRelLimitDirac:2008}. 


%
\begin{appendix}
	\section{Equivalence of Weyl systems in both scalings} 
	\label{appendix:equivalenceWeylSystems}

	\begin{lem}
		The adiabatic scaling and the usual scaling are related by the unitary $U_{\eps}$, $\bigl ( U_{\eps} \varphi \bigr )(x) := \eps^{- \nicefrac{d}{2}} \varphi \bigl ( \tfrac{x}{\eps} \bigr )$, $\varphi \in L^2(\R^d)$, \ie we have 
		\begin{align*}
			Q &= U_{\eps} \, \Qe \, U_{\eps}^{-1} \\
			P^A_{\eps,\lambda} &= U_{\eps} \, \Pi^A_{\eps,\lambda} \, U_{\eps}^{-1} 
			. 
		\end{align*}
	\end{lem}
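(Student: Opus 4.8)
The plan is to verify both intertwining relations by a direct dilation change of variables, working on the common core $\Schwartz(\R^d)$ and then passing to closures. First I would record the elementary properties of $U_{\eps}$: the substitution $x \mapsto \eps x$ shows $U_{\eps}$ is a unitary on $L^2(\R^d)$ with inverse $({U_{\eps}}^{-1}\psi)(x) = \eps^{\nicefrac{d}{2}}\psi(\eps x)$, and it maps $\Schwartz(\R^d)$ onto itself. Since $\Qe$, $\PiAel$, $Q = \hat{x}$ and $P^A_{\eps,\lambda}$ all leave $\Schwartz(\R^d)$ invariant and are essentially self-adjoint there (for $Q$ this is classical; for the kinetic momenta it follows from the same arguments that give essential self-adjointness of the Dirac operator, applied here to the pair $(Q,\PiAel)$ and to the scaled data $\eps\lambda B(\eps\,\cdot)$, $\lambda A(\eps\,\cdot)$), it suffices to check the two claimed identities as identities of operators on $\Schwartz(\R^d)$; conjugation by the unitary $U_{\eps}$ then automatically carries cores to cores, so the closures agree.

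Next I would compute the three building-block conjugations on $\psi \in \Schwartz(\R^d)$. For the position part, inserting the formulas for $U_{\eps}$ and ${U_{\eps}}^{-1}$ gives
\begin{align*}
	\bigl ( U_{\eps} \, \Qe \, {U_{\eps}}^{-1} \psi \bigr )(x) = \eps^{-\nicefrac{d}{2}} \, \eps \, \tfrac{x}{\eps} \, \eps^{\nicefrac{d}{2}} \, \psi(x) = x \, \psi(x) ,
\end{align*}
i.e. $U_{\eps}\,\Qe\,{U_{\eps}}^{-1} = \hat{x} = Q$, the first identity. For the differential part, from $(-i\nabla_x)({U_{\eps}}^{-1}\psi)(x) = -i\,\eps^{1 + \nicefrac{d}{2}}(\nabla\psi)(\eps x)$ one gets $U_{\eps}\,(-i\nabla_x)\,{U_{\eps}}^{-1} = -i\eps\nabla_x$. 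Finally, $A(\Qe)$ acts as multiplication by $A(\eps x)$, and the same substitution yields $\bigl (U_{\eps}\,A(\Qe)\,{U_{\eps}}^{-1}\psi\bigr )(x) = A\bigl (\eps\,\tfrac{x}{\eps}\bigr )\psi(x) = A(x)\psi(x)$, i.e. $U_{\eps}\,A(\Qe)\,{U_{\eps}}^{-1} = A(\hat{x}) = A(Q)$; here one should note that $A_l \in \mathcal{C}^{\infty}_{\mathrm{pol}}(\R^d)$ ensures $A(\eps\,\cdot)$ is again a multiplier of $\Schwartz(\R^d)$, so the manipulation is legitimate on the core. Combining linearity with the last two computations gives, on $\Schwartz(\R^d)$,
\begin{align*}
	U_{\eps} \, \PiAel \, {U_{\eps}}^{-1} = U_{\eps}\,(-i\nabla_x)\,{U_{\eps}}^{-1} - \lambda \, U_{\eps}\,A(\Qe)\,{U_{\eps}}^{-1} = -i\eps\nabla_x - \lambda A(\hat{x}) = P^A_{\eps,\lambda} ,
\end{align*}
which is the second identity.

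I do not expect a genuine obstacle: the entire content is a dilation change of variables, and the algebra above is essentially forced. The only point requiring a modicum of care is the domain/core bookkeeping — checking that $U_{\eps}$ is unitary and preserves $\Schwartz(\R^d)$, that the unbounded operators involved are essentially self-adjoint on that core, and that the rescaled vector potential $\lambda A(\eps\,\cdot)$ (equivalently the rescaled field $\eps\lambda B(\eps\,\cdot)$) still satisfies Assumption~\ref{asympExp:assumption:usualConditionsBA}, which is exactly the content of the Remark following that assumption. Once this is in place, the two displayed identities extend from $\Schwartz(\R^d)$ to the respective operator closures and the proof is complete.
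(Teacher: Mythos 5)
Your proof is correct and follows the same direct change-of-variables computation as the paper's proof, which verifies the two identities by applying $U_{\eps}\,\Qe\,U_{\eps}^{-1}$ and $U_{\eps}\,\Pi^A_{\eps,\lambda}\,U_{\eps}^{-1}$ to $U_{\eps}\varphi$ and reading off $Q$ and $P^A_{\eps,\lambda}$. The extra domain/core bookkeeping you include (unitarity of $U_{\eps}$, invariance of $\Schwartz(\R^d)$, essential self-adjointness of the kinetic momenta) is sound and simply makes explicit what the paper leaves implicit.
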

	\begin{proof}
		Let $\varphi \in L^2(\R^d)$. Then we have for $\Qe$ 
		\begin{align*}
			(U_{\eps} \, \Qe \, U_{\eps}^{-1} U_{\eps} \varphi )(x) &= \bigl ( U_{\eps} \Qe \varphi \bigr ) (x) = \eps^{-\nicefrac{d}{2}} \, \bigl ( \Qe \varphi \bigr ) \bigl ( \tfrac{x}{\eps} \bigr ) 
			\\
			&= \eps^{-\nicefrac{d}{2}} \, \eps \tfrac{x}{\eps} \varphi \bigl ( \tfrac{x}{\eps} \bigr ) 
			= Q \, \bigl ( U_{\eps} \varphi \bigr ) (x) . 
		\end{align*}
		Similarly, we get for the momentum operators 
		\begin{align*}
			(U_{\eps} \, \Pi^A_{\eps,\lambda} \, U_{\eps}^{-1} U_{\eps} \varphi )(x) 
			&= \bigl ( U_{\eps} \Pi^A_{\eps,\lambda} \varphi \bigr ) (x) 
			= \eps^{-\nicefrac{d}{2}} \, \bigl ( \Pi^A_{\eps,\lambda} \varphi \bigr ) \bigl ( \tfrac{x}{\eps} \bigr ) 
			\\
			&= \eps^{-\nicefrac{d}{2}} \, \bigl ( -i (\nabla_x \varphi) \bigl ( \tfrac{x}{\eps} \bigr ) - \lambda A \bigl (\eps \tfrac{x}{\eps} \bigr ) \, \varphi \bigl ( \tfrac{x}{\eps} \bigr ) \bigr ) 
			= \bigl ( - i \eps \nabla_x - \lambda A(Q) \bigr ) \, \bigl ( U_{\eps} \varphi \bigr ) (x) 
			. 
		\end{align*}
		Hence the two scalings are unitarily equivalent. 
	\end{proof}
	\begin{cor}
		The Weyl systems associated to the two scalings given by equations \eqref{intro:observablesUsualScaling} and \eqref{intro:observablesAdiabaticScaling} are unitarily equivalent. 
	\end{cor}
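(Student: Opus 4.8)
The plan is to show that conjugation by the single unitary $U_{\eps}$ of the preceding Lemma carries the adiabatic-scaling Weyl system onto the one built from the usual-scaling observables $Q$ and $P^A_{\eps,\lambda}$ of equation~\eqref{intro:observablesUsualScaling}, namely
\begin{align*}
	U_{\eps} \, \WeylSysAel(X) \, U_{\eps}^{-1} = e^{-i \sigma(X,(Q,P^A_{\eps,\lambda}))}
	&& \forall X \in \PSpace
	.
\end{align*}
Since $U_{\eps}$ does not depend on $X$, this \emph{is} the statement that the two Weyl systems are unitarily equivalent, and the Corollary follows at once.

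First I would note that for each fixed $X = (x,\xi)$ the operator $\sigma(X,(\Qe,\PiAel)) = \xi \cdot \Qe - x \cdot \PiAel$ is a real-linear combination of the components of $\Qe$ and $\PiAel$, hence essentially self-adjoint on $\Schwartz(\R^d)$; this is precisely the point at which $\WeylSysAel(X)$ is \emph{defined} as a unitary, via Stone's theorem, in \cite{MantoiuPurice:magneticWeylCalculus:2004}. The preceding Lemma supplies the operator identities $U_{\eps} \, \Qe \, U_{\eps}^{-1} = Q$ and $U_{\eps} \, \PiAel \, U_{\eps}^{-1} = P^A_{\eps,\lambda}$; since conjugation by a unitary preserves essential self-adjointness together with the operator core $\Schwartz(\R^d)$, it follows that
\begin{align*}
	U_{\eps} \, \sigma(X,(\Qe,\PiAel)) \, U_{\eps}^{-1} = \sigma(X,(Q,P^A_{\eps,\lambda}))
\end{align*}
as self-adjoint operators. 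Invoking the elementary fact that $U \, e^{-iT} \, U^{-1} = e^{-i U T U^{-1}}$ for $T$ self-adjoint and $U$ unitary (immediate from the uniqueness clause of Stone's theorem, or from the spectral theorem) with $T = \sigma(X,(\Qe,\PiAel))$ then yields the displayed intertwining relation.

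Alternatively, and bypassing any appeal to self-adjointness of the generators, one can verify the same relation directly from the explicit action of $\WeylSysAel$ recalled in the introduction: applying $U_{\eps}^{-1}$, then $\WeylSysAel(Y)$, then $U_{\eps}$ to $\varphi \in L^2(\R^d)$ and performing the substitution $x \mapsto x/\eps$ inside the oscillatory phase reproduces the kernel of $e^{-i\sigma(Y,(Q,P^A_{\eps,\lambda}))}$; the only bookkeeping point is the rescaling of the magnetic circulation recorded in~\eqref{intro:scaledCirculation}, i.e. $\lambda \GAe([x/\eps,x/\eps + y]) = \tfrac{\lambda}{\eps} \int_{[x,x+\eps y]} A$, after which everything is routine. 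The content of the Corollary is thus entirely carried by the operator identity of the Lemma; I expect no genuine obstacle beyond keeping careful track of the powers of $\eps$ appearing in the circulation term.
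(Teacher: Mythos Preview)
Your proposal is correct and is precisely the argument the paper has in mind: the Corollary is stated without proof because it is meant to follow immediately from the preceding Lemma via the functional-calculus identity $U e^{-iT} U^{-1} = e^{-i U T U^{-1}}$, which is exactly what you spell out. Your alternative direct computation with the explicit kernel is also fine and mirrors the kind of verification used elsewhere in the appendix.
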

	\begin{thm}\label{appendix:equivalence_expansion}
		The asymptotic two-parameter expansions of the magnetic Weyl products with respect to either scaling are given by the same terms order-by-order in $\eps$ and $\lambda$. 
	\end{thm}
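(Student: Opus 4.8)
The plan is to show that the two magnetic Weyl products are literally the \emph{same} bilinear map on symbols, so that the coincidence of their asymptotic expansions is immediate. Write $\OpAus$ for the quantization built from the usual-scaling Weyl system $\WeylUs(X) = e^{-i\sigma(X,(Q,P^A_{\eps,\lambda}))}$, and let $\sharp^{B,\mathrm{u}}_{\eps,\lambda}$ denote the associated magnetic Weyl product, defined implicitly by $\OpAus(f)\,\OpAus(g) =: \OpAus\bigl(f\,\sharp^{B,\mathrm{u}}_{\eps,\lambda}\,g\bigr)$ in exactly the way $\magWel$ is defined in the proof of Theorem~\ref{asympExp:thm:equivalenceProduct}. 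It then suffices to prove $f\,\sharp^{B,\mathrm{u}}_{\eps,\lambda}\,g = f \magWel g$ for all $f\in\Hoerrd{m_1}$, $g\in\Hoerrd{m_2}$.

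First I would lift the Corollary above to the level of quantizations. Its explicit form — immediate from the preceding Lemma relating $(Q,P^A_{\eps,\lambda})$ to $(\Qe,\PiAel)$ by conjugation with $U_{\eps}$, since the Weyl system is the exponential of a linear combination of these operators — is $\WeylUs(X) = U_{\eps}\,\WeylSysAel(X)\,U_{\eps}^{-1}$ for every $X \in \PSpace$. Multiplying by $(\Fs^{-1}f)(X)$ and integrating, first for $f\in\Schwartz(\PSpace)$ and then extending to $f\in\Hoerrd{m}$ by the same oscillatory-integral machinery used for Theorem~\ref{asympExp:thm:equivalenceProduct} (using that $U_\eps$ is a topological isomorphism of both $\Schwartz(\R^d)$ and $\Schwartz'(\R^d)$), gives $\OpAus(f) = U_{\eps}\,\OpAel(f)\,U_{\eps}^{-1}$. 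Since $U_\eps$ is unitary, operator products transform covariantly, so that
\begin{align*}
	\OpAus(f)\,\OpAus(g) &= U_\eps\,\OpAel(f)\,\OpAel(g)\,U_\eps^{-1} \\
	&= U_\eps\,\OpAel\bigl(f \magWel g\bigr)\,U_\eps^{-1} = \OpAus\bigl(f \magWel g\bigr).
\end{align*}
Comparing with the implicit definition of $\sharp^{B,\mathrm{u}}_{\eps,\lambda}$ and using injectivity of $\OpAus$ on $\Hoerrd{m_1+m_2}$ — inherited from that of $\OpAel$ by conjugation, the latter being a consequence of the inverse quantization of Lemma~\ref{magBel:inverseWeylQ} — we conclude $f\,\sharp^{B,\mathrm{u}}_{\eps,\lambda}\,g = f \magWel g$.

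Once the two products are identified, the theorem is formal: by Theorem~\ref{intro:thm:main_result:asymptoticExpansion} the terms $(f\magWel g)_{(n,k)}$ and the remainder $\tilde{R}_N$ of the two-parameter expansion are given by explicit formulas in $f$, $g$, $B$ and their derivatives, obtained by expanding the twister in the oscillatory-integral representation \eqref{asympExp:FourierFormMagneticComposition} of the product; since that representation is the very same function of the symbols in either scaling, every expansion term and every remainder agrees, so the two expansions coincide order-by-order in $\eps$ and $\lambda$ in the sense of Definition~\ref{asympExp:defn:precision}.

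The only real obstacle is the bookkeeping in the extension of $\OpAus(f) = U_\eps\,\OpAel(f)\,U_\eps^{-1}$ from Schwartz to H\"ormander symbols: one must check that conjugation by the rescaling $U_\eps$ carries the magnetic integral kernel of $\OpAel(f)$ to that of $\OpAus(f)$ — with the scaled circulation $\GAe$ replaced by the unscaled circulation $\tfrac{1}{\eps}\Gamma^A$ evaluated at the rescaled corners — and that the resulting kernel still defines a continuous map $\Schwartz(\R^d)\to\Schwartz(\R^d)$, resp.\ $\Schwartz'(\R^d)\to\Schwartz'(\R^d)$. This is precisely the content of the scaling remark following Assumption~\ref{asympExp:assumption:usualConditionsBA}, namely that $B,A$ satisfy the standing hypotheses if and only if $\Bel,\Ael$ do, so no new estimate is required. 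As a cross-check one may instead argue without operators: derive the composition law of $\WeylUs$ directly to obtain the analogue of \eqref{intro:compWeylSystem} with $\Qe$ replaced by $Q$, run it through the Wigner-transform computation of Theorem~\ref{asympExp:thm:equivalenceProduct}, and observe that after the change of variables implementing $U_\eps$ the twister $e^{iT_{\eps,\lambda}(x,Y,Z)}$ is the same function of $(x,Y,Z)$; expanding it then reproduces the computation in the proof of Theorem~\ref{intro:thm:main_result:asymptoticExpansion} verbatim.
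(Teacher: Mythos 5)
Your proof is correct, and it takes a genuinely different and somewhat cleaner route than the paper's. You establish the stronger claim that the two quantizations are unitarily conjugate, $\OpAus(f) = U_\eps\,\OpAel(f)\,U_\eps^{-1}$, by lifting the Corollary on Weyl systems to the full quantization map, and then exploit multiplicativity of conjugation together with injectivity of the quantization (via Lemma~\ref{magBel:inverseWeylQ}) to conclude that $\sharp^{B,\mathrm{u}}_{\eps,\lambda}$ and $\magWel$ are the \emph{same} bilinear map on symbols; equality of the asymptotic expansions is then vacuous. The paper instead argues at the level of the integral formula: it computes the composition law of $\WeylUs$, runs the resulting kernel through the usual-scaling Wigner transform $\WignerUs$, and observes that after the substitution implementing the rescaling the twister $e^{iT_{\eps,\lambda}(x,Y,Z)}$ is the identical function of $(x,Y,Z)$ as in the adiabatic case — which is precisely what you offer as a cross-check at the end. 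The paper's computation makes the mechanism visible (all $\eps,\lambda$-dependence of the product sits in the twister, and the twister is scaling-invariant), while your operator-theoretic argument makes the logical dependencies transparent and avoids any recomputation, recycling only infrastructure already in place (the appendix Lemma and Corollary, the inverse quantization, and the scaling remark after Assumption~\ref{asympExp:assumption:usualConditionsBA}). Both are valid; yours is the more economical derivation, the paper's is the more self-contained one.
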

	\begin{proof}
		To show that the asymptotic expansion of the product is the same, we have to revisit Theorem~\ref{asympExp:thm:equivalenceProduct} (proof of equivalence of the two non-asymptotic product formulas product formulas) and translate the relevant formulas to the usual scaling. It suffices to show that the twister in both cases is the same function and thus the expansion has to be identical, too. We denote magnetic Weyl quantization with respect to the Weyl system in usual scaling, $\WeylUs(X) := e^{i \sigma(X,(Q,P^A_{\eps,\lambda}))}$, with $\OpAus$. For convenience of the reader, we will follow the notation in the proof of Theorem \ref{intro:thm:main_result:asymptoticExpansion} as closely as possible. 
		
		With a simple scaling argument, we get the composition rule for the Weyl system $\WeylUs(X)$: 
		\begin{align*}
			\WeylUs(Y) \, \WeylUs(Z) 
			&= W^{\nicefrac{\lambda}{\eps} \, A}(\eps y , \eta) \, W^{\nicefrac{\lambda}{\eps} \, A}(\eps z , \zeta) 
			\\
			&= e^{\frac{i}{2} \sigma((\eps y , \eta),(\eps z , \zeta))} \, \Omega^{\nicefrac{\lambda}{\eps} \, B}(Q,Q+\eps y , Q + \eps y + \eps z) \, W^{\nicefrac{\lambda}{\eps} \, A}(\eps y + \eps z , \eta + \zeta) \\
			&= e^{i \frac{\eps}{2} \sigma(Y,Z)} \OBel(Q,Q+\eps y,Q+\eps y + \eps z) \WeylUs(Y+Z)
		\end{align*}
		In Step 1 of the proof, we conclude from the composition law of the Weyl system (reformulated in the usual scaling), 
		\begin{align*}
			\OpAus(f) \, \OpAus(g) = \frac{1}{(2\pi)^{2d}} \int \dd Z \biggl ( \int \dd Y \, &\bigl ( \Fs^{-1} f \bigr )(Y) \, \bigl ( \Fs^{-1} g \bigr )(Z-Y) \;  e^{i \tfrac{\eps}{2} \, \sigma(Y,Z)} \biggr . \cdot \\
			&\biggl . \qquad \cdot \OBel(Q,Q+\eps y,Q+\eps z) \biggr ) \WeylUs(Z) 
			, 
		\end{align*}
		that we need to find the operator kernel for 
		\begin{align*}
			\OBel(Q,Q+\eps y,Q+\eps z) \WeylUs(Z) 
			. 
		\end{align*}
		If we apply this operator to a function $\varphi \in L^2(\R^d)$, we obtain 
		\begin{align*}
			\bigl ( \OBel&(Q,Q+\eps y,Q+\eps z) \WeylUs(Z) \varphi \bigr )(v) = \OBel(v,v+\eps y,v + \eps z) \, e^{-i (v + \nicefrac{\eps}{2} \, z) \cdot \zeta} \, e^{- i \nicefrac{\lambda}{\eps} \Gamma^A([v,v+\eps z])} \, \varphi(v + \eps z) \\
			&= \int \dd u \, e^{-i (u - \nicefrac{\eps}{2} \, z) \cdot \zeta} \, e^{-i \nicefrac{\lambda}{\eps} \Gamma^A([u-\eps z,u])} \, \OBel(u - \eps z , u+\eps y , u) \, \delta \bigl ( u - (v + \eps z) \bigr ) \, \varphi(u) \\
			&=: \int \dd u \, \tilde{K}(y,Z;u,v) \, \varphi(u)
			. 
		\end{align*}
		To find the symbol associated to this object, we employ the Wigner transform adapted to observables in the usual scaling defined by 
		\begin{align*}
			\WignerUs (\varphi,\psi)(X) :=& \eps^d \, \bigl ( \Fs \expval{\varphi , \WeylUs(\cdot) \psi} \bigr )(-X) 
			\\
			=& \int \dd y \, e^{- i y \cdot \xi} e^{-i \nicefrac{\lambda}{\eps} \, \GA([x - \nicefrac{\eps}{2} \, y,x + \nicefrac{\eps}{2} \, y ])} \, \varphi^{\ast} \bigl ( x - \tfrac{\eps}{2}y \bigr ) \, \psi \bigl ( x + \tfrac{\eps}{2} y \bigr ) 
			. 
		\end{align*}
		If we apply this to the integral kernel above, we get by the essentially the same calculation as before, 
		\begin{align*}
			\bigl ( \WignerUs &\tilde{K}(y,Z;\cdot,\cdot) \bigr )(X) = \eps^d \, \int \dd u \, e^{- i u \cdot \xi} e^{-i \nicefrac{\lambda}{\eps} \, \GA([x - \nicefrac{\eps}{2} \, u,x + \nicefrac{\eps}{2} \, u ])} \, \tilde{K} \bigl ( y,Z;x - \tfrac{\eps}{2} u , x + \tfrac{\eps}{2} u \bigr ) \\
			&= \eps^d \, \eps^{-d} \int \dd u \, e^{- i u \cdot \xi} e^{-i \nicefrac{\lambda}{\eps} \, \GA([x - \nicefrac{\eps}{2} \, u,x + \nicefrac{\eps}{2} \, u ])} \, 
			e^{-i (x - \nicefrac{\eps}{2} \, u - \nicefrac{\eps}{2} \, z) \cdot \eta} \, e^{-i \nicefrac{\lambda}{\eps} \Gamma^A([x - \nicefrac{\eps}{2} \, u-\eps z,x - \nicefrac{\eps}{2} \, u])} \cdot \\
			&\qquad \qquad \qquad \cdot \OBel \bigl ( x - \tfrac{\eps}{2} \, u - \eps z , x - \tfrac{\eps}{2} \, u+\eps y , x - \tfrac{\eps}{2} \, u \bigr ) \, \delta (z+u) 
			 \\ 
			&= e^{i \sigma(X,Z)} \OBel \bigl (x - \tfrac{\eps}{2} \, z , x + \eps \bigl ( y - \tfrac{z}{2} \bigr ) , x + \tfrac{\eps}{2} \, z \bigr )
			. 
		\end{align*}
		If we plug this into the remainder of the proof, we see that the twister term (after replacing $z$ with $y+z$ just as in Step 3) obtained here is identical to the one obtained in the adiabatic scaling, 
		\begin{align*}
			e^{i \tfrac{\eps}{2} \sigma(X,Y)} \, \OBel \bigl (x - \tfrac{\eps}{2} \, (y+z) , x + \tfrac{\eps}{2} (y-z) , x + \tfrac{\eps}{2} \, (y+z) \bigr) 
			. 
		\end{align*}
		Hence the two expansions need to agree. 
	\end{proof}
	%
	

	\section{Expansion of the twister} 
	\label{app:formalExpansionTwister}

	\begin{lem}\label{app:lemmaFormalExpansionTwister}
		Assume $B$ satisfies Assumption~\ref{asympExp:assumption:usualConditionsBA}. Then we can expand $\gBe$ around $x$ to arbitrary order $N$ in powers of $\eps$: 
		\begin{align}
			\gBe(x,y,z) &= - \sum_{n = 1}^{N} \frac{\eps^n}{n!} \, \partial_{x_{j_1}} \cdots \partial_{x_{j_{n-1}}} B_{kl}(x) \, y_k \, z_l \, \left ( - \frac{1}{2} \right )^{n+1} \frac{1}{(n+1)^2} \sum_{c = 1}^n \noverk{n+1}{c} \, \cdot \notag \\
			&\qquad \qquad \cdot \bigl ( (1 - (-1)^{n+1}) c - (1 - (-1)^{c}) (n+1) \bigr ) \, y_{j_1} \cdots y_{j_{c-1}} z_{j_{c}} \cdots z_{j_{n-1}} \notag + R_N[\gBe](x,y,z) \\
			&=: - \sum_{n = 1}^{N} \eps^n \sum_{\abs{\alpha} + \abs{\beta} = n-1} C_{n,\alpha,\beta} \, \partial_x^{\alpha} \partial_x^{\beta} B_{kl}(x) \, y_k z_l \, y^{\alpha} \, z^{\beta} + R_N[\gBe](x,y,z) \\
			&=: - \sum_{n = 1}^{N} \eps^n \, {\mathcal{L}}_n + R_N[\gBe](x,y,z) 
		\end{align}
		In particular, the flux is of order $\eps$ and the $n$th-order term is a sum of monomials in position of degree $n+1$ and each of the terms is a $\mathcal{BC}^{\infty}(\R^d_x,\mathcal{C}^{\infty}_{\mathrm{pol}}(\R^d_y \times \R^d_z))$ function. The remainder is a $\mathcal{BC}^{\infty}(\R^d,\mathcal{C}^{\infty}_{\mathrm{pol}}(\R^d \times \R^d))$ function that is $\ordere{N+1}$ and can be explicitly written as a bounded function of $x$, $y$ and $z$ as well as $N+2$ factors of $y$ and $z$. 
	\end{lem}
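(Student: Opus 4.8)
The plan is to obtain an explicit integral representation of the scaled flux $\gBe$, Taylor-expand the magnetic field in powers of $\eps$, and evaluate the resulting elementary parameter integrals by hand; this is a purely pointwise statement about a smooth function, so no oscillatory-integral machinery enters. Concretely, first I would write $\gBe(x,y,z) = \Gamma^B(\langle a,b,c\rangle)$ with $a := x - \tfrac{\eps}{2}(y+z)$, $b := x + \tfrac{\eps}{2}(y-z)$, $c := x + \tfrac{\eps}{2}(y+z)$, whose edge vectors are $b-a = \eps y$, $c-b = \eps z$ and $a-c = -\eps(y+z)$. Parametrizing the triangle directly (or, equivalently, summing the three circulations $\Gamma^A([a,b]) + \Gamma^A([b,c]) + \Gamma^A([c,a])$ given by Stokes' theorem, expanding each via \eqref{asympExp:magneticCirculation}, and using $\partial_{x_l}A_j - \partial_{x_j}A_l = B_{lj}$ together with $B_{lj} = -B_{jl}$ to eliminate the vector potential) one arrives at
\begin{align*}
	\gBe(x,y,z) &= -\eps\, \Bte_{lj}(x,y,z)\, y_l z_j , \\
	\Bte_{lj}(x,y,z) &= \tfrac{1}{2} \int_{-\nicefrac{1}{2}}^{+\nicefrac{1}{2}}\dd t \int_0^1 \dd s\; s\, \Bigl[ B_{lj}\bigl(x + \eps s(ty - \tfrac{z}{2})\bigr) + B_{lj}\bigl(x + \eps s(\tfrac{y}{2} + tz)\bigr)\Bigr] .
\end{align*}
Since $B_{lj}\in\mathcal{BC}^\infty$, this representation already shows $\Bte_{lj}\in\mathcal{BC}^\infty(\R^d_x\times\R^d_y\times\R^d_z)$ — every $x$-, $y$- or $z$-derivative merely lands one more bounded derivative on $B_{lj}$ — so $\gBe = \ordere{1}$ and carries exactly two $q$s.

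Next I would Taylor-expand each $B_{lj}$ inside $\Bte_{lj}$ about $x$ with Lagrange integral remainder to order $N-1$; after the prefactor $\eps$ this reaches order $N$ for $\gBe$. Writing the arguments as $x + \eps v$ with $v = s(ty - \tfrac{z}{2})$ resp.\ $v = s(\tfrac{y}{2} + tz)$, the multinomial expansion $(v\cdot\partial_x)^m B_{lj}(x) = \sum_{|\gamma|=m}\tfrac{m!}{\gamma!}v^\gamma\partial_x^\gamma B_{lj}(x)$, followed by the linear split of each component of $v$ into its $y$-part and its $z$-part, produces the sum over the number $c$ of factors taken from $y$. Carrying out the elementary integrals $\int_0^1 s^{m+1}\dd s$ and $\int_{-\nicefrac{1}{2}}^{+\nicefrac{1}{2}}t^k\dd t$ — the latter vanishing for odd $k$, which is exactly the origin of the parity factors $(1-(-1)^{n+1})$, $(1-(-1)^c)$ and of the cancellation of part of the terms — then symmetrizing $\partial_x^\gamma$ in its indices and collecting like monomials in $(y,z)$ reproduces, upon setting $n = m+1$, precisely the coefficients $\noverk{n+1}{c}$, $(-\tfrac{1}{2})^{n+1}$ and $\tfrac{1}{(n+1)^2}$ in the displayed formula for $\mathcal{L}_n$. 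Each $\mathcal{L}_n$ is a finite sum of a bounded function of $x$ (a derivative of $B$) times a monomial of degree $n+1$ in $(y,z)$, hence lies in $\mathcal{BC}^\infty(\R^d_x,\mathcal{C}^\infty_{\mathrm{pol}}(\R^d_y\times\R^d_z))$.

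The remainder $R_N[\gBe]$ then equals $-\eps\,y_l z_j$ times the two Lagrange remainders of the $B_{lj}$-expansions; each is $\tfrac{\eps^N}{(N-1)!}\int_0^1\dd\tau\,(1-\tau)^{N-1}(v\cdot\partial_x)^N B_{lj}(x+\tau\eps v)$, i.e.\ a degree-$N$ polynomial in $(y,z)$ times $(\partial_x^\gamma B_{lj})(x+\tau\eps v)$ with $|\gamma|=N$, integrated over the compact set of $(s,t,\tau)$. Hence $R_N[\gBe] = \ordere{N+1}$, carries $N+2$ factors of $y$ and $z$, and — all derivatives of $B$ being bounded — is a $\mathcal{BC}^\infty(\R^d,\mathcal{C}^\infty_{\mathrm{pol}}(\R^d\times\R^d))$ function with the explicit structure asserted in the statement.

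I expect the only genuine difficulty to be the bookkeeping in the middle step: pinning down the signs and the exact closed-form combinatorial factors from the interplay of the $s$-integral, the parity-selective $t$-integral and the $y/z$ split, and checking en route that no term of order lower in $\eps$ than stated, and no term with fewer $q$s than claimed, is generated. Everything else is a routine estimate using only that $B$ has bounded derivatives of all orders.
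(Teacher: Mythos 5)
Your proposal is correct and takes essentially the same route as the paper: both derive an $(s,t)$-parametrized integral of $B$ over the small triangle (the paper via the transversal gauge and Stokes; you via direct parametrization, arriving at the compact two-term form $\gBe = -\eps\,\Bte_{lj}(x,y,z)\,y_l z_j$ that the paper states without derivation in the proof of Theorem~\ref{asympExp:thm:lambdaExpansion}), then Taylor-expand each $B_{lj}$ about $x$ with integral Lagrange remainder and evaluate the elementary parity-selective $(s,t)$-integrals. The one caveat is that the closed-form combinatorial coefficients you defer to ``routine bookkeeping'' are precisely where the paper spends the bulk of its proof, so while the approach is sound and would succeed, the proposal stops short of the lemma's main computation.
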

	\begin{proof}
		We choose the transversal gauge to represent $B$, \ie 
		\begin{align}
			A_l(x + a) &= - \int_0^1 \dd s \, B_{lj} (x + s a) \, s a_j 
			\label{appendix:eqn:transversal_gauge}
		\end{align}
		and rewrite the flux integral into three line integrals over the edges of the triangle. 
		\begin{align*}
			\gBe(x,y,z) &= \frac{1}{\eps} \int_0^1 \dd t \Bigl [ \eps \, (y_l + z_l) \, A_l \bigl (x + \eps (t - \nicefrac{1}{2}) (y+z) \bigr ) + \Bigr . \\
			&\qquad \qquad \Bigl . 
			- \eps \, y_l \, A_l \bigl ( x + \eps (t - \nicefrac{1}{2}) y - \tfrac{\eps}{2} z \bigr ) 
			- \eps \, z_l \, A_l \bigl ( x + \tfrac{\eps}{2} y + \eps (t - \nicefrac{1}{2}) z \bigr ) \Bigr ] 
			\\
			&= \eps \int_{-\nicefrac{1}{2}}^{+\nicefrac{1}{2}} \dd t \int_0^1 \dd s \, s \Bigl [ - B_{lj} \bigl ( x + \eps s t (y + z) \bigr ) \, (y_l + z_l) \, t (y_j + z_j) +  \Bigr. \\
			&\qquad \qquad \Bigl .  
			+ B_{lj} \bigl ( x + \eps s \bigl ( t y - \tfrac{z}{2} \bigr ) \bigr ) \, y_l \, \bigl ( t y_j - \tfrac{z_j}{2} \bigr ) 
			+ B_{lj} \bigl ( x + \eps s \bigl ( \tfrac{y}{2} + t z \bigr ) \bigr ) \, z_l \, \bigl ( \tfrac{y_j}{2} + t z_j \bigr ) \Bigr ]
		\end{align*}
		All these terms have a prefactor of $\eps$ which stems from the explicit expression of transversal gauge. We will now Taylor expand each of the three terms up to $N-1$th order around $x$ (so that it is of $N$th order in $\eps$). 
		\begin{align*}
			\int_{-\nicefrac{1}{2}}^{+\nicefrac{1}{2}} &\dd t \, \int_0^1 \dd s \, B_{lj} \bigl ( x + \eps s t (y + z) \bigr ) \, \eps s (y_{j} + z_{j} ) = \\
			&= \int_0^1 \dd s \, \int_{-\nicefrac{1}{2}}^{+\nicefrac{1}{2}} \dd t \, \sum_{n = 0}^{N-1} \frac{\eps^{n+1}}{n!} s^{n+1} s^{-n} \, \partial_{x_{j_1}} \cdots \partial_{x_{j_n}} B_{lj_{n+1}}(x) \, t^{n+1} \, \prod_{m = 1}^{n+1} (y_{j_m} + z_{j_m}) + R_{1 \, N \, l}(x,y,z)
		\end{align*}
		The remainder $R_{1 \, N \, l}$ is of order $N+1$ in $\eps$, bounded in $x$ and polynomially bounded in $y$ and $z$. It is a sum of monomials in $y$ and $z$ of degree $N+1$. 
		\begin{align*}
			R_{1 \, N \, l}&(x,y,z) = \int_0^1 \dd \tau \int_0^1 \dd s  \int_{-\nicefrac{1}{2}}^{+\nicefrac{1}{2}} \dd t \, \frac{1}{(N-1)!} (1-\tau)^{N-1} \, \partial_{\tau}^{N} B_{lj}(x + \eps \tau s t (y+z) ) \, \eps s t (y + z) 
			\\
			&= \eps^{N+1} \, \sum_{\sabs{\alpha} = N} \frac{N}{\alpha !} \int_0^1 \dd s  \int_{-\nicefrac{1}{2}}^{+\nicefrac{1}{2}} \dd t \, s \, t^{N+1} \, (y + z)^{\alpha} \, (y_j + z_j) \, 
			\int_0^1 \dd \tau \, (1 - \tau)^{N-1} \, \partial_x^{\alpha} B_{lj} (x + \eps \tau st (y+z)) 
		\end{align*}
		The $n$th order term in $\eps$ (the $n-1$th term of the Taylor expansion) reads 
		\begin{align*}
			\frac{\eps^{n}}{(n-1)!} &\int_0^1 \dd s \, s \, \int_{-\nicefrac{1}{2}}^{+\nicefrac{1}{2}} \dd t \, t^{n} \, \partial_{x_{j_1}} \cdots \partial_{x_{j_{n-1}}} B_{lj_{n}}(x) \, \prod_{m = 1}^{n} (y_{j_m} + z_{j_m}) = 
			\\
			&= \frac{1}{2} \frac{\eps^{n}}{n!} \, \left ( \frac{1}{2} \right )^{n+1} \frac{1 + (-1)^{n}}{n+1} \, \partial_{x_{j_1}} \cdots \partial_{x_{j_n}} B_{lj_{n}}(x) \, \sum_{m = 0}^{n} \noverk{n}{m} y_{j_1} \cdots y_{j_m} z_{j_{m+1}} \cdots z_{j_{n}} 
			. 
		\end{align*}
		The other factors can be calculated in the same fashion: 
		\begin{align*}
			\frac{\eps^{n}}{(n-1)!} &\int_{-\nicefrac{1}{2}}^{+\nicefrac{1}{2}} \dd t \, \int_0^1 \dd s \, s^{n} s^{-(n-1)} \, \partial_{x_{j_1}} \cdots \partial_{x_{j_{n-1}}} B_{lj_{n}}(x) \, \prod_{m = 1}^{n} \bigl ( t y_{j_m} - \tfrac{1}{2} z_{j_m} \bigr ) = 
			\\
			&= \frac{\eps^{n}}{n!} \left ( \frac{1}{2} \right )^{n+2} \, \partial_{x_{j_1}} \cdots \partial_{x_{j_{n-1}}} B_{lj_{n}}(x) \, \sum_{m = 0}^{n} \noverk{n}{m} \, \frac{(-1)^{n-m} +(-1)^{n}}{m+1} \, y_{j_1} \cdots y_{j_m} z_{j_{m+1}} \cdots z_{j_{n}} 
		\end{align*}
		The remainder is also of the correct order in $\eps$, contains $N+2$ $q$s and a $\mathcal{BC}^{\infty}(\R^d,\mathcal{C}^{\infty}_{\mathrm{pol}}(\R^d \times \R^d))$ function as prefactor: 
		\begin{align*}
			R_{2 \, N \, l}(x,y,z) &= \eps^{N+1} \, \sum_{\sabs{\alpha} = N} \frac{N}{\alpha !} \int_0^1 \dd s  \int_{-\nicefrac{1}{2}}^{+\nicefrac{1}{2}} \dd t \, s \, \bigl ( t y - \tfrac{z}{2} \bigr )^{\alpha} \, \bigl ( t y_j - \tfrac{z_j}{2} \bigr ) 
			\cdot \\
			& \qquad \cdot 
			\int_0^1 \dd \tau \, (1 - \tau)^{N-1} \, \partial_x^{\alpha} B_{lj} \bigl ( x + \eps \tau \bigl ( st y - \tfrac{z}{2} \bigr ) \bigr ) 
		\end{align*}
		The last term satisfies the same properties as $R_{1 \, N \, l}$: 
		\begin{align*}
			\int_{-\nicefrac{1}{2}}^{+\nicefrac{1}{2}} \dd t &\int_0^1 \dd s \, \frac{\eps^{n}}{(n-1)!} s^{n} s^{-(n-1)} \, \partial_{x_{j_1}} \cdots \partial_{x_{j_{n-1}}} B_{lj_{n}}(x) \, \prod_{m = 1}^{n} \bigl ( \tfrac{1}{2} y_{j_m} + t z_{j_m} \bigr ) = 
			\\
			&= \frac{\eps^{n}}{n!}  \, \left ( \frac{1}{2} \right )^{n+2} \, \partial_{x_{j_1}} \cdots \partial_{x_{j_{n-1}}} B_{lj_{n}}(x) \, \sum_{m = 0}^{n} \noverk{n}{m} \, \frac{1 + (-1)^{n-m}}{n+1-m} \, y_{j_1} \cdots y_{j_m} z_{j_{m+1}} \cdots z_{j_{n}} 
		\end{align*}
		$R_{3 \, N \, l}$ satisfies the same properties as $R_{1 \, N \, l}$ and $R_{2 \, N \, l}$, 
		\begin{align*}
			R_{3 \, N \, l}(x,y,z) &= \eps^{N+1} \, \sum_{\sabs{\alpha} = N} \frac{N}{\alpha !} \int_0^1 \dd s  \int_{-\nicefrac{1}{2}}^{+\nicefrac{1}{2}} \dd t \, s \, \bigl ( \tfrac{y}{2} + t z \bigr )^{\alpha} \, \bigl ( \tfrac{y_j}{2} + t z_j \bigr )  
			\cdot \\
			& \qquad \cdot 
			\int_0^1 \dd \tau \, (1 - \tau)^{N-1} \, \partial_x^{\alpha} B_{lj} \bigl ( x + \eps \tau s \bigl ( \tfrac{y}{2} + t z \bigr ) \bigr ) 
			. 
		\end{align*}
		Put together, we obtain for the $n$th order term: 
		\begin{align*}
			\frac{1}{2} &\frac{\eps^{n}}{n!} \left ( \frac{1}{2} \right )^{n+1} \, \partial_{x_{j_1}} \cdots \partial_{x_{j_{n-1}}} B_{lj_{n}}(x) \, \sum_{m = 0}^{n} \noverk{n}{m} \cdot \\
			&\qquad \cdot \left [ \frac{1 + (-1)^{n}}{n+1} (y_l + z_l) - \frac{(-1)^{n-m} + (-1)^{n}}{m+1} y_l - \frac{1 + (-1)^{n-m}}{n+1-m} z_l \right ] \, y_{j_1} \cdots y_{j_m} z_{j_{m+1}} \cdots z_{j_{n}} \\
			&= \frac{\eps^{n}}{n!} \left ( - \frac{1}{2} \right )^{n+1} \frac{1}{(n+1)^2} \, \sum_{\sabs{\alpha} + \sabs{\beta} = n-1} \partial_x^{\alpha} \partial_x^{\beta} B_{lk}(x) \, y_l z_k \cdot \\
			&\qquad \cdot \noverk{n+1}{\sabs{\alpha}+1} \bigl ( ( 1 - (-1)^{\sabs{\alpha}+1} ) (n+1) - (1 - (-1)^{n+1}) (\sabs{\alpha} + 1) \bigr ) \, y^{\alpha} z^{\beta}
		\end{align*}
		The total remainder of the expansion reads 
		\begin{align*}
			R_N[\gBe] &= y_l \, \bigl ( R_{1 \, N \, l} - R_{2 \, N \, l} \bigr ) + z_l \, \bigl ( R_{1 \, N \, l} - R_{3 \, N \, l} \bigr ) \in \mathcal{BC}^{\infty}(\R^d,\mathcal{C}^{\infty}_{\mathrm{pol}}(\R^d \times \R^d))
			. 
		\end{align*}
		In total, the remainder is a sum of monomials with bounded coefficients of degree $N+2$ while it is of $\ordere{N+1}$. 
	\end{proof}
	%
	

	\section{Properties of derivatives of $\gBe$} 
	\label{app:propertiesDerivativesgBe}
	
	For convenience, we give two theorems found in \cite{IftimieMantiouPurice:magneticPseudodifferentialOperators:2005} on the magnetic flux and its expontential which are needed to make the expansion rigorous: 
	\begin{lem}\label{appendix:boundednessLemmagBel}
		If the magnetic field $B_{lj}$, $1 \leq l,j \leq n$, satisfies the usual conditions, then 
		\begin{align*}
			\partial_{x_j} \gBe &= D_{jk}(x,y,z) \, y_k + E_{jk}(x,y,z) \, z_k \\
			\partial_{y_j} \gBe &= D'_{jk}(x,y,z) \, y_k + E'_{jk}(x,y,z) \, z_k \\
			\partial_{z_j} \gBe &= D''_{jk}(x,y,z) \, y_k + E''_{jk}(x,y,z) \, z_k 
		\end{align*}
		where the coefficients $D_{jk} , \ldots , E''_{jk} \in \mathcal{BC}^{\infty}(\R^d \times \R^d \times \R^d)$, $1 \leq j,k \leq d$. 
	\end{lem}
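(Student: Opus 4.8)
The plan is to read the statement off the closed-form representation of $\gBe$ produced, in the transversal gauge, inside the proof of Lemma~\ref{app:lemmaFormalExpansionTwister}. Rewriting the flux through the small triangle as three line integrals of $A$ and using the antisymmetry $B_{lj}=-B_{jl}$ to annihilate the diagonal contractions $B_{lj}y_ly_j$ and $B_{lj}z_lz_j$, one obtains

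\begin{align*}
	\gBe(x,y,z) &= -\eps\,\Bte_{lj}(x,y,z)\,y_l\,z_j , \\
	\Bte_{lj}(x,y,z) &= \tfrac{1}{2} \int_{-\nicefrac{1}{2}}^{+\nicefrac{1}{2}} \dd t \int_0^1 \dd s\, s\, \Bigl[ B_{lj}\bigl( x + \eps s(t y - \tfrac{z}{2}) \bigr) + B_{lj}\bigl( x + \eps s(\tfrac{y}{2} + t z) \bigr) \Bigr] .
\end{align*}

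The structural fact I would isolate first is that $\mathcal{BC}^{\infty}$ is stable under composition with affine maps. Since $B_{lj}\in\mathcal{BC}^{\infty}(\R^d)$ by Assumption~\ref{asympExp:assumption:usualConditionsBA}, for each fixed $(s,t)$ in the compact box $[-\nicefrac{1}{2},\nicefrac{1}{2}]\times[0,1]$ the map $(x,y,z)\mapsto B_{lj}\bigl(x+\eps s(ty-\tfrac{z}{2})\bigr)$ lies in $\mathcal{BC}^{\infty}(\R^d\times\R^d\times\R^d)$ with seminorms bounded uniformly in $(s,t)$: an $x$-derivative merely replaces $B_{lj}$ by the bounded $\partial B_{lj}$, while a $y$- or $z$-derivative produces in addition only the bounded constant $\eps st$, resp. $\tfrac{\eps s}{2}$. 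Integrating over the compact parameter box preserves this, so $\Bte_{lj}$ and all its partial derivatives belong to $\mathcal{BC}^{\infty}(\R^d\times\R^d\times\R^d)$.

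Given this, I would simply differentiate the product $-\eps\,\Bte_{lj}\,y_l\,z_j$ by the Leibniz rule. For the $x$-derivative only $\Bte$ carries an $x$-dependence, so $\partial_{x_j}\gBe=-\eps\,(\partial_{x_j}\Bte_{lk})\,y_l\,z_k$; for $\partial_{y_j}$ and $\partial_{z_j}$ there is, besides the term in which the derivative hits $\Bte$, the term in which it hits the monomial, namely $-\eps\,\Bte_{jk}\,z_k$, resp. $-\eps\,\Bte_{lj}\,y_l$. In every case the outcome is a finite sum of terms of the form (a $\mathcal{BC}^{\infty}(\R^d\times\R^d\times\R^d)$ coefficient) times (a monomial in $y$ and $z$, each monomial carrying at least one factor $y_k$ or $z_k$); designating one such factor and absorbing the rest of the monomial into the coefficient casts $\partial_{x_j}\gBe$, $\partial_{y_j}\gBe$ and $\partial_{z_j}\gBe$ into the asserted form $D_{jk}y_k+E_{jk}z_k$. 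As a gauge-independent check on the algebraic shape, one may also note that the triangle $\langle x-\tfrac{\eps}{2}(y+z),x+\tfrac{\eps}{2}(y-z),x+\tfrac{\eps}{2}(y+z)\rangle$ degenerates to a segment when $y=0$ (the first two corners coincide) and when $z=0$ (the last two coincide), so $\gBe$ and each of its $x$-derivatives vanish on $\{y=0\}\cup\{z=0\}$, whence Hadamard's lemma applied successively in $y$ and in $z$ forces the factorization through $y_kz_l$.

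The only step that requires genuine care — and the reason I would anchor the argument in the explicit formula rather than in the degeneration argument alone — is the uniform boundedness of the coefficients together with all their derivatives; this rests squarely on $B\in\mathcal{BC}^{\infty}$ and on the compactness of the $(s,t)$-domain of integration, and it is immediate once $\Bte_{lj}\in\mathcal{BC}^{\infty}(\R^d\times\R^d\times\R^d)$ is in hand. The remaining work is the routine index bookkeeping in the Leibniz expansion, so I expect no further obstacle.
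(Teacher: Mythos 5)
Your explicit representation $\gBe = -\eps\,\Bte_{lj}\,y_l z_j$ with $\Bte_{lj}\in\mathcal{BC}^{\infty}(\R^d\times\R^d\times\R^d)$ is correct, and so is the observation that all partial derivatives of $\Bte_{lj}$ stay uniformly bounded. The gap is in the final absorption step: rewriting, say, $-\eps\,(\partial_{x_j}\Bte_{lk})\,y_l z_k$ as $D_{jl}\,y_l$ with $D_{jl}:=-\eps\,(\partial_{x_j}\Bte_{lk})\,z_k$ produces a $D_{jl}$ that grows linearly in $z$, hence is \emph{not} in $\mathcal{BC}^{\infty}$; the same problem reappears for $\partial_{y_j}$ and $\partial_{z_j}$ in the Leibniz term where the derivative hits $\Bte$. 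Your Hadamard remark in fact makes the difficulty visible rather than resolving it: it shows $\partial_{x_j}\gBe$ factors through the \emph{bilinear} monomial $y_k z_l$, i.e.\ admits quadratic growth with bounded coefficients, whereas the lemma asserts the strictly stronger \emph{linear} form $D_{jk}y_k+E_{jk}z_k$ with $\mathcal{BC}^{\infty}$ coefficients. A generic bilinear form with bounded coefficients is not a linear combination of that type, so the conclusion does not follow from your representation as written.

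The missing idea is a cancellation coming from $\mathrm{d}B=0$. Since $B=\mathrm{d}A$ is closed, the $x$-derivative of $\Gamma^B(T)$ under translation of the triangle $T$ is, by Cartan's formula and Stokes' theorem, a \emph{boundary} circulation $\int_{\partial T}\iota_{e_m}B$ of the contracted one-form $B_{mk}\,\mathrm{d}x_k$, not an area integral; this scales with the perimeter rather than the area of $T$. Parametrizing the three edges of $\partial T$ (which are $\eps y$, $\eps z$ and $-\eps(y+z)$ up to translation) and dividing by $\eps$ yields directly $\partial_{x_m}\gBe = y_k\bigl(I^{(1)}_{mk}-I^{(3)}_{mk}\bigr) + z_k\bigl(I^{(2)}_{mk}-I^{(3)}_{mk}\bigr)$, each $I^{(i)}_{mk}$ being an integral of $B_{mk}$ over a compact parameter interval evaluated at affine functions of $(x,y,z)$ — manifestly $\mathcal{BC}^{\infty}$ with all derivatives. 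The derivatives $\partial_{y_j}\gBe$ and $\partial_{z_j}\gBe$ are handled identically with the (affine) deformation vector field of $T$ in place of $e_m$. Your formula is of course consistent with this, but it hides the cancellation: in your variables the linear bound corresponds to the fact that $\partial_{x_j}\Bte_{lk}$ decays like $\min(\langle y\rangle^{-1},\langle z\rangle^{-1})$, which is true but needs its own proof. The paper itself disposes of the lemma by citing \cite{IftimieMantiouPurice:magneticPseudodifferentialOperators:2005}, where the flux derivative is reduced to exactly such a boundary-circulation decomposition; a complete blind proof needs to reproduce that step, not just the $\mathcal{BC}^{\infty}$ property of $\Bte_{lj}$.
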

	\begin{proof}
		The corners of the flux triangles of $F_B$ found in \cite{IftimieMantiouPurice:magneticPseudodifferentialOperators:2005} differ from those of $\gBe$, but the proof carries over with trivial modifications. 
	\end{proof}
	A direct consequence of this is the following simple corollary: 
	\begin{cor}\label{appendix:boundednessCoroBel}
		If the magnetic field satisfies the usual conditions, then 
		\begin{align*}
			\babs{\partial_x^a \partial_y^b \partial_z^c e^{- i \lambda \gBe(x,y,z)}} \leq C_{abc} \bigl ( \sexpval{y} + \sexpval{z} \bigr )^{\sabs{a} + \sabs{b} + \sabs{c}} 
			\leq \tilde{C}_{abc} \sexpval{y}^{\sabs{a} + \sabs{b} + \sabs{c}} \sexpval{z}^{\sabs{a} + \sabs{b} + \sabs{c}} 
			&& \forall a,b,c \in {\N_0}^d
			, 
		\end{align*}
		\ie derivatives of $e^{- i \lambda \gBe(x,y,z)}$ are $\mathcal{C}^{\infty}_{\mathrm{pol}}$ functions in $y$ and $z$. 
	\end{cor}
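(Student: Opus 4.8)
The plan is to differentiate the composite function $e^{-i\lambda\gBe}$ by the higher-order chain rule (Fa\`a di Bruno's formula) and then to bound each of the resulting terms by means of Lemma~\ref{appendix:boundednessLemmagBel}. Concretely, an $m$-fold derivative $\partial_x^a\partial_y^b\partial_z^c$ with $\sabs{a}+\sabs{b}+\sabs{c}=m\ge1$ applied to $e^{-i\lambda\gBe}$ is a finite linear combination, with combinatorial constants and powers of $\lambda$ (bounded, since $\lambda$ runs over a bounded set), of terms of the form
\[
	e^{-i\lambda\gBe(x,y,z)}\,\prod_{r=1}^{p}\bigl(\partial^{(r)}\gBe\bigr)(x,y,z),
\]
where $1\le p\le m$, each $\partial^{(r)}$ is a non-empty multi-derivative in $(x,y,z)$, and the orders of the $\partial^{(r)}$ add up to $m$. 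Since $\gBe$ is real-valued, $\babs{e^{-i\lambda\gBe}}=1$, so up to harmless constants it suffices to bound each factor $\partial^{(r)}\gBe$ and then collect the powers.

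The one point that requires care is the control of \emph{arbitrarily high} derivatives of $\gBe$, since Lemma~\ref{appendix:boundednessLemmagBel} only supplies the first ones. For this, I would introduce the class $\mathcal{A}$ of functions expressible as a finite sum $\sum_i b_i(x,y,z)\,m_i(y,z)$ with $b_i\in\mathcal{BC}^{\infty}(\R^d\times\R^d\times\R^d)$ and each $m_i$ a monomial in $(y,z)$ of degree at most one (that is, $1$, $y_k$ or $z_k$), and observe that $\mathcal{A}$ is stable under each of $\partial_{x_j},\partial_{y_j},\partial_{z_j}$: for instance $\partial_{y_j}(b\,m)=(\partial_{y_j}b)\,m+b\,(\partial_{y_j}m)$, where $\partial_{y_j}b$ is again $\mathcal{BC}^{\infty}$ and $\partial_{y_j}m$ is either $0$ or a constant, hence of degree $0$; the cases $\partial_{x_j}$ and $\partial_{z_j}$ are the same. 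Lemma~\ref{appendix:boundednessLemmagBel} states precisely that $\partial_{x_j}\gBe,\partial_{y_j}\gBe,\partial_{z_j}\gBe\in\mathcal{A}$, so by an immediate induction on the order every multi-derivative of $\gBe$ of order $\ge1$ lies in $\mathcal{A}$; consequently
\[
	\babs{\partial^{(r)}\gBe(x,y,z)}\le C_r\bigl(1+\sabs{y}+\sabs{z}\bigr)\le C_r'\bigl(\sexpval{y}+\sexpval{z}\bigr).
\]
Substituting this into the Fa\`a di Bruno sum, a term with $p$ factors is dominated by a constant times $(\sexpval{y}+\sexpval{z})^p$; since $1\le p\le m$ and $\sexpval{y}+\sexpval{z}\ge2$, every such term is bounded by a constant times $(\sexpval{y}+\sexpval{z})^m$, and summing the finitely many terms yields $\babs{\partial_x^a\partial_y^b\partial_z^c e^{-i\lambda\gBe}}\le C_{abc}(\sexpval{y}+\sexpval{z})^{\sabs{a}+\sabs{b}+\sabs{c}}$, the case $m=0$ being trivial. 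The second inequality in the statement follows at once from $\sexpval{y}+\sexpval{z}\le2\,\sexpval{y}\,\sexpval{z}$, which holds because $\sexpval{y},\sexpval{z}\ge1$.

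I do not expect any real obstacle: as the text indicates, the corollary is a routine consequence of the lemma. The only slightly delicate step is the stability of $\mathcal{A}$ under differentiation — equivalently, the fact that no derivative of $\gBe$ of order $\ge1$ grows faster than linearly in $(y,z)$ no matter how many derivatives are taken — rather than naively iterating Lemma~\ref{appendix:boundednessLemmagBel}; with that in hand the remainder is pure bookkeeping, and the powers of $\lambda$ and the combinatorial constants are absorbed into $C_{abc}$.
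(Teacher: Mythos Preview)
Your proof is correct and is precisely the argument the paper has in mind: in the text the corollary is stated without proof as ``a direct consequence'' of Lemma~\ref{appendix:boundednessLemmagBel}, and your Fa\`a di Bruno expansion together with the observation that the class $\mathcal{A}$ is stable under differentiation is exactly how one spells out that consequence. The stability argument is the right way to pass from first to higher derivatives of $\gBe$, and the remaining estimates are routine.
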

	%
	

	\section{Existence of oscillatory integrals} 
	\label{appendix:existenceOscInt}
	
	To derive the adiabatic expansion, we have to ensure the existence of two types of oscillatory integrals, one is relevant for the $(n,k)$ term of the two-parameter expansion, the other is necessary to show existence of remainders and the $k$th term of the $\lambda$ expansion. 
	\begin{lem}\label{appendix:existenceOscInt:Lemma1}
		Let $f \in \Hoerrd{m}$, $\rho \in [0,1]$. Then for all multiindices $a, \alpha \in \N_0^d$ 
		\begin{align}
			G(X) := \frac{1}{(2\pi)^d} \int \dd Y \, e^{i \sigma(X,Y)} y^a \eta^{\alpha} (\Fs^{-1} f)(Y) = \bigl ( (-i \partial_{\xi})^{a} (+ i \partial_x)^{\alpha} f \bigr )(X) 
		\end{align}
		exists as an oscillatory integral and is in symbol class $\Hoerrd{m - \sabs{a} \rho}$. 
	\end{lem}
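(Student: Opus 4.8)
The plan is to establish the identity first on the dense subspace $\Schwartz(\PSpace)\subset\Hoerrd{m}$ by elementary Fourier analysis, then to read off the symbol class of the right-hand side, and finally to make the oscillatory integral precise on all of $\Hoerrd{m}$ through a cut-off regularisation and identify its pointwise limit with that right-hand side.

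\emph{Step 1 (Schwartz symbols).} For $f\in\Schwartz(\PSpace)$ one has $\Fs^{-1}f\in\Schwartz(\PSpace)$, so the integral defining $G$ converges absolutely and differentiation under the integral sign is allowed. Since $\sigma(X,Y)=\xi\cdot y-x\cdot\eta$, we have $y_j\,e^{i\sigma(X,Y)}=-i\,\partial_{\xi_j}e^{i\sigma(X,Y)}$ and $\eta_j\,e^{i\sigma(X,Y)}=i\,\partial_{x_j}e^{i\sigma(X,Y)}$; iterating these and pulling the $X$-derivatives out of the integral gives
\[
	G(X)=(-i\partial_\xi)^{a}(i\partial_x)^{\alpha}\bigl(\Fs\,\Fs^{-1}f\bigr)(X)=\bigl((-i\partial_\xi)^{a}(i\partial_x)^{\alpha}f\bigr)(X),
\]
using $\Fs\,\Fs^{-1}=\mathrm{id}$. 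In particular the identity $\Fs\bigl(y^{a}\eta^{\alpha}\,\Fs^{-1}f\bigr)=(-i\partial_\xi)^{a}(i\partial_x)^{\alpha}f$ holds in $\Schwartz(\PSpace)$; since multiplication by a polynomial, differentiation, and $\Fs$ are all continuous on $\Schwartz'(\PSpace)$ and $\Schwartz(\PSpace)$ is dense in $\Hoerrd{m}\subset\Schwartz'(\PSpace)$, the same identity of tempered distributions persists for every $f\in\Hoerrd{m}$. Write $h:=(-i\partial_\xi)^{a}(i\partial_x)^{\alpha}f$ for the resulting distribution.

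\emph{Step 2 (symbol class).} For $f\in\Hoerrd{m}$ and any $b,\beta\in\N_0^{d}$,
\[
	\babs{\partial_x^{b}\partial_\xi^{\beta}h(x,\xi)}=\babs{\partial_x^{b+\alpha}\partial_\xi^{\beta+a}f(x,\xi)}\leq C_{b\beta}\,\expval{\xi}^{m-(\sabs{\beta}+\sabs{a})\rho}=C_{b\beta}\,\expval{\xi}^{(m-\sabs{a}\rho)-\sabs{\beta}\rho},
\]
so $h$ is a smooth function belonging to $\Hoerrd{m-\sabs{a}\rho}$.

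\emph{Step 3 (oscillatory integral and conclusion).} For general $f\in\Hoerrd{m}$ the distribution $\Fs^{-1}f$ is not a function, so $G$ must be read as an oscillatory integral: one inserts a cut-off $\chi(\delta\,\cdot)$ with $\chi\in\Schwartz(\PSpace)$, $\chi(0)=1$, forms $G_\delta(X):=\Fs\bigl(\chi(\delta\,\cdot)\,y^{a}\eta^{\alpha}\,\Fs^{-1}f\bigr)(X)$, and lets $\delta\downarrow0$; equivalently one regularises by repeated integration by parts in $Y$ against the first-order operator $L:=\abs{X}^{-2}\,\tfrac1i\bigl(\xi\cdot\partial_y-x\cdot\partial_\eta\bigr)$, which satisfies $L\,e^{i\sigma(X,Y)}=e^{i\sigma(X,Y)}$ for $X\neq0$. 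By Step~1, $y^{a}\eta^{\alpha}\Fs^{-1}f=\Fs^{-1}h$, so $G_\delta$ is, up to a fixed normalising constant, the mollification of $h$ by the approximate identity $\Fs\bigl(\chi(\delta\,\cdot)\bigr)$; since $h$ is continuous, $G_\delta(X)\to h(X)$ pointwise as $\delta\downarrow0$, locally uniformly and independently of the choice of $\chi$. Hence the oscillatory integral $G(X)$ exists, equals $\bigl((-i\partial_\xi)^{a}(i\partial_x)^{\alpha}f\bigr)(X)$, and lies in $\Hoerrd{m-\sabs{a}\rho}$. The only genuinely delicate point is Step~3 — giving the oscillatory integral a rigorous meaning for a symbol of mere polynomial growth and upgrading convergence of the regularisations from $\Schwartz'(\PSpace)$ to locally uniform convergence; Steps~1 and~2 are routine bookkeeping.
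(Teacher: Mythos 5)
Your proof is correct and rests on the same mechanism as the paper's: both identify $G$ with $\Fs\,\hat{y}^{a}\hat{\eta}^{\alpha}\Fs f$ acting as a tempered distribution and then recognize this operator as the differential operator $(-i\partial_{\xi})^{a}(i\partial_{x})^{\alpha}$. The paper does this in four lines via the duality bracket and integration by parts, while you spell out the Schwartz-density step, the Hörmander-class seminorm estimate, and a cut-off regularisation of the oscillatory integral — more bookkeeping, but the same idea.
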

	\begin{proof}
		Since $f$ is a function of tempered growth, we can consider it as an element of $\Schwartz'(\R^{2d})$. Then, we can rewrite $G$ as $G = \Fs \hat{x}^a \hat{\xi}^{\alpha} \Fs$ where $\hat{x}$ and $\hat{\xi}$ are the multiplication operators initially defined on $\Schwartz(\R^{2d})$ which are extended to tempered distributions by duality. Then for any $\varphi \in \Schwartz(\R^{2d})$, we have 
		\begin{align*}
			\bigl ( G , \varphi \bigr ) &= \bigl ( \Fs \hat{x}^a \hat{\xi}^{\alpha} \Fs f , \varphi \bigr ) 
			= \bigl ( f , \Fs \hat{x}^a \hat{\xi}^{\alpha} \Fs \varphi \bigr ) 
			= \bigl ( f , (+i \partial_{\xi})^a (-i \partial_x)^{\alpha} \varphi \bigr ) 
			\\
			&= \bigl ( (-i \partial_{\xi})^a (+i \partial_x)^{\alpha} f , \varphi \bigr ) 
		\end{align*}
		where $( \cdot , \cdot )$ denotes the usual duality bracket. 
		
		Thus, the integral exists as an oscillatory integral. $G = (-i \partial_{\xi})^a (+i \partial_x)^{\alpha} f$ is also in the correct symbol class, namely $\Hoerrd{m - \sabs{a} \rho}$, and the lemma has been proven. 
	\end{proof}
	The next corollary is an immediate consequence and contains the relevant result for the term-by-term expansion of the magnetic product. 
	\begin{cor}\label{appendix:existenceOscInt:Lemma2}
		Let $f \in \Hoerrd{m_1}$, $g \in \Hoerrd{m_2}$, $\rho \in [0,1]$ and $a , \alpha, b, \beta \in \N_0^d$ be arbitrary multiindices. Then for all functions $B \in \mathcal{BC}^{\infty}(\R^d)$ the oscillatory integral 
		\begin{align}
			G(X) := \frac{1}{(2\pi)^{2d}} \int \dd Y \int \dd Z e^{i \sigma(X,Y+Z)} B(x) \, y^a \eta^{\alpha} (\Fs^{-1} f)(Y) \, z^b \zeta^{\beta} (\Fs^{-1} g)(Z) 
		\end{align}
		exists, is in symbol class $\Hoerrd{m_1 + m_2 - (\sabs{a} + \sabs{b}) \rho}$ and yields 
		\begin{align}
			B(x) \, \bigl ( (-i \partial_{\xi})^a (+i \partial_x)^{\alpha} f \bigr )(X) \, \bigl ( (-i \partial_{\xi})^b (+i \partial_x)^{\beta} g \bigr )(X) 
			. 
		\end{align}
	\end{cor}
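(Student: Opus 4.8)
The plan is to reduce the statement to Lemma~\ref{appendix:existenceOscInt:Lemma1} by exploiting the product structure of the integrand. Since $\sigma(X,Y+Z) = \sigma(X,Y) + \sigma(X,Z)$ and the remaining factors split into a part depending only on $Y$ and a part depending only on $Z$ — the prefactor $B(x)$ being constant with respect to both integration variables — the oscillatory integral defining $G$ should, after the usual regularisation, coincide with the iterated integral
\begin{align*}
	G(X) = \frac{B(x)}{(2\pi)^d} \int \dd Y\, e^{i\sigma(X,Y)}\, y^a \eta^{\alpha} (\Fs^{-1}f)(Y)\; \cdot\; \frac{1}{(2\pi)^d} \int \dd Z\, e^{i\sigma(X,Z)}\, z^b \zeta^{\beta} (\Fs^{-1}g)(Z) .
\end{align*}
First I would make this factorisation rigorous by testing both sides against $\varphi \in \Schwartz(\R^{2d})$, exactly as in the proof of Lemma~\ref{appendix:existenceOscInt:Lemma1}: the $Y$- and $Z$-integrals are the actions of the tempered distributions $\Fs(\hat{x}^a \hat{\xi}^{\alpha} \Fs f)$ and $\Fs(\hat{x}^b \hat{\xi}^{\beta} \Fs g)$, and since by that lemma these are in fact $\mathcal{C}^{\infty}_{\mathrm{pol}}$ functions, their product is again a well-defined element of $\mathcal{C}^{\infty}_{\mathrm{pol}}(\R^{2d})$.

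By Lemma~\ref{appendix:existenceOscInt:Lemma1} the first factor equals $\bigl((-i\partial_{\xi})^a(+i\partial_x)^{\alpha} f\bigr)(X) \in \Hoerrd{m_1 - \sabs{a}\rho}$ and the second equals $\bigl((-i\partial_{\xi})^b(+i\partial_x)^{\beta} g\bigr)(X) \in \Hoerrd{m_2 - \sabs{b}\rho}$, which identifies $G$ with the asserted expression. To locate $G$ in the correct symbol class I would then invoke two elementary stability properties of the Hörmander classes: multiplication by an element of $\mathcal{BC}^{\infty}(\R^d) \subseteq \Hoerrd{0}$ maps $\Hoerrd{m}$ into itself, and $\Hoerrd{m'} \cdot \Hoerrd{m''} \subseteq \Hoerrd{m'+m''}$; both follow at once from the Leibniz rule together with submultiplicativity of the weights $\expval{\xi}^{m}$. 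Applying these with $m' = m_1 - \sabs{a}\rho$ and $m'' = m_2 - \sabs{b}\rho$ gives $G \in \Hoerrd{m_1+m_2-(\sabs{a}+\sabs{b})\rho}$.

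The one point deserving care — and the only genuine obstacle — is the Fubini step for oscillatory integrals: a priori the double integral is defined by inserting a cutoff, say $\chi(\delta Y,\delta Z)$, and letting $\delta \to 0$, so one must check that the limit may be taken separately in the two variables and in either order. This follows from the same integration-by-parts bookkeeping used elsewhere in this appendix: integrating by parts in $Y$ against $e^{i\sigma(X,Y)}$ improves the decay of the $Y$-integrand by one power of $\expval{\eta}$ per step (at the cost of an extra derivative on $f$), and likewise in $Z$ against $e^{i\sigma(X,Z)}$; after finitely many steps both integrands are absolutely integrable, so the regularised double and iterated integrals all agree by dominated convergence. Everything beyond this is a direct application of Lemma~\ref{appendix:existenceOscInt:Lemma1} and the algebra of symbol classes, which is precisely why the statement is phrased as a corollary.
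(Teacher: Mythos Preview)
Your proposal is correct and matches the paper's intended argument: the paper states the corollary as an ``immediate consequence'' of Lemma~\ref{appendix:existenceOscInt:Lemma1} and explicitly remarks (just before Lemma~\ref{appendix:existenceOscInt:remainder}) that the proof rests on writing the double integral as a product of two independent integrals. Your treatment of the Fubini step and the symbol-class algebra simply spells out what the paper leaves implicit.
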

	In the proof of Corollary~\ref{appendix:existenceOscInt:Lemma2} we have used that we could write the integrals as a \emph{product} of two independent integrals. There is, however, a second relevant type of oscillatory integral that cannot be `untangled.' Fortunately, we only need to ensure their existence and not evaluate them explicitly. Again, we will start with a simpler integral over only one phase space variable and then extend the ideas to the full integral in a corollary. 
	\begin{lem}\label{appendix:existenceOscInt:remainder}
		Assume $f \in \Hoerrd{m_1}$, $g \in \Hoerrd{m_2}$, $\rho \in [0,1]$, $\eps \in (0,1]$ and $\tau , \tau' \in [0,1]$. Furthermore, let $G_{\tau'} \in \mathcal{BC}^{\infty} \bigl ( \R^d_x,\mathcal{C}^{\infty}_{\mathrm{pol}}(\R^d_y \times \R^d_z) \bigr )$ be such that for all $c , c' , c'' \in \N_0^d$ 
		\begin{align*}
			\babs{\partial_x^c \partial_y^{c'} \partial_z^{c''} G_{\tau'}(x,y,z)} \leq C_{c c' c''} \bigl ( \sexpval{y} + \sexpval{z} \bigr )^{\sabs{c} + \sabs{c'} + \sabs{c''}} 
		\end{align*}
		holds for some finite constant $C_{c c' c''} > 0$. Then for all $a , \alpha , b , \beta \in \N_0^d$ and $\tau , \tau' \in [0,1]$ 
		\begin{align}
			I_{\tau \tau'}(x,\xi) := \frac{1}{(2 \pi)^{2d}} \int \dd Y \int \dd Z \, e^{i \sigma(X,Y+Z)} \, e^{i \tau \frac{\eps}{2} \sigma(Y,Z)} \, G_{\tau'}(x,y,z) \, y^a \eta^{\alpha} z^b \zeta^{\beta} \, (\Fs f)(Y) \, (\Fs g)(Z) 
			\label{appendix:existenceOscInt:eqn:remainder}
		\end{align}
		exists as an oscillatory integral in $\Hoerrd{m_1 + m_2 - (\sabs{a} + \sabs{b}) \rho}$. The map $(\tau , \tau') \mapsto I_{\tau \tau'}$ is continuous. 
	\end{lem}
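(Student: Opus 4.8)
The plan is to prove Lemma~\ref{appendix:existenceOscInt:remainder} in two moves: first strip the monomial prefactors $y^a\eta^\alpha z^b\zeta^\beta$ off and onto $f$ and $g$ as derivatives, and then treat the remaining object as a (slightly generalised) magnetic product whose existence and mapping property follow by integration by parts in an oscillatory integral.

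\textbf{Step 1: reduction of the monomials.} Since $y$ and $z$ are the variables symplectically dual to the momentum arguments of $f$ and $g$, and $\eta,\zeta$ dual to their position arguments, we have the elementary identities $y^a\eta^\alpha\,(\Fs f)(Y)=\bigl(\Fs[(-i\partial_{\xi})^a(i\partial_x)^\alpha f]\bigr)(Y)$ and $z^b\zeta^\beta\,(\Fs g)(Z)=\bigl(\Fs[(-i\partial_{\xi})^b(i\partial_x)^\beta g]\bigr)(Z)$, valid already at the level of tempered distributions exactly as in Lemma~\ref{appendix:existenceOscInt:Lemma1} and unaffected by the other factors in~\eqref{appendix:existenceOscInt:eqn:remainder}, because they only rewrite the distribution $(\Fs f)(Y)$ itself. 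Setting $F:=(-i\partial_{\xi})^a(i\partial_x)^\alpha f\in\Hoerrd{m_1-\sabs{a}\rho}$ and $\tilde F:=(-i\partial_{\xi})^b(i\partial_x)^\beta g\in\Hoerrd{m_2-\sabs{b}\rho}$, it therefore suffices to show: for $F\in\Hoerrd{\mu_1}$, $\tilde F\in\Hoerrd{\mu_2}$, $s\in[0,1]$ and $G\in\mathcal{BC}^\infty(\R^d_x,\mathcal{C}^\infty_{\mathrm{pol}}(\R^d_y\times\R^d_z))$ satisfying the stated polynomial bounds, the oscillatory integral $J(X):=(2\pi)^{-2d}\int\dd Y\int\dd Z\,e^{i\sigma(X,Y+Z)}e^{i\frac s2\sigma(Y,Z)}G(x,y,z)(\Fs F)(Y)(\Fs\tilde F)(Z)$ exists, lies in $\Hoerrd{\mu_1+\mu_2}$ with seminorms bounded by finitely many seminorms of $F,\tilde F$ and the constants of $G$, and depends continuously on $(s,G)$; applying this with $\mu_1=m_1-\sabs{a}\rho$, $\mu_2=m_2-\sabs{b}\rho$, $s=\tau\eps$ and $G=G_{\tau'}$ then gives the lemma, since $\mu_1+\mu_2=m_1+m_2-(\sabs{a}+\sabs{b})\rho$.

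\textbf{Step 2: existence and symbol estimate for $J$.} I would regularise by cutoffs $\chi_\kappa(Y,Z):=\chi(\kappa Y)\chi(\kappa Z)$ with $\chi\in\Schwartz$, $\chi(0)=1$ (reducing first to $F,\tilde F\in\Schwartz$ by density of $\Schwartz$ in $\Hoerrd{\mu_i}$ in the topology of $\Hoerrd{\mu_i+\delta}$, $\delta>0$, so that $\Fs F,\Fs\tilde F$ are genuine Schwartz functions), making $J_\kappa$ absolutely convergent. Then I integrate by parts twice. First, using $\partial_{y_j}e^{i\sigma(X,Y+Z)}=i\xi_j\,e^{i\sigma(X,Y+Z)}$ and $\partial_{\eta_j}e^{i\sigma(X,Y+Z)}=-ix_j\,e^{i\sigma(X,Y+Z)}$ (and the analogues in $Z$) repeatedly produces the weight $\expval{\xi}^{-2N}$, the derivatives being distributed by Leibniz over $\chi_\kappa$, $e^{i\frac s2\sigma(Y,Z)}$, $G$, $\Fs F$, $\Fs\tilde F$; each derivative of $e^{i\frac s2\sigma(Y,Z)}$ only brings down $\tfrac s2$ times a component of $Y$ or $Z$, and each derivative of $G$ stays polynomially bounded in $(y,z)$ by hypothesis. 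Second, to control the $(Y,Z)$-integration as $\kappa\to0$: for $s>0$ the gradient of the total phase $\sigma(X,Y+Z)+\tfrac s2\sigma(Y,Z)$ grows linearly in $(Y,Z)$, so the standard operator ${}^tL$ with $Le^{i\Phi}=e^{i\Phi}$ gains as many powers of $\sexpval{Y}^{-1}\sexpval{Z}^{-1}$ as needed; for $s=0$ the $Y$- and $Z$-integrals are, apart from $G$, plain Fourier inversions treated exactly as in Lemma~\ref{appendix:existenceOscInt:Lemma1} and Corollary~\ref{appendix:existenceOscInt:Lemma2}, the polynomially growing $G(x,\cdot,\cdot)$ being harmless against Schwartz $\Fs F,\Fs\tilde F$. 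Collecting the bounds, $J=\lim_{\kappa\to0}J_\kappa$ exists, is smooth in $X$, and satisfies $\babs{\partial_x^{c}\partial_{\xi}^{\gamma}J(X)}\le C\,\expval{\xi}^{\mu_1+\mu_2-\sabs{\gamma}\rho}$ with $C$ a finite combination of seminorms of $F,\tilde F$ and of the constants $C_{cc'c''}$ of $G$; hence $J\in\Hoerrd{\mu_1+\mu_2}$, and undoing Step~1, $I_{\tau\tau'}\in\Hoerrd{m_1+m_2-(\sabs{a}+\sabs{b})\rho}$.

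\textbf{Step 3: continuity in $(\tau,\tau')$.} The amplitude above depends smoothly on $s=\tau\eps$, and the polynomial bounds on $G_{\tau'}$ and all its derivatives are uniform in $\tau'\in[0,1]$; therefore the estimates for $J_\kappa$ and for $J-J_\kappa$ are uniform in $(\tau,\tau')\in[0,1]^2$, so dominated convergence yields $I_{\tau_n\tau_n'}(X)\to I_{\tau\tau'}(X)$ pointwise for $(\tau_n,\tau_n')\to(\tau,\tau')$, and the uniform symbol bounds upgrade this to convergence in $\Hoerrd{m_1+m_2-(\sabs{a}+\sabs{b})\rho+\delta}$ for every $\delta>0$. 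The main obstacle is the regime $\tau\downarrow 0$: there the phase loses its $(Y,Z)$-gradient growth, so the oscillatory integral must be read through Fourier inversion, and one must verify that the merely polynomial --- not rapid --- growth of $G_{\tau'}$ in $(y,z)$ does not obstruct this and that all constants stay bounded down to $\tau=0$; the reduction to Schwartz $F,\tilde F$ is exactly what makes this step manageable, at the price of the density argument. (One should also be careful that the proof never invokes Theorem~\ref{asympExp:thm:equivalenceProduct}, whose Step~4 relies on this very lemma, so only Lemma~\ref{appendix:existenceOscInt:Lemma1}, Corollary~\ref{appendix:existenceOscInt:Lemma2} and the bounds from Lemma~\ref{appendix:boundednessLemmagBel}--Corollary~\ref{appendix:boundednessCoroBel} are used.)
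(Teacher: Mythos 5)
The crucial device in the paper's proof is a change of representation that you never make: after writing out the symplectic Fourier transforms explicitly and integrating over the dual variables, the twister $e^{i\tau\frac{\eps}{2}\sigma(Y,Z)}$ does not survive as an oscillatory factor but is absorbed into \emph{shifts} of the spatial arguments of $f$ and $g$. Concretely, the integral becomes
\begin{align*}
\partial_x^n\partial_\xi^\nu I_{\tau\tau'}(x,\xi)
=\sum \int\!\dd y\!\int\!\dd\eta\!\int\!\dd z\!\int\!\dd\zeta\; e^{i\eta\cdot y}e^{i\zeta\cdot z}\,
\partial_x^c G_{\tau'}(x,y,z)\,
\partial_x^a\partial_\xi^{\alpha}f\bigl(x-\tfrac{\tau\eps}{2}z,\xi-\eta\bigr)\,
\partial_x^b\partial_\xi^{\beta}g\bigl(x+\tfrac{\tau\eps}{2}y,\xi-\zeta\bigr),
\end{align*}
where the only remaining phases $e^{i\eta\cdot y}$ and $e^{i\zeta\cdot z}$ are nondegenerate and \emph{independent of $\tau$}. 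From this form the integration-by-parts estimate (inserting $\sexpval{y}^{-2N}L_\eta^N$, $\sexpval{\eta}^{-2K}L_y^N$, etc., together with $\sexpval{\xi-\eta}^m\le\sexpval{\xi}^m\sexpval{\eta}^{\sabs{m}}$ and the polynomial bounds on $G_{\tau'}$) yields symbol bounds that are manifestly uniform in $(\tau,\tau')\in[0,1]^2$; uniformity is automatic, and continuity follows by dominated convergence. There is no case distinction $\tau>0$ vs.\ $\tau=0$ and no density or cutoff argument.

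Your proof never performs this transformation and therefore genuinely gets stuck. In Step~2 you keep the phase $\sigma(X,Y+Z)+\tfrac s2\sigma(Y,Z)$ as such and argue via its $(Y,Z)$-gradient, which degenerates as $s\downarrow 0$; you then split into $s>0$ and $s=0$. But the constants produced by the standard ${}^tL$ integration-by-parts operator blow up as $s\downarrow 0$, so the two cases do not glue, and the required uniform bound in $s$ that your dominated-convergence argument in Step~3 relies on is never established. You acknowledge this explicitly (``the main obstacle is the regime $\tau\downarrow 0$\dots and one must verify\dots that all constants stay bounded down to $\tau=0$'') but do not supply the verification; that verification \emph{is} the lemma, and it is precisely what the paper's change of variables makes trivial. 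A second, smaller gap: the reduction to Schwartz $F,\tilde F$ by density only works in the topology of $\Hoerrd{\mu_i+\delta}$, so without uniform seminorm bounds on $J$ in terms of finitely many $\Hoerrd{\mu_i}$-seminorms of $F,\tilde F$ (again, the very estimate you defer), you cannot pass to the limit and land in $\Hoerrd{\mu_1+\mu_2}$ rather than $\Hoerrd{\mu_1+\mu_2+2\delta}$. Step~1, the absorption of $y^a\eta^\alpha z^b\zeta^\beta$ into derivatives of $f$ and $g$, agrees with the paper and is fine.
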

	\begin{proof}
		Let us rewrite the integral first, the result will serve as a definition for the oscillatory integral $I_{\tau \tau'}$: 
		\begin{align*}
			I_{\tau \tau'}(x,\xi) &= \frac{1}{(2\pi)^{4d}} \int \dd Y \int \dd \tilde{Y} \int \dd Z \int \dd \tilde{Z} \, \bigl ( (+i \partial_{\tilde{\eta}})^a (-i \partial_{\tilde{y}})^{\alpha} e^{i \sigma(X - \tilde{Y} , Y)} \bigr ) \, \bigl ( (+i \partial_{\tilde{\zeta}})^b (-i \partial_{\tilde{a}})^{\beta} e^{i \sigma(X - \tilde{Z} , Z)} \bigr ) 
			\cdot \\
			&\qquad \qquad \qquad \qquad \cdot 
			e^{i \tau \frac{\eps}{2} \sigma(Y,Z)} \, G_{\tau'}(x,y,z) \, f(\tilde{Y}) \, g(\tilde{Z}) 
			\\
			&= \frac{1}{(2\pi)^{4d}} \int \dd Y \int \dd \tilde{Y} \int \dd Z \int \dd \tilde{Z} \, e^{i \sigma(X - \tilde{Y} , Y)} \, e^{i \sigma(X - \tilde{Z} , Z)} \, e^{i \tau \frac{\eps}{2} \sigma(Y,Z)} 
			\cdot \\
			&\qquad \qquad \qquad \qquad \cdot 
			G_{\tau'}(x,y,z) \, \bigl ( (-i \partial_{\tilde{\eta}})^a (+i \partial_{\tilde{y}})^{\alpha} f \bigr )(\tilde{Y}) \, \bigl ( (-i \partial_{\tilde{\zeta}})^b (+i \partial_{\tilde{z}})^{\beta} g \bigr )(\tilde{Z}) 
		\end{align*}
		By assumption $\partial_x^{\alpha} \partial_{\xi}^a f \in \Hoerrd{m_1 - \abs{a} \rho}$ as well as $\partial_x^{\beta} \partial_{\xi}^b g \in \Hoerrd{m_2 - \abs{b} \rho}$ and we see that it suffices to consider the case $a = b = \alpha = \beta = 0$. In this particular case, we estimate all seminorms: let $n , \nu \in \N_0^d$. Then we have to bound 
		\begin{align*}
			\partial_x^n \partial_{\xi}^{\nu} I_{\tau \tau'}(x,\xi) &= \sum_{\substack{a + b + c = n \\ \alpha + \beta = \nu}} \frac{1}{(2\pi)^{4d}} \int \dd Y \int \dd \tilde{Y} \int \dd Z \int \dd \tilde{Z} \, \bigl ( \partial_{x}^a \partial_{\xi}^{\alpha} e^{i \sigma(X - \tilde{Y} , Y)} \bigr ) \, \bigl ( \partial_{x}^b \partial_{\xi}^{\beta} e^{i \sigma(X - \tilde{Z} , Z)} \bigr ) \, e^{i \tau \frac{\eps}{2} \sigma(Y,Z)} 
			\cdot \\
			&\qquad \qquad \qquad \qquad \cdot 
			\partial_x^c G_{\tau'}(x,y,z) \, f(\tilde{Y}) \, g(\tilde{Z}) 
			\\
			&= \sum_{\substack{a + b + c = n \\ \alpha + \beta = \nu}} \frac{1}{(2\pi)^{4d}} \int \dd Y \int \dd \tilde{Y} \int \dd Z \int \dd \tilde{Z} \, e^{i \sigma(X - \tilde{Y} , Y)} \, e^{i \sigma(X - \tilde{Z} , Z)} \, e^{i \tau \frac{\eps}{2} \sigma(Y,Z)} 
			\cdot \\
			&\qquad \qquad \qquad \qquad \cdot 
			\partial_x^c G_{\tau'}(x,y,z) \, \partial_{\tilde{y}}^a \partial_{\tilde{\eta}}^{\alpha} f(\tilde{Y}) \, \partial_{\tilde{z}}^b \partial_{\tilde{\zeta}}^{\beta} g (\tilde{Z}) 
			\\
			&= \sum_{\substack{a + b + c = n \\ \alpha + \beta = \nu}} \int \dd y \int \dd \eta \int \dd z \int \dd \zeta \, e^{i \eta \cdot y} \, e^{i \zeta \cdot z} \, \partial_x^c G_{\tau'}(x,y,z) 
			\cdot \\
			&\qquad \qquad \qquad \qquad \cdot 
			\partial_x^a \partial_{\xi}^{\alpha} f \bigl ( x - \tfrac{\tau \eps}{2} z , \xi - \eta \bigr ) \, \partial_x^b \partial_{\xi}^{\beta} g \bigl ( x + \tfrac{\tau \eps}{2} y , \xi - \zeta \bigr ) 
			. 
		\end{align*}
		from above by an integrable function. To do that, we insert powers of $\sexpval{y}^{-2}$, $\sexpval{z}^{-2}$, $\sexpval{\eta}^{-2}$ and $\sexpval{\zeta}^{-2}$ via the usual trick, \eg $\sexpval{y}^{-2} (1 - \Delta_{\eta}) e^{i \eta \cdot y} = e^{i \eta \cdot y}$. To simplify notation, we set $L_y := 1 - \Delta_y$; $L_z$, $L_{\eta}$ and $L_{\zeta}$ are defined analogously. Then, we have for any $N_1 , N_2 , K_1 , K_2 \in \N_0$ 
		\begin{align}
			\partial_x^n \partial_{\xi}^{\nu} I_{\tau \tau'}(x,\xi) &= \sum_{\substack{a + b + c = n \\ \alpha + \beta = \nu}} \frac{1}{(2\pi)^{2d}} \int \dd y \int \dd \eta \int \dd z \int \dd \zeta \, \bigl ( \sexpval{y}^{-2N_1} L_{\eta}^{N_1} e^{i \eta \cdot y} \bigr ) \, \bigl ( \sexpval{z}^{-2N_2} L_{\zeta}^{N_2} e^{i \zeta \cdot z} \bigr ) 
			\cdot \notag \\
			&\qquad \qquad \qquad \qquad \cdot 
			\partial_x^c G_{\tau'}(x,y,z) \, \partial_x^a \partial_{\xi}^{\alpha} f \bigl ( x - \tfrac{\tau \eps}{2} z , \xi - \eta \bigr ) \, \partial_x^b \partial_{\xi}^{\beta} g \bigl ( x + \tfrac{\tau \eps}{2} y , \xi - \zeta \bigr ) 
			\notag \\
			&= \sum_{\substack{a + b + c = n \\ \alpha + \beta = \nu \\ \sabs{\alpha'} \leq 2 N_1 , \, \sabs{\beta'} \leq 2 N_2}} \negmedspace \negmedspace \negmedspace \negmedspace C_{\alpha' \beta'} \int \dd y \int \dd \eta \int \dd z \int \dd \zeta \, \bigl ( \sexpval{\eta}^{-2K_1} L_{y}^{N_1} e^{i \eta \cdot y} \bigr ) \, \bigl ( \sexpval{\zeta}^{-2K_2} L_{z}^{N_2} e^{i \zeta \cdot z} \bigr ) 
			\cdot \notag \\
			&\qquad \qquad \qquad \qquad \cdot 
			\sexpval{y}^{-2N_1} \, \sexpval{z}^{-2N_2} \, \partial_x^c G_{\tau'}(x,y,z) 
			\cdot \notag \\
			&\qquad \qquad \qquad \qquad \cdot 
			\partial_x^a \partial_{\xi}^{\alpha + \alpha'} f \bigl ( x - \tfrac{\tau \eps}{2} z , \xi - \eta \bigr ) \, \partial_x^b \partial_{\xi}^{\beta + \beta'} g \bigl ( x + \tfrac{\tau \eps}{2} y , \xi - \zeta \bigr ) 
			\notag \\
			&= \sum_{\substack{a + b + c = n \\ \alpha + \beta = \nu \\ \sabs{\alpha'} \leq 2 N_1 , \, \sabs{\beta'} \leq 2 N_2 \\ \sabs{a'} + \sabs{b'} + \sabs{c'} \leq 2 K_1 \\ \sabs{a''} + \sabs{b''} + \sabs{c''} \leq 2 K_2 }} \negmedspace \negmedspace \negmedspace \negmedspace \negmedspace \negmedspace C^{a b c \alpha \beta}_{\alpha' \beta' a' b' c' a'' b'' c''} (\eps \tau)^{\sabs{a''} + \sabs{b'}} \int \dd y \int \dd \eta \int \dd z \int \dd \zeta \, e^{i \eta \cdot y} \, e^{i \zeta \cdot z} \, \sexpval{y}^{-2N_1} \, \sexpval{z}^{-2N_2} 
			\cdot \notag \\
			&\qquad \qquad \qquad \qquad \cdot 
			\sexpval{\eta}^{-2K_1} \, \sexpval{\zeta}^{-2K_2} \, \varphi_{N_1 a'}(y) \, \varphi_{N_2 b''}(z) \, \partial_x^c \partial_y^{x'} \partial_z^{c''} G_{\tau'}(x,y,z) 
			\cdot \notag \\
			&\qquad \qquad \qquad \qquad \cdot 
			\partial_x^{a + a''} \partial_{\xi}^{\alpha + \alpha'} f \bigl ( x - \tfrac{\tau \eps}{2} z , \xi - \eta \bigr ) \, \partial_x^{b + b'} \partial_{\xi}^{\beta + \beta'} g \bigl ( x + \tfrac{\tau \eps}{2} y , \xi - \zeta \bigr ) 
			\label{appendix:existence:remainder:eqn:typical_int}
		\end{align}
		Here, the \emph{bounded} functions $\varphi_{N a}$ are defined by $\partial_x^a \sexpval{x}^{-2N} =: \sexpval{x}^{-2N} \, \varphi_{N a}(x)$ for all $N \in \N_0$, $a \in \N_0^d$, and the constants appearing in the sum are defined implicitly. We now estimate the absolute value of each of the terms in the integral in order to find $N_1 , N_2 , K_1 , K_2 \in \N_0$ large enough so that the right-hand side of the above consists of a finite sum of integrable functions. Using the assumptions on $G_{\tau'}$ and the standard estimate $\sexpval{\xi - \eta}^m \leq \sexpval{\xi}^m \sexpval{\eta}^{\sabs{m}}$, we can bound the integrand of the right-hand side of~\eqref{appendix:existence:remainder:eqn:typical_int} in absolute value by 
		\begin{align*}
			&\sum_{\substack{a + b + c = n \\ \alpha + \beta = \nu \\ \sabs{\alpha'} \leq 2 N_1 , \, \sabs{\beta'} \leq 2 N_2 \\ \sabs{a'} + \sabs{b'} + \sabs{c'} \leq 2 K_1 \\ \sabs{a''} + \sabs{b''} + \sabs{c''} \leq 2 K_2 }} \tilde{C}^{a b c \alpha \beta}_{\alpha' \beta' a' b' c' a'' b'' c''} \, \sexpval{y}^{-2N_1 + \sabs{c} + \sabs{c'} + \sabs{c''}} \sexpval{z}^{-2N_2 + \sabs{c} + \sabs{c'} + \sabs{c''}} \sexpval{\eta}^{-2K_1} \sexpval{\zeta}^{-2K_2} 
			\cdot \\
			&\qquad \qquad \qquad \qquad \cdot 
			\sexpval{\xi - \eta}^{m_1 - (\sabs{\alpha} + \sabs{\alpha'}) \rho} \, \sexpval{\xi - \zeta}^{m_2 - (\sabs{\beta} + \sabs{\beta'}) \rho}
			\\
			&\qquad \qquad \leq 
			C \expval{\xi}^{m_1 + m_2 - \abs{\nu} \rho} \, \sexpval{y}^{-2 N_1 + \abs{n} + 2 K_1 + 2 K_2} \, \sexpval{z}^{-2 N_2 + \abs{n} + 2 K_1 + 2 K_2} \, \sum_{\alpha + \beta = \nu} \sexpval{\eta}^{-2 K_1 + \sabs{m_1 - \sabs{\alpha} \rho}} \, \sexpval{\zeta}^{-2 K_2 + \sabs{m_2 - \sabs{\beta} \rho}} 
			\\
			&\qquad \qquad \leq 
			\tilde{C} \sexpval{\xi}^{m_1 + m_2 - \sabs{\nu} \rho} \, \sexpval{y}^{-2 N_1 + \sabs{n} + 2 K_1 + 2 K_2} \, \sexpval{z}^{-2 N_2 + \sabs{n} + 2 K_1 + 2 K_2} \, \sexpval{\eta}^{-2 K_1 + \sabs{m_1} + \sabs{\nu} \rho} \, \sexpval{\zeta}^{-2 K_2 + \sabs{m_2} + \sabs{\nu} \rho} 
			. 
		\end{align*}
		Choosing $K_1$ and $K_2$ such that $-2 K_j + \sabs{m_j} + \sabs{\nu} \rho < - d$, $j = 1 , 2$, ensures integrability in $\eta$ and $\zeta$. Now that $K_1$ and $K_2$ are fixed, we choose $N_1$ and $N_2$ such that $- 2 N_j + \abs{n} + 2 K_1 + 2 K_2 < - d$ and the right-hand side of the above is an integrable function in $y$, $\eta$, $z$ and $\zeta$ which dominates the absolute value of~\eqref{appendix:existence:remainder:eqn:typical_int}. Thus, we have shown 
		\begin{align*}
			\babs{\partial_x^n \partial_{\xi}^{\nu} I_{\tau \tau'}(x,\xi)} \leq C_{n \nu} \sexpval{\xi}^{m_1 + m_2 - \sabs{\nu} \rho} 
		\end{align*}
		for all $n , \nu \in \N_0^d$ and hence $I_{\tau \tau'}$ exists in $\Hoerrd{m_1 + m_2}$ if the exponents of $y$, $\eta$, $z$ and $\zeta$ in equation~\eqref{appendix:existenceOscInt:eqn:remainder} all vanish, $a = \alpha = b = \beta = 0$. Similarly, for general $a , \alpha , b , \beta \in \N_0^d$, we conclude $I_{\tau \tau'} \in \Hoerrd{m_1 + m_2 - (\sabs{a} + \sabs{b}) \rho}$. Since the above bounds are uniform in $\tau$ and $\tau'$, the continuity of $(\tau , \tau') \mapsto I_{\tau \tau'}$ in the Fréchet topology of $\Hoerrd{m_1 + m_2 - (\sabs{a} + \sabs{b}) \rho}$ follows from dominated convergence. 
	\end{proof}
	%
	

	\section{Details of calculations in example} 
	\label{appendix:details_example}
	
	For convenience of the reader, we present the derivation of $h_{\mathrm{eff} \, 31}$ and $h_{\mathrm{eff} \, 33}$ in more detail. The calculation simplifies tremendously when one discards blockoffdiagonal terms as soon as possible. To that effect, we use that $\alpha_j$ and $\beta$ \emph{anti}commute, $\beta^2 = \id_{\C^4}$ and $(\xi \cdot \alpha) \, (\xi \cdot \alpha) = \xi^2 \, \id_{\C^4}$. Furthermore, we introduce the spin operators 
	\begin{align*}
		\rho_j = \left (
		\begin{matrix}
			\sigma_j & 0 \\
			0 & \sigma_j \\
		\end{matrix}
		\right )
		, 
		&& 
		j = 1 , 2 , 3 
		. 
	\end{align*}
	Then, we plug in the definition of $h_{\mathrm{eff} \, 31}$ and keep only terms that are purely blockdiagonal, 
	\begin{align*}
		h_{\mathrm{eff} \, 31} 
		&= - i \partial_{x_l} V \, \frac{1}{\sqrt{2E(E+m)}} \, \piref \, \left [ 
			- \frac{m \xi_l}{2 \sqrt{2} E^{\nicefrac{5}{2}} (E+m)^{\nicefrac{1}{2}}} \, \id_{\C^4} \, (E+m) \, \id_{\C^4}
			 + \right . \\
			&\qquad \qquad \qquad \left . 
			+ \frac{\xi_l (2E+m)}{2 \sqrt{2} E^{\nicefrac{5}{2}} (E+m)^{\nicefrac{3}{2}}} (\xi \cdot \alpha) \beta \, (\xi \cdot \alpha) \beta 
			- \frac{1}{\sqrt{2 E (E+m)}} \alpha_l \beta \, (\xi \cdot \alpha) \beta 
		\right ] \, \piref 
		. 
	\end{align*}
	Writing out $E = \sqrt{m^2 + \xi^2}$ and using $(\nabla_x V \cdot \alpha) \, (\xi \cdot \alpha) = (\nabla_x V \cdot \xi) \, \id_{\C^4} + i (\nabla_x V \wedge \xi) \cdot \rho$, we get 
	\begin{align*}
		h_{\mathrm{eff} \, 31} 
		&= - i \, \partial_{x_l} V \, \piref \, \left [ 
			- \frac{m \xi_l}{4 E^3} \, \id_{\C^4} 
			- \frac{\xi_l (2E+m)}{4 E^3 (E+m)^2} \xi^2 \, \id_{\C^4} 
			+ \frac{1}{2 E(E+m)} \alpha_l (\xi \cdot \alpha) 
		\right ] \, \piref \\
		%
		%
		&= \frac{i \, (\nabla_x V \cdot \xi)}{4 E^3 (E+m)^2} \; 
			\bigl ( 
				2 m E^2 + 2 E (m^2 + \xi^2) - 2 E^2 (E+m)
			\bigr ) \, \piref 
			- \frac{i^2}{2 E(E+m)} \piref \, (\nabla_x V \wedge \xi) \cdot \rho \, \piref 
		\\
		&= + \frac{1}{2 E(E+m)} (\nabla_x V \wedge \xi) \cdot \sigma 
		. 
	\end{align*}
	Similarly, we can compute $h_{\mathrm{eff} \, 33}$, 
	\begin{align*}
		h_{\mathrm{eff} \, 33} &= \piref \, \bigl ( 
			(u_0 \magBc H_0)_{(3)} - (h_0 \magBc u_0)_{(3)}
		\bigr ) \, {u_0}^{\ast} \, \piref 
		= \tfrac{i}{2} B_{lj} \, \piref \, \bigl ( 
			\partial_{\xi_l} u_0 \, \partial_{\xi_j} H_0 - \partial_{\xi_l} h_0 \, \partial_{\xi_j} u_0 
		\bigr ) \, {u_0}^{\ast} \, \piref
		\\
		&= \frac{i}{2} B_{lj} \, \piref \, \left [ 
			\left ( 
				- \frac{m \xi_l}{2 \sqrt{2} E^{\nicefrac{5}{2}} (E+m)^{\nicefrac{1}{2}}} \, \id_{\C^4} 
				+ \frac{\xi_l (2E+m)}{2 \sqrt{2} E^{\nicefrac{5}{2}} (E+m)^{\nicefrac{3}{2}}} (\xi \cdot \alpha) \beta 
				- \frac{1}{\sqrt{2 E (E+m)}} \alpha_l \beta 
			\right ) \, \alpha_j 
		+ \right . \\
		& \left . \quad
			- \frac{\xi_l}{E} \beta \, \left ( 
				- \frac{m \xi_j}{2 \sqrt{2} E^{\nicefrac{5}{2}} (E+m)^{\nicefrac{1}{2}}} \, \id_{\C^4} 
				+ \frac{\xi_j (2E+m)}{2 \sqrt{2} E^{\nicefrac{5}{2}} (E+m)^{\nicefrac{3}{2}}} (\xi \cdot \alpha) \beta 
				- \frac{1}{\sqrt{2 E (E+m)}} \alpha_j \beta 
			\right ) 
		\right ] \, {u_0}^{\ast} \, \piref 
		. 
	\end{align*}
	Once we use $B_{lj} \, \xi_l \xi_j = 0$ and plug ${u_0}^{\ast}$ back in, we get the claim, 
	\begin{align*}
		h_{\mathrm{eff} \, 33} 
		&= \frac{i}{2} B_{lj} \, \piref \, \left [ 
				- \frac{m \xi_l}{2 \sqrt{2} E^{\nicefrac{5}{2}} (E+m)^{\nicefrac{1}{2}}} \, \alpha_j 
				+ \frac{\xi_l (2E+m)}{2 \sqrt{2} E^{\nicefrac{5}{2}} (E+m)^{\nicefrac{3}{2}}} (\xi \cdot \alpha) \beta \, \alpha_j 
			+ \right . \\
			&\qquad \qquad \qquad \qquad \qquad \qquad \left . 
				- \frac{1}{\sqrt{2 E (E+m)}} \alpha_l \beta \, \alpha_j 
				+ \frac{\xi_l}{\sqrt{2} E^{\nicefrac{3}{2}} (E+m)^{\nicefrac{1}{2}}} \beta \alpha_j \beta 
		\right ] \, {u_0}^{\ast} \, \piref 
		\\
		&= \frac{i \, B_{lj}}{8 E^3 (E+m)} \, \piref \Bigl ( 
			- \xi_l (2E+m) \, \alpha_j \, (\xi \cdot \alpha) 
			- \xi_l (2E+m) (\xi \cdot \alpha) \alpha_j 
			+ 2 E^2 (E+m) \alpha_l \alpha_j 
		\Bigr ) \, \beta \, \piref 
		\\
		&= - \frac{i \, 2 B_{lj} \, \xi_l \, \xi_j \, (2E+m)}{8 E^3 (E+m)} \, 
		\piref 
		- \frac{1}{2 E} \piref \, B \cdot \rho \, \piref
		= - \frac{1}{2 E} B \cdot \sigma 
		. 
	\end{align*}
	%
	
\end{appendix}

\bibliographystyle{alpha}
\bibliography{quantization}

\end{document}